\newcommand{\Ncal}{\ensuremath{{\mathcal{N}}}}
\newcommand{\Ccal}{\ensuremath{{ \mathcal{C} }}}
\newcommand{\tr}[1]{\textrm{#1}}
\newcommand{\mr}[1]{\mathrm{#1}}
\newcommand{\jexp}[1]{e^{\jmath {#1} }}
\newcommand{\nnsum}[2]{\underset{#1}{\overset{#2}{\sum}}}
\newcommand{\nsum}[2]{{\sum_{#1}^{#2}}}
\newcommand{\nsigma}[2]{\sigma_{\mathrm{ #1}}^{{ #2} }}
\newcommand{\T}[1]{T_{\mr{ #1}}}
\newcommand{\limit}[2]{\underset{{#1} \to {#2}} \lim}
\newcommand{\temp}[1]{\mathrm{I}_{\mathrm{#1}}}
\newcommand{\wdn}[1]{w_{\mr{d},#1}}
\newcommand{\bwu}[1]{\mathbf{w}_{\mr{u},#1}}
\newcommand{\wun}[2]{{w}_{\mr{u},#1}^{(#2)}}
\newcommand{\minus}{\scalebox{0.5}[1.0]{$-$}}
\newcommand{\hhat}[1]{\hat{{h}}_{#1}}
\newcommand{\Hhat}[1]{\hat{\mathbf{H}}_{#1}}
\newcommand{\bN}[1]{\mathbf{N}_{#1}}
\newcommand{\bA}[1]{\mathbf{A}_{#1}}
\newcommand{\bU}[1]{\mathbf{U}_{#1}}
\newcommand{\bV}[1]{\mathbf{V}_{#1}}
\newcommand{\bG}[1]{\mathbf{G}_{#1}}
\newcommand{\bH}[1]{\mathbf{H}_{#1}}
\newcommand{\bh}[1]{\mathbf{h}_{#1}}
\newcommand{\bx}[1]{\mathbf{x}_{#1}}
\newcommand{\bw}[1]{\mathbf{w}_{#1}}
\newcommand{\h}[2]{h_{#1}^{(#2)}}
\newcommand{\bhhat}[1]{\hat{\mathbf{h}}_{#1}}
\newcommand{\pu}[1]{p_{\mr{u},#1}}
\newcommand{\bP}[1]{\mathbf{P}_{#1}}
\newcommand{\byu}[1]{\mathbf{y}_{\mr{u},#1}}
\newcommand{\yu}[2]{y_{\mr{u},#1}^{(#2)}}
\newcommand{\yd}[2]{y_{\mr{d},#1}^{#2}}
\newcommand{\bg}[1]{\mathbf{g}_{#1}}
\newcommand{\bcsym}[1]{\mathbf{c}_{#1}}
\newcommand{\csym}[1]{c_{#1}}
\newcommand{\phit}[2]{\phi_{#1}^{(#2)}}
\newcommand{\varphit}[2]{\varphi_{#1}^{(#2)}}
\newcommand{\bTheta}[1]{\mathbf{\Theta}_{#1}}
\newcommand{\thetat}[2]{\theta_{#1}^{(#2)}}
\newcommand{\squarel}{\left[}
\newcommand{\squarer}{\right]}
\newcommand{\curll}{\left\{}
\newcommand{\curlr}{\right\}}
\newcommand{\absl}{\left|}
\newcommand{\absr}{\right|}
\newcommand{\parl}{\left(}
\newcommand{\parr}{\right)}
\newtheorem{prop}{Proposition}
\newtheorem{thm}{Theorem}
\newtheorem{lem}{Lemma}
\newtheorem{rem}{Remark}
\newtheorem{corr}{Corollary}
\newtheorem{definition}{Definition}
\begin{document}

\title{Linear Massive MIMO Precoders in the Presence of Phase Noise -- A Large-Scale Analysis}

\author{R.~Krishnan,~M.~R.~Khanzadi,~N.~Krishnan,~\IEEEmembership{Member,~IEEE,}~Y.~Wu,~\IEEEmembership{Member,~IEEE,}~A.~Graell~i~Amat,~\IEEEmembership{Senior~Member,~IEEE,}~T.~Eriksson,~and~R.~Schober,~\IEEEmembership{Fellow,~IEEE}

\thanks{Rajet Krishnan, M. R. Khanzadi, Alexandre Graell i Amat, and Thomas Eriksson are with the Department
of Signals and Systems, Chalmers University of Technology, Gothenburg, Sweden (e-mail: \{rajet, khanzadi, alexandre.graell, thomase\}@chalmers.se).  Y. Wu and R. Schober are with the Institute for Digital Communications Friedrich-Alexander University Erlangen-Nurnberg (FAU) (e-mail: \{yongpeng.wu, schober\}@lnt.de).  N. Krishnan is with Qualcomm, San Diego, California, USA (e-mail: nakrishn@qti.qualcomm.com).}%
\thanks{Research supported by the Swedish Research Council under grant \#2011-5961.}
}

%


\markboth{IEEE Transactions on Vehicular Technology}%
{Submitted paper}

\maketitle

\begin{abstract}
We study the impact of phase noise on the downlink performance of a multi-user multiple-input multiple-output system, where the base station (BS) employs a large number of transmit antennas $M$. We consider a setup where the BS employs $M_{\mr{osc}}$  free-running oscillators, and $M/M_{\mr{osc}}$  antennas are connected to each oscillator. For this configuration, we analyze the impact of phase noise on the  performance of the zero-forcing (ZF), regularized ZF, and matched filter (MF) precoders when $M$ and the number of users $K$ are asymptotically large, while the ratio $M/K=\beta$ is fixed. We analytically show that the impact of phase noise on the signal-to-interference-plus-noise ratio (SINR) can be quantified as an effective reduction in the quality of the channel state information available at the BS when compared to a system without phase noise.  As a consequence, we observe that as  $M_{\mr{osc}}$ increases, the SINR performance of all considered precoders degrades. On the other hand, the variance of the random phase variations caused by the BS oscillators reduces with increasing  $M_{\mr{osc}}$. Through Monte-Carlo simulations, we verify our analytical results, and compare the performance of the precoders for different phase noise and channel noise variances. For all considered precoders, we show that when $\beta$ is small, the performance of the setup where all BS antennas are connected to a single oscillator is superior to that of the setup where each BS antenna has its own oscillator. However, the opposite is true when $\beta$ is large and the signal-to-noise ratio at the users is low.

\emph{Index Terms} --  Massive MIMO, linear precoding, phase noise, broadcast channel, random matrix theory, multi-user MIMO.

\end{abstract}
\section{Introduction}
\label{sec:Intro}

\PARstart{M}{assive} multiple-input multiple-output (MIMO) is a promising technology for future wireless networks \cite{Rusek13,Marzetta10}. This technology deploys antenna arrays containing hundreds of antennas, which can be exploited to significantly enhance the network throughput and energy efficiency performances \cite{Larsson13}.

In particular, employing massive antenna arrays at the base station (BS) is expected to provide significant array gains and improved spatial precoding resolution for downlink transmission in multi-user (MU) MIMO systems \cite{Caire10}. This in turn is expected to increase the throughput per user equipment (UE), and enable the support of a large number of UEs at the same time. 

It is known that as the number of BS antennas $M$ becomes asymptotically large, the channels between the BS and the different UEs become approximately orthogonal \cite{Rusek13}, \cite{Edfors14}. This indicates that significant spatial diversity can be achieved due to the spatial separation between the antennas of the different UEs. Thus, the MU-massive-MIMO downlink channel, which is a MIMO broadcast channel (BC), is inherently robust to channel correlation effects \cite{Rusek13}.

In general, MIMO systems suffer from MU interference during downlink transmission, which is mitigated by means of channel-aware precoding methods implemented at the BS \cite{Caire03}. Nonlinear precoding methods such as dirty-paper coding are capacity achieving for the MIMO BC \cite{Caire03}. However, these precoders are highly complex, thereby motivating the need for computationally simpler methods such as linear precoders \cite{Goldsmith06}. For MIMO systems, in the asymptotic regime, where $M$ and the number of UEs, $K$, are asymptotically large, linear precoders have been shown to achieve close-to-optimal performance \cite{Hochwald02, Peel05, Wiesel06, Hanly12}.

One of the early works analyzing the downlink performance of MIMO in the asymptotic regime is \cite{Hochwald02}, where it is shown that for $M,K \rightarrow \infty$, a linear growth in the sum rate with $M$ and $K$ can be achieved with the zero-forcing (ZF) precoder. However, in \cite{Peel05}, it is shown that the linear sum rate growth cannot be achieved for the ZF precoder when $M/K = 1$. This issue is overcome by regularized ZF (RZF) precoding, whose regularization parameter $\alpha$ can be chosen such that it optimizes the signal-to-interference-plus-noise ratio (SINR) or the sum mean square error (sum MSE). In \cite{Wiesel06, Hanly12}, it is shown that, in the asymptotic limit, the RZF precoder maximizes the SINR in the case of independent identically distributed (i.i.d.) Gaussian channels. In \cite{Hoydis13}, it is shown that for $M,K \rightarrow \infty$, the matched filter (MF) precoder requires significantly larger numbers of antennas than the RZF precoder to achieve the same performance. However, \cite{Marzetta13_1} reveals that the MF precoder outperforms the ZF precoder in terms of energy efficiency, while in terms of spectral efficiency, the ZF precoder performs better. In some  recent works \cite{Debbah12, Emil13_1}, the SINRs achieved by linear precoders in the asymptotic regime are derived for the case where the channels between the BS and UEs are correlated. In addition, imperfect channel state information (CSI) at the BS is assumed, and the precoders are optimized such that the SINR achieved at the UEs is maximized.

In most prior works on MU-MIMO downlink transmission \cite{Goldsmith06,Hochwald02, Peel05, Wiesel06, Hanly12, Hoydis13,Marzetta13_1, Debbah12, Emil13_1}, it is assumed that the hardware components of the MIMO transceiver are ideal. However, it is now well understood that the performance of these systems can be severely limited by impairments arising from nonideal transceiver hardware components \cite{Emil13}. Furthermore, implementing linear precoding methods at the BS mandates the availability of reliable CSI. 
This is challenging since the coherence time of the channels between the BS and its associated UEs is finite, and thus the BS is required to update its CSI regularly. Also, hardware impairments affect the CSI quality drastically, and the phase noise caused by noisy local oscillators used in the transceivers is a major contributor to this problem  \cite{Emil13, Rajet14, Larsson12}. In general, phase noise manifests itself as a random, time-varying phase difference between the oscillators connected to the antennas at the BS and the UEs \cite{Colavolpe05,Rajet13}. Phase noise causes random rotations of the transmitted data symbols, thereby causing performance degradation. As illustrated in \cite{Rajet14,Larsson12}, phase noise also causes partial coherency loss, i.e., the true channel during the data transmission period can become significantly different from the CSI acquired during the training period. This is referred to as the \textit{channel-aging phenomenon} \cite{Heath13}. 

It is therefore expected that phase noise  at the BS and the UEs will present a serious challenge towards realizing the unprecedented advantages promised by massive MIMO \cite{Rusek13}. The effect of phase noise on the uplink performance of a massive MIMO system has been analyzed in \cite{Rajet14,Larsson12,Emil13}. One of the early works on the downlink performance of MIMO systems in the presence of phase noise is \cite{Beamforming10}, where the error-vector magnitude degradation is analyzed. However, the number of studies on the impact of phase noise on the downlink performance of a massive MIMO system are limited. Prior work on the downlink performance of massive MIMO systems  confirms that their performance can be reliably predicted using large-scale analysis, where $M,K\rightarrow \infty$  \cite{Debbah12,Emil13_1}. This is because, even though the SINR depends on the instantaneous values of the channel and other random effects, these effects become deterministic in the asymptotic regime \cite{Debbah12,Emil13_1}. Interestingly, large-scale analysis of MIMO systems is accurate even for practical values of $M$ and $K$. Hence, we expect that a similar analysis using tools from random matrix theory (RMT) \cite{Debbah12,Tao,Speicher} will provide new and important insights on the impact of phase noise on the performance of  massive MIMO downlink transmissions, and this constitutes one of the main motivations for  this work.

In this paper, we analyze the massive MIMO downlink performance of linear precoding schemes, including the ZF, RZF, and MF precoders in the presence of oscillator phase noise. We consider a single-cell massive MIMO system comprising one BS serving multiple single-antenna UEs. We analyze a general setup, shown in Fig. \ref{fig:GOSetup}, where the BS employs $M_{\mr{osc}}$  free-running oscillators, and $M/M_{\mr{osc}} \in \mathbb{Z}^+$ BS antennas are connected to each oscillator. We refer to this as the general oscillator (GO) setup. Two interesting special cases arise from this general setup. In the first case, all BS antennas are connected to a single oscillator (referred to as the common oscillator (CO) setup). In the second case, each BS antenna has its own oscillator (referred to as the distributed oscillator (DO) setup). For the considered setups, we obtain the following results:
\begin{itemize}[leftmargin=*]
\item For the GO setup, we derive the effective SINR \cite{Marzetta13_1, Michailis14} at a given UE  for the RZF precoder as  $M,K\rightarrow \infty$, while the ratio $M/K=\beta$ is fixed. Then, we derive the optimal regularization parameter $\alpha$, which maximizes the effective SINR for the RZF precoder. Furthermore,  we derive the effective SINRs of the ZF and MF precoders for the GO setup, by treating them as special cases of the RZF precoder.
\item We show that the impact of phase noise on the SINR of the precoders can be quantified as an effective reduction of the quality of the CSI available at the BS, when compared to the system without phase noise. As a consequence,  we observe that as  $M_{\mr{osc}}$ increases, the SINR performance of all considered precoders degrades. Specifically, the desired signal power decreases as $M_{\mr{osc}}$ increases  for all considered precoders. Furthermore, the interference power increases with increasing $M_{\mr{osc}}$ for the RZF and the ZF precoders. For the MF precoder, the interference power is almost independent from the effect of oscillator phase noise. However, the variance of the random phase variations caused by the BS oscillators reduces as $M_{\mr{osc}}$ increases.
\item We compare the performance of the precoders for different phase noise and additive white Gaussian noise (AWGN) variances by using Monte-Carlo (MC) simulations. We show that the SINRs derived for the precoders are accurate for relevant and  practical values of $M$ and $K$. Furthermore, we illustrate how the  relative performance of the precoders depends on $M_{\mr{osc}}$, $M$, $\beta$, the phase noise and AWGN variances, and the CSI quality at the BS.
\item Finally, we compare the achievable rates of the CO and the DO setups via simulations. A general observation for all considered precoders is that the CO setup performs better than the DO setup when $\beta$ is small.  However, the opposite is true when $\beta$ is large and the signal-to-noise ratio (SNR) at the UE is low.
\end{itemize}

The remainder of the paper is organized as follows. In Section \ref{sec:SysMod}, we introduce the massive MIMO system model with phase noise and AWGN, and review the time-division duplexed (TDD) transmission mode and the considered linear precoders. We present a large-scale analysis of the received signal, the effective SINR, and other analytical results in Sections \ref{sec:RxdAna} and \ref{sec:MainResults}. In Section \ref{sec:Sim}, we discuss our analytical and simulation results. We summarize our key findings in Section \ref{sec:Conc}. Some useful results from the literature are presented in Appendix A, and the proofs for the main analytical results are provided in Appendices B and C.

\emph{Notation}: Vectors and matrices are represented by boldface lower-case and bold-face upper-case letters, respectively. The complex Gaussian distribution and the real Gaussian distribution with mean $\mu$ and variance $\sigma^2$ are denoted as $\mathcal{CN}(\mu,\sigma^2)$, and $\mathcal{N}(\mu,\sigma^2)$, respectively. The Hermitian, conjugation, expectation with respect to $\phi$, and trace operators are denoted as $ \curll \cdot \curlr^{\mr{H}}, \curll \cdot \curlr^{*}, \mathbb{E}_{\phi} \curll \cdot \curlr, ~\tr{and}~ \mr{tr}\curll \cdot \curlr$,  respectively. $\Re\{\cdot\}$, $\Im\{\cdot\}$, $| \cdot |$ and $\angle \cdot$ are the real part, imaginary part, magnitude, and the angle of a complex number, respectively. The $\mr{L}2$ norm of a vector is denoted by $\lVert {\cdot} \rVert$. An $n$-dimensional complex vector is denoted by $\mathbb{C}^{n\times 1}$, while $\mathbb{C}^{n\times m}$ denotes the generalization to an $(n \times m)$-dimensional complex matrix. The integer space is denoted by $\mathbb{Z}$. The $\mr{diag}\curll \ldots \curlr$ operator generates a diagonal matrix from a given vector, while $\mathbf{I}_M$ denotes an $M \times M$ identity matrix.

\section{System Model}
\label{sec:SysMod}

In this section, we  introduce the considered single-cell massive MIMO system with  i.i.d.  flat-fading channels, oscillator phase noise at the BS and the UEs, TDD operation, channel estimation at the BS, and linear precoding.

\subsection{Channel and Phase Noise Models}

We consider a single-cell system consisting of a BS that serves $K$ UEs. The BS is equipped with $M$ antennas and each UE is equipped with a single antenna. The channel between the $m$th antenna, $m \in \{1,\ldots, M\}$,  at the BS and the $k$th UE, $k \in \{1,\ldots, K\}$, is assumed to be frequency-flat Rayleigh block-fading,  and its gain is denoted as $\h{k}{m} \sim \mathcal{CN}(0,\nsigma{\mr{ch}}{2})$, where $\nsigma{\mr{ch}}{2}=1$. Furthermore, $\h{k}{m}$ is the $(k,m)$th entry of $\bH{}\in \mathbb{C}^{K \times M}$, which represents the channels between the BS and the UEs. The coherence time of the block-fading channel is denoted by $\T{c}$. Without loss of generality, it is assumed that the large-scale fading component of the channel is unity.

\begin{figure}[!t]
\begin{center}
\includegraphics[width = 3.0in, keepaspectratio=true]{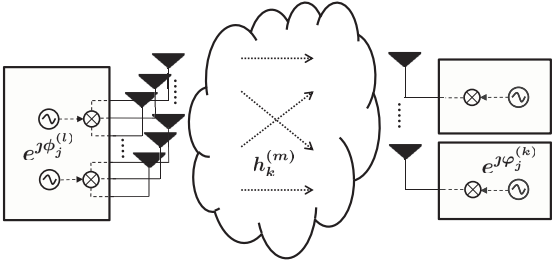}
\caption{The general oscillator (GO) setup, where the BS has $M_{\mr{osc}}$  free-running oscillators, and $M/M_{\mr{osc}} \in \mathbb{Z}^+$ BS antennas are connected to each oscillator.}
\label{fig:GOSetup}
\end{center}
\end{figure}

We consider the GO setup where the BS employs $M_{\mr{osc}}$  free-running oscillators, and $M/M_{\mr{osc}} \in \mathbb{Z}^{+}$ BS antennas are connected to each oscillator. Considering a discrete-time Wiener phase noise model \cite{Colavolpe05, Rajet13}, in the $j$th symbol interval, the phase noise sample at the $k$th UE is denoted by $\varphit{j}{k}$, and that of the $l$th oscillator at the BS is denoted by $\phit{j}{l}$ \cite{Demir00}, where
\begin{IEEEeqnarray}{rCl}
\label{eq:SysMod1}
\varphit{j}{k}  &=& \varphit{j-1}{k} + \Delta_{j}^{\varphi},\quad \Delta_{j}^{\varphi} \sim \mathcal{N}(0,\nsigma{\varphi}{2}),\\ \label{eq:wiener}
\phit{j}{l}  &=& \phit{j-1}{l} + \Delta_{j}^{\phi},\quad \Delta_{j}^{^{\phi}} \sim \mathcal{N}(0,\nsigma{\phi}{2}).
\end{IEEEeqnarray}
Here, $k \in \{1,\ldots, K\}$, $l \in \{1,\ldots, M_{\mr{osc}}\}$, and $\nsigma{\varphi}{2}$ and $\nsigma{\phi}{2}$ denote the phase noise increment variances at the UE and the BS, respectively. Since the channel is constant within $\T{c}$, and given that the $m$th BS antenna is connected to the $l$th oscillator at the BS, $\thetat{j,k}{m} \triangleq \varphit{j}{k} + \phit{j}{l}$ is the phase noise sample that impairs the link between the $k$th UE and the $m$th BS antenna. 


\subsection{TDD and Channel Estimation}

The TDD mode of operation is assumed, where the UEs first transmit orthogonal pilots to the BS in order to facilitate channel estimation. Upon reception of the pilots from the UEs, the BS forms an estimate of the channel between its antennas and the UEs. Exploiting channel reciprocity \cite{Marzetta10}, the channel estimate is then used by the BS to transmit data in the downlink to the UEs.

During channel training, the UEs transmit uplink pilot symbols that are orthogonal in time. Specifically, the UEs transmit their pilot symbols sequentially in time, meaning that when one UE is transmitting, the other $K-1$ UEs are silent. This training scheme permits a simple channel estimation method at the BS, and as we shall see later, facilitates the large-scale analysis of the MIMO downlink performance \cite{Larsson12,Rajet14,Marzetta10,Debbah12,Emil13_1}. Importantly, this training scheme aids in capturing the channel-aging effect due to phase noise on the SINR achieved at the UEs. Our analysis of the impact of phase noise on the system performance for this specific training scheme qualitatively also extends to other training schemes---see \cite{Rajet14}, where a code-orthogonal training scheme is also considered.

The signal received at the BS from the $k$th UE at time instant $j=0$ can be written as
\begin{IEEEeqnarray}{rCl}
\label{eq:SysMod2}
\byu{0} =   \sqrt{\pu{k}} \bTheta{0,k} \bh{k}\csym{0,k} + \bwu{0},
\end{IEEEeqnarray}
where perfect timing and frequency synchronization are assumed~\cite{Colavolpe05,Rajet13}. In \eqref{eq:SysMod2}, $\byu{0} = [\yu{0}{1}, \ldots, \yu{0}{M}]^{\mr{T}}$, where $\yu{0}{m}$ represents the received signal in the uplink at the $m$th BS antenna. ${\pu{k}}$ denotes the uplink transmit power of the $k$th UE,  $\bTheta{0,k} = \mr{diag} \curll \jexp{\thetat{0,k}{1}}  \mathbb{\mathbf{1}}^{\mr{T}}_{1 \times {M}/{M_{\mr{osc}}}}, \ldots  ,\jexp{\thetat{0,k}{M_{\mr{osc}}}}   \mathbb{\mathbf{1}}^{\mr{T}}_{1 \times {M}/{M_{\mr{osc}}}}, \curlr$, where $\mathbb{\mathbf{1}}^{\mr{T}}_{1 \times {M}/{M_{\mr{osc}}}}$ denotes an all-one vector of length ${M}/{M_{\mr{osc}}}$, and $\bh{k} = [\h{k}{1}, \ldots, \h{k}{M}]^{\mr{T}}$. $\csym{0,k}$ denotes the pilot symbol transmitted by the $k$th UE. $\bwu{0} = [\wun{0}{1}, \ldots, \wun{0}{M}]^{\mr{T}}$, where $\wun{j}{m} \sim \Ncal(0,\nsigma{w}{2})$ denotes the zero-mean AWGN random variable (RV) at the $m$th receive antenna.
%

Based on $\byu{0}$, a linear MMSE channel estimate is formed for the $k$th UE at time instant $j=0$. This estimate can modeled as a Gauss-Markov process \cite{Debbah12,Heath11}
\begin{IEEEeqnarray}{rCl}
\label{eq:SysMod3}
\bhhat{0,k} = \sqrt{q_{0,k}}\bTheta{0,k} \bh{k}  + \sqrt{q_{1,k}} \bw{\mr{e},k}.
\end{IEEEeqnarray}
Here, $ \bw{\mr{e},k} \in \mathbb{C}^{M\times 1}$ represents the estimation error, and its entries are complex Gaussian i.i.d. RVs with zero mean and unit variance. $\bhhat{0,k}\in \mathbb{C}^{M\times 1}$ in \eqref{eq:SysMod3} is the $k$th row of $\Hhat{0}\in \mathbb{C}^{K \times M}$, which contains the channel estimates of all UE channels. For the linear MMSE, The entries of $ \bw{\mr{e},k}$ and $\bhhat{0,k}$ are considered to be statistically independent of each other \cite{Heath11}. Without loss of generality, we assume that $q_{0} = {q_{0,k}}$ and ${q_{1}} = {q_{1,k}}$, and set the channel estimate variance as $\nsigma{\hhat{k}}{2} \triangleq {q_{0}}\nsigma{\mr{ch}}{2} + {q_{1}} = 1$. The parameter ${q_{0}} \in  [0, 1]$ reflects the quality of the  channel estimate. When ${q_{0}}  = 1$, a perfect channel estimate is available at the BS, while for ${q_{0}} = 0$, the channel estimate is completely uncorrelated with respect to the original channel.

\subsection{Downlink Transmission and Linear Precoding}

Let the data transmission on the downlink from the BS to the UEs commence in the symbol interval  $j=\tau$, where ($\tau < \T{c}$). Here, $\tau$ denotes the symbol periods that have elapsed after the uplink transmission of the pilot symbols from the $k$th UE. The signal received by the $k$th UE  at time instant $\tau$ can be expressed    as
\begin{IEEEeqnarray}{rCl}
\label{eq:SysMod41}
\yd{\tau}{k} &=&  \bh{\tau,k}^{\mr{T}} \bTheta{\tau,k} \bx{\tau}  +  \wdn{k} \\
\label{eq:SysMod42}
&=&  \bh{\tau,k}^{\mr{T}} \bTheta{\tau,k} \nsum{k_1=1}{K} \sqrt{p_{k_1}} \bg{0,k_1} \csym{\tau,k_1}  +  \wdn{k}.
\end{IEEEeqnarray}
In \eqref{eq:SysMod41}, $ \bx{\tau}\in \mathbb{C}^{M\times 1}$ is the transmit signal,  and in \eqref{eq:SysMod42}, $\bx{\tau}$ is written as a linear combination of the data symbols $\csym{\tau,k_1}, k_1 \in \{1,\ldots,K\}$, transmitted to the $K$ UEs. The data symbols are assumed to be circularly symmetric, but not necessarily Gaussian distributed \cite{Belzer01}. $\bg{0,k}\in \mathbb{C}^{M\times 1}$ is the $k$th column of the downlink precoding matrix $\bG{0}\in \mathbb{C}^{M\times K}$, where  $\bG{0}= [ \bg{0,1}, \ldots,  \bg{0,K}]$. $ \wdn{k} \sim \mathcal{CN}(0,\nsigma{w}{2})$ denotes the AWGN RV at the $k$th UE.

In this work, we consider RZF precoding in \eqref{eq:SysMod42}, i.e.,   $\bG{0}$ can be written as \cite{Debbah12}
\begin{IEEEeqnarray}{rCl} \label{eq:SysMod51}
\bG{0} = \xi \parl \Hhat{0}^{\mr{H}}\Hhat{0} + M \alpha \mathbf{I}_M \parr^{\minus 1} \Hhat{0}^{\mr{H}} {\bP{}}^{\frac{1}{2}},
\end{IEEEeqnarray}
where $\bP{} \triangleq \mr{diag}\curll p_1,\ldots,p_K \curlr$,  $p_{k}, \forall k \in \{1,\ldots,K\}$ denotes the power allocated to the $k$th UE, and the normalization parameter $\xi$ is set such that the precoder satisfies the power constraint $\mr{tr} \parl \bG{0}^{\mr{H}} \bG{0}   \parr  = 1$. Note that the RZF precoder in \eqref{eq:SysMod51} is known to perform better than the other linear precoders, such as the MF and the ZF precoders \cite{Peel05, Hanly12}. Furthermore, the RZF precoder simplifies to the ZF precoder when $\alpha \rightarrow 0$, i.e., $\bG{0} = \xi \Hhat{0}^{\mr{H}} \parl \Hhat{0}\Hhat{0}^{\mr{H}} \parr^{\minus 1}  {\bP{}}^{\frac{1}{2}}$, and to the MF precoder when $\alpha \rightarrow  \infty$, i.e., $\bG{0} = \xi \Hhat{0}^{\mr{H}}   {\bP{}}^{\frac{1}{2}}$  \cite{Debbah12}.

\section{Large-Scale Analysis of the Received Signal and Achievable Rates}
\label{sec:RxdAna}

In this section, we use tools from RMT to analyze the received signal model in  \eqref{eq:SysMod42}. Specifically, we present a simplification of the desired signal term  in  \eqref{eq:SysMod42} for the GO setup when  $M,K\rightarrow \infty$, while $M/K=\beta$. Notably, we will show that in the CO and DO setups, the  multiple-input single-output (MISO) system model in \eqref{eq:SysMod42} can be re-written as an equivalent single-input single-output (SISO) phase noise channel including the effects of phase noise, AWGN, and interference \cite{Khanzadi14}. Furthermore, we define the effective SINR, and discuss the achievable rates  for the GO setup.

\subsection{Received Signal Model}

For the RZF precoder in \eqref{eq:SysMod51}, the received signal at the $k$th UE in \eqref{eq:SysMod42} becomes
\begin{IEEEeqnarray}{rCl}
\label{eq:SysMod61}
\yd{\tau}{k} &=&  \bh{k}^{\mr{T}} \bTheta{\tau,k} \bG{0} \bcsym{\tau}  +  \wdn{k} \nonumber\\
\label{eq:SysMod64}
&=& \underbrace{\sqrt{p_{k}} \bh{k}^{\mr{T}} \bTheta{\tau,k} \xi \parl \Hhat{0}^{\mr{H}}\Hhat{0} + M \alpha \mathbf{I}_M \parr^{\minus 1} \bhhat{0,k}^{*}}_{\triangleq \temp{sig}} \csym{\tau,k}   \nonumber\\ &&
+ \underbrace{\bh{k}^{\mr{T}} \bTheta{\tau,k}  \xi \parl \Hhat{0}^{\mr{H}}\Hhat{0} + M \alpha \mathbf{I}_M \parr^{\minus 1} \Hhat{0,\minus k}^{\mr{H}} \bP{\minus k}^{\frac{1}{2}} }_{\triangleq \boldsymbol{\temp{}}_{\mr{int}}}\bcsym{\tau,\minus k}\nonumber\\
&& +  \wdn{k}.
\end{IEEEeqnarray}
In \eqref{eq:SysMod64}, we have introduced the following definitions: $ \Hhat{0,\minus k} =$ $ [\bhhat{0,1},\ldots,\bhhat{0,k-1},\bhhat{0,k+1},\ldots, \bhhat{0,K}]$, $\bP{\minus k} = $ $\mr{diag}\curll p_1,\ldots,p_{k-1},p_{k+1},\ldots, p_K \curlr$, and $\bcsym{\tau,\minus k} =$ $ [\csym{\tau,1}, \ldots, \csym{\tau,k-1},\csym{\tau,k+1},\ldots,\csym{\tau,K} ]^{\mr{T}}$. Furthermore, $\temp{sig}\in \mathbb{C}$ and $\boldsymbol{\temp{}}_{\mr{int}}^{\mr{T}} \in \mathbb{C}^{M-1 \times 1}$ denote the scaling factors associated with the desired symbol and the interfering symbols at the $k$th UE, respectively. The factor $\temp{sig}$ is simplified in the following proposition.

\begin{prop}
\label{prop:Irzfsig}
Consider an RZF precoded downlink transmission from a BS having $M$ antennas to $K$ single-antenna UEs employing TDD in the presence of oscillator phase noise. Let $\alpha > 0 $, $M/K = \beta, \beta \geq 1$, and $q_0  \in [0,1]$. Assume that $\frac{1}{M}\Hhat{}^{\mr{H}}\Hhat{}$ has uniformly bounded spectral norm for all $M$. Then, the desired signal factor $\temp{sig}$, for $M,K \rightarrow \infty$,  can be simplified  to
\begin{IEEEeqnarray}{rCl} \label{eq:SysMod71}
&&\temp{sig}=  {\sqrt{{p_{k}}q_0   } \T{PN}  \xi   t    }    e^{\jmath (\varphit{\tau}{k} - \varphit{0}{k})  }, \IEEEeqnarraynumspace
\end{IEEEeqnarray}
where
\begin{IEEEeqnarray}{rCl}
\label{eq:SysMod81a}
\T{PN} &\triangleq& \limit{M}{\infty} \frac{1}{M}   \mr{tr} \curll \Delta \boldsymbol{\Phi}_{\tau}\curlr, \\
\label{eq:SysMod81aa}
\xi &=& \limit{M,K}{\infty} \sqrt{ \frac{M{\parl 1 +     m(-\alpha) \parr^2 }}{ m'(-\alpha)\nsum{k=1}{K}   {p_k   } } } \\
\label{eq:SysMod81b}
m(-\alpha) &=& \frac{   \beta- 1 - \alpha\beta + \sqrt{\beta^2\alpha^2 + 2(\beta + 1)\alpha\beta + (1 - \beta)^2} }{2\alpha\beta} \IEEEeqnarraynumspace \\
\label{eq:SysMod81c}
t &=& \frac{m(-\alpha)}{m(-\alpha) + 1}.
\end{IEEEeqnarray}
In \eqref{eq:SysMod81a} $\Delta  \boldsymbol{\Phi}_{\tau} = \mr{diag} \curll \right.$ $\jexp{\phit{\tau}{1} - \phit{0}{1}}   \mathbb{\mathbf{1}}^{\mr{T}}_{1 \times {M}/{M_{\mr{osc}}}}$, $\ldots,$ $\left.\jexp{\phit{\tau}{M_{\mr{osc}}} - \phit{0}{M_{\mr{osc}}}}    \mathbb{\mathbf{1}}^{\mr{T}}_{1 \times {M}/{M_{\mr{osc}}}} \curlr$, $ m(-\alpha)$ in \eqref{eq:SysMod81b} is the Stieltjes Transform of the Marchenko-Pastur Law  \cite[Eqs. (1.12, 2.43)]{VerduRMT}, and $ m'(-\alpha) = \frac{dm(z)}{dz}|_{z = -\alpha}$.
\end{prop}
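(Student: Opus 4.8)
The plan is to reduce $\temp{sig}=\sqrt{p_k}\,\xi\,\bh{k}^{\mr{T}}\bTheta{\tau,k}(\Hhat{0}^{\mr{H}}\Hhat{0}+M\alpha\mathbf{I}_M)^{-1}\bhhat{0,k}^{*}$ by peeling off, in turn, the self-dependence of the resolvent on user $k$, the estimation error, the scalar UE phases, and finally the channel vector, using the trace lemma and the rank-one perturbation (matrix-inversion) lemma that I assume are collected in Appendix A. First I would split off the $k$th user from the inverse: since $\Hhat{0}^{\mr{H}}\Hhat{0}=\mathbf{B}_k+\bhhat{0,k}^{*}\bhhat{0,k}^{\mr{T}}$ with $\mathbf{B}_k\triangleq\Hhat{0,\minus k}^{\mr{H}}\Hhat{0,\minus k}+M\alpha\mathbf{I}_M$, the matrix-inversion lemma gives $(\Hhat{0}^{\mr{H}}\Hhat{0}+M\alpha\mathbf{I}_M)^{-1}\bhhat{0,k}^{*}=\mathbf{B}_k^{-1}\bhhat{0,k}^{*}/(1+\bhhat{0,k}^{\mr{T}}\mathbf{B}_k^{-1}\bhhat{0,k}^{*})$, where $\mathbf{B}_k$ is independent of $\bhhat{0,k}$. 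Because the entries of $\bhhat{0,k}$ are i.i.d. with unit variance even conditioned on the phase-noise realisation (a unit-modulus rotation of a rotationally invariant Gaussian), the trace lemma gives $\bhhat{0,k}^{\mr{T}}\mathbf{B}_k^{-1}\bhhat{0,k}^{*}\to\mr{tr}(\mathbf{B}_k^{-1})\to m(-\alpha)$, once $\mathbf{B}_k^{-1}$ is rescaled into a Marchenko--Pastur resolvent, so the denominator tends to $1+m(-\alpha)$.

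Next I would substitute the Gauss--Markov model $\bhhat{0,k}^{*}=\sqrt{q_0}\,\bTheta{0,k}^{*}\bh{k}^{*}+\sqrt{q_1}\,\bw{\mr{e},k}^{*}$ into the numerator $\bh{k}^{\mr{T}}\bTheta{\tau,k}\mathbf{B}_k^{-1}\bhhat{0,k}^{*}$, splitting it into a channel part proportional to $\sqrt{q_0}$ and an error part proportional to $\sqrt{q_1}$. The error part $\bh{k}^{\mr{T}}\bTheta{\tau,k}\mathbf{B}_k^{-1}\bw{\mr{e},k}^{*}$ is a bilinear form in the two independent zero-mean vectors $\bh{k}$ and $\bw{\mr{e},k}$, both independent of $\mathbf{B}_k$; its variance is $O(\mr{tr}(\mathbf{B}_k^{-1}\mathbf{B}_k^{-\mr{H}}))=O(1/(M\alpha^2))$, hence it vanishes and only the $\sqrt{q_0}$ term survives.

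In the surviving term I would pull the scalar UE phases out of the diagonal matrices by writing $\bTheta{\tau,k}=e^{\jmath\varphit{\tau}{k}}\boldsymbol{\Phi}_{\tau}$ and $\bTheta{0,k}^{*}=e^{-\jmath\varphit{0}{k}}\boldsymbol{\Phi}_{0}^{*}$, where $\boldsymbol{\Phi}_{\tau},\boldsymbol{\Phi}_{0}$ retain only the BS-oscillator phases. This exposes the target factor $e^{\jmath(\varphit{\tau}{k}-\varphit{0}{k})}$ and leaves the quadratic form $\bh{k}^{\mr{T}}\boldsymbol{\Phi}_{\tau}\mathbf{B}_k^{-1}\boldsymbol{\Phi}_{0}^{*}\bh{k}^{*}$ in the channel vector, which is independent of $\mathbf{B}_k$. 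The trace lemma converts it to $\mr{tr}(\boldsymbol{\Phi}_{0}^{*}\boldsymbol{\Phi}_{\tau}\mathbf{B}_k^{-1})$, and the crucial observation is that the diagonal product $\boldsymbol{\Phi}_{0}^{*}\boldsymbol{\Phi}_{\tau}$ has $l$th block $e^{\jmath(\phit{\tau}{l}-\phit{0}{l})}$, i.e.\ it is precisely $\Delta\boldsymbol{\Phi}_{\tau}$ of \eqref{eq:SysMod81a}.

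Finally I would evaluate $\mr{tr}(\Delta\boldsymbol{\Phi}_{\tau}\mathbf{B}_k^{-1})$ via the deterministic equivalent of the resolvent: since the columns of $\Hhat{0,\minus k}$ have identity covariance, $\mathbf{B}_k^{-1}$ acts asymptotically as $m(-\alpha)\mathbf{I}_M/M$ against any deterministic matrix of bounded norm, so $\mr{tr}(\Delta\boldsymbol{\Phi}_{\tau}\mathbf{B}_k^{-1})\to m(-\alpha)\,\limit{M}{\infty}\frac{1}{M}\mr{tr}(\Delta\boldsymbol{\Phi}_{\tau})=m(-\alpha)\,\T{PN}$. Dividing the numerator $\sqrt{q_0}\,e^{\jmath(\varphit{\tau}{k}-\varphit{0}{k})}\,m(-\alpha)\,\T{PN}$ by the denominator $1+m(-\alpha)$ produces the factor $t=m(-\alpha)/(1+m(-\alpha))$, and restoring $\sqrt{p_k}\,\xi$ yields exactly \eqref{eq:SysMod71}. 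I expect the main obstacle to be this last step --- rigorously justifying the trace-of-resolvent-times-deterministic-matrix equivalent for the phase-weighted diagonal $\Delta\boldsymbol{\Phi}_{\tau}$, and using the uniformly bounded spectral-norm hypothesis to make all of the trace-lemma residuals uniformly negligible as $M,K\to\infty$; the decoupling of the scalar UE phase is essentially cosmetic, and the content of the result is that the BS-oscillator phases survive only through the normalised trace $\T{PN}$.
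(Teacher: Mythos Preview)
Your argument is correct and reaches \eqref{eq:SysMod71}, but the route differs from the paper's in one structural respect. The paper does \emph{not} first peel off user $k$ with the matrix-inversion lemma; instead it substitutes the Gauss--Markov model directly into $\frac{1}{M}\bh{k}^{\mr{T}}\bTheta{\tau,k}\boldsymbol{\chi}_{0}^{-1}\bhhat{0,k}^{*}$, where $\boldsymbol{\chi}_{0}=\frac{1}{M}\Hhat{0}^{\mr{H}}\Hhat{0}+\alpha\mathbf{I}_M$ still contains the rank-one contribution of $\bhhat{0,k}$, and then invokes Lemma~\ref{lem:Lemm7Debb} (Lemma~7 of \cite{Debbah12}) to evaluate both quadratic forms at once. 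In that approach the cross term $\bx{0,k}^{\mr{H}}\Delta\boldsymbol{\Phi}_{\tau}\boldsymbol{\chi}_{0}^{-1}\bw{\mr{e},k}^{*}$ does \emph{not} vanish---it converges to $-q_2 t_1^2\T{PN}/(1+t_1)$---and only after recombining with the $\sqrt{q_0}$ term and using $q_0q_1=q_2^2$ does the expression collapse to $\sqrt{q_0}\,t_1\T{PN}/(1+t_1)$. Your ordering buys a genuine simplification: once $\mathbf{B}_k^{-1}$ is extracted it is independent of both $\bh{k}$ and $\bw{\mr{e},k}$, so the $\sqrt{q_1}$ bilinear form is zero in the limit by the trace lemma and no algebraic cancellation is needed. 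For the crucial trace-splitting $\mr{tr}(\Delta\boldsymbol{\Phi}_{\tau}\mathbf{B}_k^{-1})\to m(-\alpha)\,\T{PN}$, the paper phrases it through asymptotic freeness (Lemma~\ref{lem:FrPr}), justified by the unitary invariance of $\boldsymbol{\chi}_{0}^{-1}$; your ``deterministic-equivalent'' formulation is equivalent here because the columns of $\Hhat{0,\minus k}$ have identity covariance, so conditioning on the phase noise leaves an isotropic ensemble. Both approaches are sound: the paper's saves effort by citing a pre-packaged lemma, yours is more transparent about why the estimation noise disappears.
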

\begin{proof}
Please refer to Section B of Appendix C.
\end{proof}

\begin{rem}
The terms $t$ and $\xi$ in \eqref{eq:SysMod71} depend on $\alpha$ and $\beta$, and captures the channel hardening effect  \cite{Hochwald02,Marzetta10,Debbah12} that results from the averaging of the random fading channels when  RZF precoding is used, and $M, K \rightarrow \infty$. The term $\T{PN}$ in \eqref{eq:SysMod81a} captures the effects of phase noise variations  at the BS between the training and the data transmission phases, and is given by
\begin{IEEEeqnarray}{rCl}
\label{eq:SysMod81d}
\T{PN} = \frac{1}{M_{\mr{osc}}} \nnsum{l=1}{M_{\mr{osc}}} e^{\jmath (\phit{\tau}{l} - \phit{0}{l})}.
\end{IEEEeqnarray}
Specifically, for the CO setup, where $\Delta \boldsymbol{\Phi}_{\tau} =    e^{ \jmath(\phi_{\tau} - \phi_{0}{})  }\mathbf{I}_M$, and  the DO setup, where $ \Delta  \boldsymbol{\Phi}_{\tau} = \mr{diag} \curll \jexp{\phit{\tau}{1} - \phit{0}{1}}  , \ldots  , \jexp{\phit{\tau}{M} - \phit{0}{M}}   \curlr$, \eqref{eq:SysMod81d} reduces to  \cite{Rajet14}
\begin{IEEEeqnarray}{rCl}
\label{eq:SysMod91}
&&\T{PN}  \overset{M\rightarrow \infty}{\longrightarrow}  \left\{ \begin{array}{llll}
              e^{ \jmath(\phi_{\tau} - \phi_{0}{})  } & \mbox{CO setup}\\
            e^{-\frac{\tau \nsigma{\phi}{2}}{2}} & \mbox{DO setup}\end{array} \right.. %
\end{IEEEeqnarray}
Inspection of $\T{PN}$ in \eqref{eq:SysMod81d} and \eqref{eq:SysMod91}, reveals that $\angle \T{PN}$ reflects the random phase variations caused by the oscillators at the BS. The variance of $\angle \T{PN}$ decreases as $M_{\mr{osc}}$ increases, while its mean  is zero for all values of $M_{\mr{osc}}$. $|\T{PN}|$ represents the random amplitude variations in $\temp{sig}$ caused by  transmissions using distributed (asynchronous) oscillators at the BS. As $M_{\mr{osc}}$ increases, the mean of $|\T{PN}|$ reduces from $1$ to $\exp {\parl -\frac{\tau \nsigma{\phi}{2}}{2} \parr}$. Specifically, for the CO setup, $|\T{PN}|=1$, while for the DO setup, $|\T{PN}| = \exp {\parl -\frac{\tau \nsigma{\phi}{2}}{2} \parr}$.  The variance of $|\T{PN}|$ is the highest for $M_{\mr{osc}}=2$, and decreases as $M_{\mr{osc}}$ increases. In summary, when  $2 \leq M_{\mr{osc}}<\infty$, there are random variations in $\T{PN}$.  
In the DO setup\footnote{It is important to note that the results for the DO setup also hold in the case where $M_{\mr{osc}} \rightarrow \infty$, while the ratio  $M/M_{\mr{osc}}$ is fixed.}, where $M_{\mr{osc}} = M$, $\T{PN}$ hardens to a deterministic value that depends on $\tau$ and  $\nsigma{\phi}{2}$. However, this hardening effect is not observed in the CO setup as the phase noise caused by the BS oscillator does not average out.
\end{rem}

From the central limit theorem (as $K \rightarrow \infty$), and since the symbols $\bcsym{\tau}$ are circularly symmetric, the interference term $\boldsymbol{\temp{}}_{\mr{int}}\bcsym{\tau,\minus k}$ in \eqref{eq:SysMod64} is a circularly symmetric Gaussian RV for both the CO and the DO setups. Furthermore, this term is uncorrelated (and hence independent) from the signal term. In the case where $2\leq M_{\mr{osc}} < \infty$, $\boldsymbol{\temp{}}_{\mr{int}}\bcsym{\tau,\minus k}$ is non-Gaussian, but still  uncorrelated from the signal term. Further analysis of $\boldsymbol{\temp{}}_{\mr{int}}\bcsym{\tau,\minus k}$ is relegated to Section \ref{sec:MainResults}, and Appendices B and C. Upon applying \eqref{eq:SysMod71}, \eqref{eq:SysMod91} in \eqref{eq:SysMod64}, we have
\begin{IEEEeqnarray}{rCl}
\label{eq:SysMod91a}
\yd{\tau}{k} &=&  {\sqrt{{p_{k}}q_0   } \T{PN}  \xi   t    }    e^{\jmath (\varphit{\tau}{k} - \varphit{0}{k})  } \csym{\tau,k}  +  \boldsymbol{\temp{}}_{\mr{int}} \bcsym{\tau,\minus k} + \wdn{k}. \IEEEeqnarraynumspace
\end{IEEEeqnarray}
For the CO and the DO setups, the MISO system model in \eqref{eq:SysMod42} and \eqref{eq:SysMod64} becomes an equivalent SISO phase noise channel  in \eqref{eq:SysMod91a} \cite{Khanzadi14}. However, when $2 \leq M_{\mr{osc}} < \infty$, \eqref{eq:SysMod91a} still corresponds to a MISO phase noise channel, since $\T{PN}$ in \eqref{eq:SysMod81d} depends on the random phase noise variations of  the multiple oscillators at the BS.

\subsection{Effective SINR and Achievable Rates}

For the CO and the DO setups, we define the effective SINR based on the SISO phase noise channel in \eqref{eq:SysMod91a} as \cite{Khanzadi14}
\begin{IEEEeqnarray}{rCl}\label{eq:AchRate1}
\mathsf{SINR}_k =  \frac{  |\temp{sig}|^2}{  {\lVert \boldsymbol{\temp{}}_{\mr{int}} \rVert}^2 + \nsigma{w}{2}}. \IEEEeqnarraynumspace
\end{IEEEeqnarray}
Since the phase noise in \eqref{eq:wiener} drifts symbol-by-symbol, the SINR, which  depends on $\tau$, also varies symbol-by-symbol. In order to analyze the achievable rate of a given UE based on the SINR in \eqref{eq:AchRate1}, we model the phase noise to be a constant within a block of symbols \cite{Noels03}. Therefore, the SINR which is computed for a given $\tau$ corresponds to the SINR associated with a block of symbols, and can be used to determine the achievable rate \cite{Behrooz13}. {Note that this model is implicity used in \cite{Larsson13,Emil13, Rajet14}. Furthermore, this model is only used to evaluate the achievable rates based on the SINR derived, and is not required,  per se,  for deriving the SINR and the other analytical results in this paper.} The achievable rate computed based on this model is an upper-bound for the case where the SINR varies symbol-by-symbol.

Based on the effective SINR in \eqref{eq:AchRate1}, an upper bound  for the achievable rate of the $k$th UE  for the CO and the DO setups for a given block of symbols (i.e., given $\tau$) is  \cite{Khanzadi14}
\begin{IEEEeqnarray}{rCl}\label{eq:AchRate2}
C({\mathsf{SINR}_k})& \leq   \log_2{\parl 1 + \mathsf{SINR}_k\parr}. \IEEEeqnarraynumspace
\end{IEEEeqnarray}
This upper bound, which corresponds to the AWGN channel capacity, is generally tight for low-to-medium SINR values. Another upper bound for the achievable rate for the CO and the DO setups, which is generally tight at high SINR, was derived by Lapidoth \emph{et al.} \cite{Lapidoth02}, and is given by
\begin{IEEEeqnarray}{rCl}\label{eq:AchRate3}
C({\mathsf{SINR}_k}) \leq  \frac{1}{2}\log_2 (2 \pi \mathsf{SINR}_k)      - {\frac{1}{2} \log_2 \parl  { 2 \pi e  \tau(\nsigma{\varphi}{2} + \delta_{\mr{pn}}\nsigma{\phi}{2} )}  \ \parr}, \nonumber\\
\end{IEEEeqnarray}
where $\delta_{\mr{pn}}  =  1$, when $M_{\mr{osc}} = 1$, and $\delta_{\mr{pn}}  =  0$, otherwise. In \eqref{eq:AchRate3}, the second term represents the differential entropy of the phase noise process, $\varphit{\tau}{k} - \varphit{0}{k} + \delta_{\mr{pn}}(\phi_{\tau} - \phi_{0}{})$. The result in \eqref{eq:AchRate3} holds under the assumption that the phase-noise process is stationary, and has a finite differential-entropy rate. Combing \eqref{eq:AchRate2} and \eqref{eq:AchRate3}, the achievable rate for the CO and the DO setups can be tightly upper-bounded as $C(\mathsf{SINR}_k) \leq \min \{\mr{Rate~in~ \eqref{eq:AchRate2}}, \mr{Rate~in~ \eqref{eq:AchRate3}}\}$ \cite{Khanzadi14}.

There are no results available for the achievable rates for the MISO phase noise channel in  \eqref{eq:SysMod64}, when $2\leq M_{\mr{osc}} < \infty$. However, the randomness of $\T{PN}$ in this case is reminiscent of fading channels. Assuming ergodicity for the effective channel   in  \eqref{eq:SysMod91a}, the achievable rate  is written as \cite[Lemma 1]{Michailis14}
\begin{IEEEeqnarray}{rCl}\label{eq:AchRate4a}
C({\mathsf{SINR}_k}) &=& \mathbb{E}_\phi \log_2{\parl 1 + \frac{  |\temp{sig}|^2}{  {\lVert \boldsymbol{\temp{}}_{\mr{int}} \rVert}^2 + \nsigma{w}{2}} \parr} \\
\label{eq:AchRate4b}
&\approx&  \log_2{\parl 1 + \frac{ \mathbb{E}_\phi |\temp{sig}|^2}{ \mathbb{E}_\phi {\lVert \boldsymbol{\temp{}}_{\mr{int}} \rVert}^2 + \nsigma{w}{2}}\parr},
\end{IEEEeqnarray}
where $ \mathbb{E}_\phi$ denotes the expectation operation with respect to the phase noise at the BS. The accuracy of this approximation increases with increasing $M_{\mr{osc}}$. Also, $\temp{sig}$ and $\boldsymbol{\temp{}}_{\mr{int}}$ are not required to be independent. This motivates the definition of an effective SINR for $2\leq M_{\mr{osc}} < \infty$  as
\begin{IEEEeqnarray}{rCl}\label{eq:AchRate5}
\mathsf{SINR}_k =  \frac{ \mathbb{E}_\phi |\temp{sig}|^2}{ \mathbb{E}_\phi {\lVert \boldsymbol{\temp{}}_{\mr{int}} \rVert}^2 + \nsigma{w}{2}}. \IEEEeqnarraynumspace
\end{IEEEeqnarray}
The effective SINR in \eqref{eq:AchRate5} reduces to \eqref{eq:AchRate1} for the CO and the DO setups. Also, note that the achievable rate computed in \eqref{eq:AchRate4b} does not account for the differential entropy rates of the phase noise processes at the BS and the UEs.

%


\section{SINR Analysis and Optimal $\alpha$}
\label{sec:MainResults}
In this section, we first present the analytical results for the SINR achievable at a given UE for the considered precoders. We introduce Theorem \ref{thm:DetEqSINR}, which provides the effective SINR for the GO setup.  Then, the theorem is used to obtain the effective SINR when the ZF and the MF precoders are used at the BS. Furthermore, we analytically determine the optimal $\alpha$, which maximizes the effective SINR at a given UE for the RZF precoder.

\subsection{SINR of the RZF Precoder}
\begin{thm}
\label{thm:DetEqSINR}
Consider an RZF precoded downlink transmission from a BS having $M$ antennas to $K$ single-antenna UEs employing TDD in the presence of oscillator phase noise. Let $\alpha > 0$, $\beta \geq 1, q_0  \in  [0,1]$, and $\mathsf{SINR}_k$ denote the effective SINR at the $k$th UE. Then,
\begin{IEEEeqnarray}{rCl} \label{eq:EffSinr11}
 \mathsf{SINR}_k -  \mathsf{SINR}_{\mr{rzf}_k} \overset{M,K \rightarrow \infty}{\longrightarrow} 0
\end{IEEEeqnarray}
almost surely, and the effective SINR associated with the $k$th UE for the GO setup is given as
\begin{IEEEeqnarray}{rCl}
\label{eq:EffSinr21}
 \mathsf{SINR}_{\mr{rzf}_k} &=&  \frac{ {p_{k}  t^2  q_0 \mathbb{E}_\phi |\T{PN}|^2  } }{\frac{t_2}{M} \parl 1 -  {t q_0  \mathbb{E}_\phi |\T{PN}|^2 }   -  \frac{ t q_0 \mathbb{E}_\phi \absl \T{PN} \absr^2}{(1 + m(-\alpha))}   \parr + \frac{\nsigma{w}{2}}{\xi^2}}
\end{IEEEeqnarray}
with
\begin{IEEEeqnarray}{rCl}
\label{eq:EffSinr31a}
&&\mathbb{E}_\phi |\T{PN}|^2  \triangleq \mathbb{E}_\phi \absl \frac{1}{M}   \mr{tr} \curll \Delta \boldsymbol{\Phi}_{\tau}\curlr \absr ^2 = \frac{  1 - e^{-\tau \nsigma{\phi}{2} } }{M_{\mr{osc}}}  + e^{-\tau \nsigma{\phi}{2} } \\
\label{eq:EffSinr31b}
 &&t_2 = \nnsum{\stackrel{k_1=1,}{k_1 \neq k}}{K}   p_{k_1}   \frac{ m'(-\alpha)  }{( 1 +   m(-\alpha)    )^2},
\end{IEEEeqnarray}
where $t,m(-\alpha), \xi,$ and $\T{PN}$ are as given in \eqref{eq:SysMod81a}-\eqref{eq:SysMod81c}. Specifically, $\mathbb{E}_\phi |\T{PN}|^2 = \exp {\parl -{\tau \nsigma{\phi}{2}} \parr} $ for the DO setup, and  $\mathbb{E}_\phi |\T{PN}|^2 = 1$ for the CO setup.
\begin{proof}
Refer to Appendices B and C.
\end{proof}
\end{thm}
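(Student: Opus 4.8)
The plan is to divide the numerator and denominator of the effective SINR in \eqref{eq:AchRate5} by $\xi^2$ and evaluate the two resulting phase-averaged quantities as $M,K\to\infty$. The numerator is essentially free: Proposition~\ref{prop:Irzfsig} already establishes \eqref{eq:SysMod71}, i.e.\ $\temp{sig}\to\sqrt{p_k q_0}\,\T{PN}\,\xi\,t\,e^{\jmath(\varphit{\tau}{k}-\varphit{0}{k})}$ almost surely, so $|\temp{sig}|^2/\xi^2\to p_k q_0 t^2|\T{PN}|^2$ (the unit-modulus UE factor drops out) and, after the finite-dimensional average $\mathbb{E}_\phi$, the numerator equals $p_k q_0 t^2\,\mathbb{E}_\phi|\T{PN}|^2$. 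The identity \eqref{eq:EffSinr31a} then follows by expanding $|\T{PN}|^2=M_{\mr{osc}}^{-2}\sum_{l,l'}e^{\jmath(\phit{\tau}{l}-\phit{0}{l})}e^{-\jmath(\phit{\tau}{l'}-\phit{0}{l'})}$ and using independence of the $M_{\mr{osc}}$ oscillator processes: the $M_{\mr{osc}}$ diagonal terms contribute $1$ each and each of the $M_{\mr{osc}}(M_{\mr{osc}}-1)$ off-diagonal terms contributes $e^{-\tau\nsigma{\phi}{2}}$.

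The substance of the proof is the interference factor. Setting $\mathbf{A}\triangleq(\Hhat{0}^{\mr{H}}\Hhat{0}+M\alpha\mathbf{I}_M)^{-1}$ and $\mathbf{u}\triangleq\bTheta{\tau,k}^{\mr{H}}\bh{k}^{*}$ --- whose entries are, conditionally on the phase noise, i.i.d.\ $\mathcal{CN}(0,1)$ since $\bTheta{\tau,k}$ is unitary and diagonal --- I would write $\lVert\boldsymbol{\temp{}}_{\mr{int}}\rVert^2/\xi^2=\sum_{k_1\neq k}p_{k_1}|\bhhat{0,k_1}^{\mr{T}}\mathbf{A}\,\mathbf{u}|^2$. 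For each $k_1$ I remove $\bhhat{0,k_1}$ from the resolvent by Sherman--Morrison; the induced scalar $1+\bhhat{0,k_1}^{\mr{T}}\mathbf{A}_{k_1}\bhhat{0,k_1}^{*}$ converges to $1+m(-\alpha)$ by the trace lemma together with $\mr{tr}(\mathbf{A})\to m(-\alpha)$ (the Marchenko--Pastur Stieltjes transform). Because $\bhhat{0,k_1}$ is then independent of $\mathbf{A}_{k_1}$ and of $\mathbf{u}$, the trace lemma reduces each summand to $\mathbf{u}^{\mr{H}}\mathbf{A}_{k_1}^2\mathbf{u}/(1+m(-\alpha))^2$. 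Its leading, phase-noise-independent contribution $\mathbf{u}^{\mr{H}}\mathbf{A}^2\mathbf{u}\approx\mr{tr}(\mathbf{A}^2)\to m'(-\alpha)/M$ already reproduces the prefactor $t_2/M$; the phase-noise-dependent correction requires the finer analysis below.

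The delicate step is to evaluate $\mathbf{u}^{\mr{H}}\mathbf{A}_{k_1}^2\mathbf{u}$ while retaining the rank-one correlation between $\mathbf{u}$ and the resolvent, which enters because $\bhhat{0,k}^{*}=\sqrt{q_0}\,\bTheta{0,k}^{\mr{H}}\bTheta{\tau,k}\mathbf{u}+\sqrt{q_1}\,\bw{\mr{e},k}^{*}$ by \eqref{eq:SysMod3}. I would apply a second Sherman--Morrison step to remove $\bhhat{0,k}$ from $\mathbf{A}_{k_1}$, obtaining a resolvent $\mathbf{A}_{k,k_1}$ independent of $\mathbf{u}$, and expand $\mathbf{A}_{k_1}^2=(\mathbf{A}_{k,k_1}-\mathbf{p}\mathbf{p}^{\mr{H}}/d)^2$ with $\mathbf{p}=\mathbf{A}_{k,k_1}\bhhat{0,k}^{*}$ and $d\to1+m(-\alpha)$. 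This produces four quadratic forms, whose limits follow from the trace lemma; the key evaluation is $\tfrac1M\mr{tr}(\mathbf{A}_{k,k_1}^{j}\bTheta{0,k}^{\mr{H}}\bTheta{\tau,k})$, which factorizes by isotropy of the resolvent into $\tfrac1M\mr{tr}(\mathbf{A}_{k,k_1}^{j})\cdot\tfrac1M\mr{tr}(\bTheta{0,k}^{\mr{H}}\bTheta{\tau,k})$, and the second factor equals $\T{PN}\,e^{\jmath(\varphit{\tau}{k}-\varphit{0}{k})}$ because $M/M_{\mr{osc}}$ antennas share each oscillator. The estimation-error part of $\bhhat{0,k}$ enters only through $\lVert\mathbf{p}\rVert^2$ (via $q_0+q_1=1$), its cross terms with $\mathbf{u}$ averaging to zero. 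Collecting the four terms, the UE phases cancel, the BS phases assemble into $|\T{PN}|^2$, and substituting $t=m(-\alpha)/(1+m(-\alpha))$ the correction recombines into the affine bracket $1-t q_0|\T{PN}|^2-t q_0|\T{PN}|^2/(1+m(-\alpha))$. Carrying out the sum over $k_1$ puts the common factor $\sum_{k_1\neq k}p_{k_1}\,m'(-\alpha)/(1+m(-\alpha))^2=t_2$ out front, so the interference converges to $\tfrac{t_2}{M}$ times the bracket; applying $\mathbb{E}_\phi$, which passes through the bracket since it is affine in $|\T{PN}|^2$, yields \eqref{eq:EffSinr21}.

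Finally, the almost-sure statement \eqref{eq:EffSinr11} follows by invoking, at each step above, the standard almost-sure trace, bilinear-form, and rank-one perturbation lemmas, legitimate under the assumed uniformly bounded spectral norm of $\tfrac1M\Hhat{}^{\mr{H}}\Hhat{}$; every random quadratic form is replaced by its deterministic equivalent up to an error vanishing almost surely, and dividing the almost-sure limits of the strictly positive denominator and the numerator gives the claim. I expect the main obstacle to be the third step: simultaneously tracking the $q_0$-dependent rank-one correlation and carrying the non-scalar phase-noise diagonal through the traces $\mr{tr}(\mathbf{A}_{k,k_1}^{j}\bTheta{0,k}^{\mr{H}}\bTheta{\tau,k})$, and verifying that the four quadratic forms recombine into the compact bracket of \eqref{eq:EffSinr21} with no residual cross terms.
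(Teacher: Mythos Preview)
Your proposal is correct and lands on the same deterministic equivalent as the paper, but the organization differs in two respects worth noting. First, you reverse the order of the rank-one removals: you peel off $\bhhat{0,k_1}$ first (reducing each summand to $\mathbf{u}^{\mr H}\mathbf{A}_{k_1}^2\mathbf{u}$) and then remove $\bhhat{0,k}$, whereas the paper applies the resolvent identity to extract $\bhhat{0,k}$ from one copy of $\boldsymbol{\chi}_0^{-1}$ up front and only afterwards evaluates $t_2$ by summing over $k_1$. Second, instead of carrying out your four-term expansion inline, the paper packages the whole ``remove $\bhhat{0,k}$ while a phase-noise diagonal sits on the outside'' manoeuvre into a dedicated extension of \cite[Lemma~7]{Debbah12} (Lemma~\ref{lem:NewLemDebb} here), which takes a unitary $\bN{}$ freely independent of the resolvent and returns limits of $\bx{}^{\mr H}\bN{}\bU{}\bV{}\bN{}^{\mr H}\bx{}$, $\bx{}^{\mr H}\bU{}\bV{}\bN{}^{\mr H}\bx{}$, etc.; the interference then drops out by a single invocation of that lemma with $\bN{}=\Delta\boldsymbol{\Phi}_\tau$. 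Your ``isotropy of the resolvent'' factorization of $\tfrac{1}{M}\mr{tr}(\mathbf{A}_{k,k_1}^j\bTheta{0,k}^{\mr H}\bTheta{\tau,k})$ is exactly what the paper justifies via asymptotic freeness (Lemma~\ref{lem:FrPr}), legitimate because $\boldsymbol{\chi}_0^{-1}$ is unitarily invariant and $\Delta\boldsymbol{\Phi}_\tau$ is independent of it. The paper's packaging is a bit cleaner for reuse, while your inline route is slightly more elementary; substantively the two arguments are the same.
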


\begin{rem}
Theorem \ref{thm:DetEqSINR} captures the effect of phase noise on the SINR as an additional penalty to the quality of the channel estimate---when phase noise is present, the quality of the channel estimate, $q_0$, degrades to $q_0  \mathbb{E}_\phi |\T{PN}|^2$ in the GO setup \eqref{eq:EffSinr31a}. The quality of the effective channel estimate decreases as $M_{\mr{osc}}$ increases \eqref{eq:EffSinr31a}. Also, the effective quality is reduced when $\tau$ or $\nsigma{\phi}{2}$ increase. In the DO setup, the quality of the channel estimate diminishes by a factor of $\exp {\parl -{\tau \nsigma{\phi}{2}} \parr} $. However, in the CO setup, there is no reduction in the quality of the channel estimate due to noisy oscillators at the BS or the UEs as $|\T{PN}|^2 = 1$.

Clearly, the effect of channel aging on the SINR increases as $M_{\mr{osc}}$ increases---the SINR in \eqref{eq:EffSinr21} decreases as  $M_{\mr{osc}}$ increases. This results from the degradation of the desired signal power by a factor $\mathbb{E}_\phi |\T{PN}|^2 < 1$ when $M_{\mr{osc}}\geq2$, implying that the desired signal power decreases as $M_{\mr{osc}}$ increases. But,  the MU interference power increases with $M_{\mr{osc}}$, as can be seen in \eqref{eq:EffSinr21}. This is because the CSI quality deteriorates due to phase noise \eqref{eq:EffSinr31a}, thereby reducing the interference suppression capability of the RZF precoder. Specifically, to sum for the CO setup, there is no effect of phase noise on the SINR since the desired signal power and the MU interference power is the same as when only AWGN is present.
\end{rem}

In the sequel, we will use the SINR result in Theorem \ref{thm:DetEqSINR} in order to derive the effective SINR  for the ZF and MF precoders  (Corollaries \ref{corr:DetEqSinrZf} and \ref{corr:DetEqSinrMf}, respectively).


\subsection{SINR of the ZF Precoder}
\begin{corr}
\label{corr:DetEqSinrZf} 
Consider a ZF precoded transmission  from a BS having $M$ antennas to $K$ single-antenna UEs employing TDD in the presence of oscillator phase noise. Let $\alpha \rightarrow 0$, $\beta > 1, q_0  \in  [0,1]$, and assume that the minimum eigenvalue of $\frac{1}{M}\Hhat{}\Hhat{}^{\mr{H}}$ is bounded away from zero.  Then,
\begin{IEEEeqnarray}{rCl} \label{eq:DetEqSinrZf11}
 \mathsf{SINR}_k -  \mathsf{SINR}_{\mr{zf}_k} \overset{M,K \rightarrow \infty}{\longrightarrow} 0
\end{IEEEeqnarray}
almost surely, and the effective SINR associated with the $k$th UE in the GO setup is given by
\begin{IEEEeqnarray}{rCl}
\label{eq:DetEqSinrZf21}
 \mathsf{SINR}_{\mr{zf}_k} &=&  \frac{  p_k q_0   \mathbb{E}_\phi |\T{PN}|^2  }{\frac{t_2}{M} \parl 1 -  q_0  \mathbb{E}_\phi |\T{PN}|^2  \parr + \frac{\nsigma{w}{2}}{\xi^2}}, \IEEEeqnarraynumspace
\end{IEEEeqnarray}
where $ \T{PN}, \xi,$ and $t_2$ are as given in \eqref{eq:SysMod81a}, \eqref{eq:SysMod81b}, and \eqref{eq:EffSinr31b}.  Specifically, $\mathbb{E}_\phi |\T{PN}|^2 = \exp {-{\tau \nsigma{\phi}{2}}} $ for the DO setup, and  $\mathbb{E}_\phi |\T{PN}|^2 = 1$ for the CO setup.
\begin{proof}
When $\alpha$ is arbitrarily small, $ m(-\alpha) $ in \eqref{eq:SysMod81b} is arbitrarily large, i.e., $ m(-\alpha)  \gg 1$ \cite{VerduRMT}, and $ m(-\alpha)  + 1 \approx  m(-\alpha)$. Applying these approximations in the SINR of the RZF precoder in \eqref{eq:EffSinr21}, we directly obtain \eqref{eq:DetEqSinrZf21}.
\end{proof}
\end{corr}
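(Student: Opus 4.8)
The plan is to obtain the ZF SINR directly as the $\alpha \to 0^+$ limit of the RZF SINR furnished by Theorem \ref{thm:DetEqSINR}, exploiting the fact noted below \eqref{eq:SysMod51} that the RZF precoder collapses to the ZF precoder $\bG{0} = \xi \Hhat{0}^{\mr{H}} \parl \Hhat{0}\Hhat{0}^{\mr{H}} \parr^{\minus 1} {\bP{}}^{\frac{1}{2}}$ as $\alpha \to 0$. The hypothesis that the minimum eigenvalue of $\frac{1}{M}\Hhat{}\Hhat{}^{\mr{H}}$ is bounded away from zero, together with $\beta > 1$, guarantees that $\Hhat{0}\Hhat{0}^{\mr{H}}$ is invertible and that the ZF precoder is well-defined, so the map $\alpha \mapsto \mathsf{SINR}_{\mr{rzf}_k}$ extends continuously to $\alpha = 0$; the almost-sure convergence \eqref{eq:EffSinr11} then carries over to \eqref{eq:DetEqSinrZf11} by continuity of both the random SINR and its deterministic equivalent in $\alpha$ on a neighbourhood of $0^{+}$.

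First I would determine the behaviour of the Stieltjes transform $m(-\alpha)$ in \eqref{eq:SysMod81b} as $\alpha \to 0^+$. Letting $\alpha \to 0$ in the numerator, the square-root term tends to $\sqrt{(1-\beta)^2} = \beta - 1$ (using $\beta > 1$), so the numerator tends to $2(\beta - 1) > 0$ while the denominator $2\alpha\beta \to 0$; hence $m(-\alpha) \to \infty$, with leading order $m(-\alpha) \sim \frac{\beta - 1}{\beta\alpha}$. This makes precise the informal statement $m(-\alpha) \gg 1$ used in the terse sketch, and in particular gives $t = \frac{m(-\alpha)}{m(-\alpha)+1} \to 1$ and $\frac{1}{1 + m(-\alpha)} \to 0$.

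Next I would substitute these limits into \eqref{eq:EffSinr21}. Since $t \to 1$, the numerator $p_k t^2 q_0 \mathbb{E}_\phi|\T{PN}|^2$ tends to $p_k q_0 \mathbb{E}_\phi|\T{PN}|^2$; in the denominator the third term inside the bracket, $\frac{t q_0 \mathbb{E}_\phi|\T{PN}|^2}{1 + m(-\alpha)}$, vanishes while $t \to 1$ in the second term, leaving $\frac{t_2}{M}\parl 1 - q_0 \mathbb{E}_\phi|\T{PN}|^2 \parr + \frac{\nsigma{w}{2}}{\xi^2}$, which is exactly \eqref{eq:DetEqSinrZf21}. The closed forms of $\mathbb{E}_\phi|\T{PN}|^2$ for the CO and DO setups are inherited unchanged from Theorem \ref{thm:DetEqSINR}, since $\T{PN}$ does not depend on $\alpha$.

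The one delicate point — more a bookkeeping check than a genuine obstacle — is verifying that $\xi$ and $t_2$ possess finite, strictly positive limits as $\alpha \to 0$, so that the substitution above neither blows up nor collapses. Using $m'(-\alpha) \sim \frac{\beta-1}{\beta\alpha^2}$, one finds $\frac{(1+m(-\alpha))^2}{m'(-\alpha)} \to \frac{\beta - 1}{\beta}$ and $\frac{m'(-\alpha)}{(1+m(-\alpha))^2} \to \frac{\beta}{\beta-1}$, so that $\xi^2 \to \frac{M(\beta-1)}{\beta \sum_k p_k}$ and $t_2 \to \frac{\beta}{\beta-1}\sum_{k_1 \neq k} p_{k_1}$ remain finite precisely because $\beta > 1$. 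This is exactly where the strengthened hypothesis $\beta > 1$ (versus $\beta \geq 1$ in the RZF theorem) is needed, mirroring the classical degeneration of ZF precoding at $\beta = 1$.
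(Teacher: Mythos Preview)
Your proposal is correct and follows essentially the same approach as the paper: both obtain the ZF SINR by letting $\alpha \to 0$ in the RZF expression \eqref{eq:EffSinr21} and using that $m(-\alpha) \to \infty$, hence $t \to 1$ and $1/(1+m(-\alpha)) \to 0$. Your version is simply more careful, supplying the explicit leading-order asymptotics $m(-\alpha) \sim (\beta-1)/(\beta\alpha)$ and verifying that $\xi^2$ and $t_2$ converge to finite nonzero limits, points the paper leaves implicit.
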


As in the case of the RZF precoder, the SINR in \eqref{eq:DetEqSinrZf21} shows that the effect of channel aging is more pronounced for the GO setup when $M_{\mr{osc}}\geq2$. In the CO setup, the performance of the ZF precoder is not affected at all by the phase noise.

\subsection{SINR of the MF Precoder}

\begin{corr}
\label{corr:DetEqSinrMf}
Consider a MF precoded transmission  from a BS having $M$ antennas, to $K$ single-antenna UEs, employing TDD in the presence of oscillator phase noise. Let $M/K = \beta \geq 1, q_0  \in  [0,1]$, and  $\alpha \rightarrow \infty$. Then,
\begin{IEEEeqnarray}{rCl} \label{eq:DetEqSinrMf11}
 \mathsf{SINR}_k -  \mathsf{SINR}_{\mr{mf}_k} \overset{M,K \rightarrow \infty}{\longrightarrow} 0
\end{IEEEeqnarray}
almost surely, and the effective SINR associated with the $k$th UE in the GO setup is given by
\begin{IEEEeqnarray}{rCl}
\label{eq:DetEqSinrMf21}
 \mathsf{SINR}_{\mr{mf}_k} &=&  \frac{ M q_0 p_{k}   \mathbb{E}_\phi |\T{PN}|^2    }{  (\nsigma{w}{2}+1)\nsum{k=1}{K}    p_k }, \IEEEeqnarraynumspace
\end{IEEEeqnarray}
where $ \T{PN}$ is given in \eqref{eq:SysMod81a}. Specifically, $\mathbb{E}_\phi |\T{PN}|^2 = \exp {\parl -{\tau \nsigma{\phi}{2}}\parr } $ for the DO setup, and  $\mathbb{E}_\phi |\T{PN}|^2 = 1$ for the CO setup.
\begin{proof}
When $\alpha$ is arbitrarily large, $m(-\alpha)$ in \eqref{eq:SysMod81b} is arbitrarily small, i.e., $m(-\alpha) \ll 1$ and $m(-\alpha) + 1 \approx 1$. Applying these approximations in   \eqref{eq:EffSinr11}, we can write
\begin{IEEEeqnarray}{rCl}
\label{eq:DetEqSinrMf31}
 \mathsf{SINR}_{\mr{mf}_k} 
 \label{eq:DetEqSinrMf32}
&=&  \frac{  q_0 p_{k}  m(-\alpha)^2 \mathbb{E}_\phi \absl \T{PN} \absr^2    }{   m'(-\alpha) \nnsum{\stackrel{k_1=1,}{k_1 \neq k}}{K}   p_{k_1}     +  \nsigma{w}{2} m'(-\alpha)   \nsum{k=1}{K}    p_k     } \IEEEeqnarraynumspace\\
\label{eq:DetEqSinrMf33}
&=&  \frac{   q_0 p_{k}  m(-\alpha)^2 M \mathbb{E}_\phi \absl \T{PN} \absr    } {   m'(-\alpha)  \nsum{k=1}{K}    p_k \parl 1  +   \nsigma{w}{2}  \parr },   \IEEEeqnarraynumspace\\
\label{eq:DetEqSinrMf34}
&\approx&  \frac{   q_0 p_{k}  m(-\alpha)^2 M \mathbb{E}_\phi \absl \T{PN} \absr    } {  \parl \frac{2 m(-\alpha)}{\alpha} - \frac{1}{\alpha^2} \parr  \nsum{k=1}{K}    p_k \parl 1  +   \nsigma{w}{2}  \parr },  \IEEEeqnarraynumspace\\
\label{eq:DetEqSinrMf35}
&=& \frac{ M q_0 p_{k} \mathbb{E}_\phi \absl \T{PN} \absr    }{  (\nsigma{w}{2}+1)\nsum{k=1}{K}    p_k },  \IEEEeqnarraynumspace
\end{IEEEeqnarray}
where \eqref{eq:DetEqSinrMf32} is reduced to \eqref{eq:DetEqSinrMf33} by approximating $ \nsum{k=1}{K}    p_k  \approx  \nnsum{{k_1 \in K_{1}}}{}   p_{k_1}, K_1 = \{1, \ldots,k-1,k+1,\ldots, K\}$,  when $K \rightarrow \infty$. In \eqref{eq:DetEqSinrMf33}, we exploit that when $\alpha$ is arbitrarily large, $ m'(-\alpha)  \approx  \frac{2 m(-\alpha)}{\alpha} - \frac{1}{\alpha^2}$. Furthermore,  in  \eqref{eq:DetEqSinrMf34}, we exploit that as $\alpha \rightarrow \infty$, $\alpha m(-\alpha) \rightarrow 1$, which yields the final result in \eqref{eq:DetEqSinrMf35}.
\end{proof}
\end{corr}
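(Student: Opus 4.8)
The plan is to treat the MF precoder as the degenerate $\alpha \to \infty$ limit of the RZF precoder, mirroring the fact that \eqref{eq:SysMod51} reduces to the MF precoder in that limit, and to evaluate $\lim_{\alpha \to \infty}$ of the deterministic equivalent $\mathsf{SINR}_{\mr{rzf}_k}$ in \eqref{eq:EffSinr21}. Because Theorem~\ref{thm:DetEqSINR} already furnishes the almost-sure convergence $\mathsf{SINR}_k - \mathsf{SINR}_{\mr{rzf}_k} \to 0$ for each fixed $\alpha > 0$ as $M,K \to \infty$, I would argue that \eqref{eq:DetEqSinrMf11} follows from the same random-matrix machinery specialised to the MF precoder, so that the substantive task is the purely analytical evaluation of the $\alpha \to \infty$ limit of \eqref{eq:EffSinr21}.

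The first step is to pin down the large-$\alpha$ behaviour of the Marchenko--Pastur Stieltjes transform in \eqref{eq:SysMod81b} and of its derivative. Factoring $\beta\alpha$ out of the square root and expanding by the binomial series gives $m(-\alpha) = 1/\alpha - 1/(\beta\alpha^2) + O(1/\alpha^3)$, hence $m(-\alpha) \to 0$ and $\alpha m(-\alpha) \to 1$; differentiating the corresponding moment expansion $m(z) = -1/z - \mu_1/z^2 - \cdots$ yields $m'(-\alpha) = 1/\alpha^2 + O(1/\alpha^3)$, which I would record in the convenient form $m'(-\alpha) \approx 2m(-\alpha)/\alpha - 1/\alpha^2$. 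The two leading-order facts I actually need downstream are $\alpha m(-\alpha) \to 1$ and the ratio $m(-\alpha)^2/m'(-\alpha) \to 1$.

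With these in hand, the final step is to simplify \eqref{eq:EffSinr21} termwise. Since $m(-\alpha) \to 0$, the hardening factor $t$ in \eqref{eq:SysMod81c} collapses to $t \approx m(-\alpha)$ and $1 + m(-\alpha) \approx 1$, so the bracket $1 - t q_0 \mathbb{E}_\phi |\T{PN}|^2 - t q_0 \mathbb{E}_\phi |\T{PN}|^2/(1+m(-\alpha)) \to 1$ because each subtracted piece is $O(m(-\alpha))$. Substituting $t_2 \approx m'(-\alpha)\sum_{k_1 \neq k} p_{k_1}$ from \eqref{eq:EffSinr31b} and $\xi^2 \approx M/(m'(-\alpha)\sum_{k=1}^{K} p_k)$ from \eqref{eq:SysMod81aa}, and using $\sum_{k_1 \neq k} p_{k_1} \approx \sum_{k=1}^{K} p_k$ as $K \to \infty$ to merge the interference and noise terms, the denominator condenses to $(m'(-\alpha)/M)(1+\nsigma{w}{2})\sum_{k=1}^{K} p_k$, leaving
\begin{IEEEeqnarray}{rCl}
\mathsf{SINR}_{\mr{mf}_k} &=& \frac{M q_0 p_k\, m(-\alpha)^2 \mathbb{E}_\phi |\T{PN}|^2}{m'(-\alpha)(1+\nsigma{w}{2})\sum_{k=1}^{K} p_k}. \nonumber
\end{IEEEeqnarray}
Invoking $m(-\alpha)^2/m'(-\alpha) \to 1$ then cancels all remaining $\alpha$-dependence and delivers \eqref{eq:DetEqSinrMf21}. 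The step I expect to be the main obstacle is the order-of-limits bookkeeping: Theorem~\ref{thm:DetEqSINR} is stated for fixed $\alpha$ with $M,K \to \infty$, whereas here $\alpha \to \infty$ as well, so I must confirm that the termwise approximations are genuine leading-order limits and, above all, that the delicate cancellation $m(-\alpha)^2/m'(-\alpha) \to 1$ is exact---it is the mechanism by which a numerator and denominator that both vanish like $\alpha^{-2}$ produce a finite nonzero SINR, so an error in the next-order term of the square-root expansion would spoil the constant. The remaining manipulations are routine substitutions.
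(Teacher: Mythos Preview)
Your proposal is correct and follows essentially the same route as the paper: both treat the MF precoder as the $\alpha \to \infty$ limit of the RZF deterministic equivalent \eqref{eq:EffSinr21}, use $m(-\alpha)\to 0$ (hence $t\approx m(-\alpha)$, $1+m(-\alpha)\approx 1$, and the bracket $\to 1$), the large-$K$ merge $\sum_{k_1\neq k}p_{k_1}\approx\sum_k p_k$, and the leading-order asymptotics of $m'(-\alpha)$ together with $\alpha m(-\alpha)\to 1$ to effect the cancellation. Your packaging of the last step as $m(-\alpha)^2/m'(-\alpha)\to 1$ is exactly the paper's computation $m(-\alpha)^2/(2m(-\alpha)/\alpha - 1/\alpha^2)\to 1$ written more compactly, and your explicit flag about the order of limits $M,K\to\infty$ versus $\alpha\to\infty$ is a valid caveat that the paper leaves implicit.
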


\begin{rem}
From the SINR derived for the MF precoder in  \eqref{eq:DetEqSinrMf21}, it can be seen that the MU interference term is almost independent from the effect of phase noise at the BS in the GO setup (i.e., $M_{\mr{osc}}\geq 1$). This is unlike the effect of phase noise on the interference power of the RZF precoder in \eqref{eq:EffSinr21} or the ZF precoder in \eqref{eq:DetEqSinrZf21}. However, the desired signal is affected by phase noise in the same manner as for the RZF and the ZF precoders.
\end{rem}

The SINR results for the RZF, ZF, and MF precoders in  \eqref{eq:EffSinr21}, \eqref{eq:DetEqSinrZf21}, and \eqref{eq:DetEqSinrMf21}, respectively, reduce to the results in \cite{Debbah12}, when phase noise is absent and only AWGN is present, and when the user channels are independent of each other. Alternatively, the results from \cite{Debbah12} can be transformed to the results in \eqref{eq:EffSinr21}, \eqref{eq:DetEqSinrZf21}, and \eqref{eq:DetEqSinrMf21} by changing the quality of the channel estimate from $q_0$ to  $q_0 \mathbb{E}_\phi \absl \T{PN} \absr^2$, where $ \T{PN}$ is as defined in \eqref{eq:SysMod81a}.

%

\subsection{Optimal $\alpha$ for the RZF Precoder in the GO Setup}
\begin{corr}
\label{corr:OptimalAlpha}
Consider an RZF precoded downlink transmission from a BS having $M$ antennas to $K$ single-antenna UEs employing TDD in the presence of oscillator phase noise. Let $\alpha > 0$, $M/K = \beta, \beta \geq 1, q_0  \in  [0,1],$ and let $\mathsf{SINR}_k$ denote the effective SINR achieved by the $k$th UE. Then the optimal $\alpha$ (denoted as $\tilde{\alpha}$) that maximizes $\mathsf{SINR}_k$ in \eqref{eq:EffSinr21}  is given by
\begin{IEEEeqnarray}{rCl} \label{eq:OptAlpha11}
\tilde{\alpha} = \frac{\nsigma{w}{2} +   1-  \mathbb{E}_\phi |\T{PN}|^2 q_0^2    }{  \mathbb{E}_\phi |\T{PN}|^2 q_0^2\beta}
\end{IEEEeqnarray}
where $ \T{PN}$ is as given in \eqref{eq:SysMod81a}.
\begin{proof}
Similar to the steps followed in \cite{Debbah12}, we differentiate \eqref{eq:EffSinr21} with respect to $\alpha$, and set the result equal to zero, which yields $\tilde{\alpha}$ in \eqref{eq:OptAlpha11}.
\end{proof}

\begin{rem}
When perfect channel estimates are available at the BS, i.e., $q_0 = 1$ and $|\T{PN}|^2  = 1$, which holds in the CO setup or when phase noise is absent, then $\tilde{\alpha}  = \nsigma{w}{2}/\beta$, which is similar to the result obtained in \cite{Debbah12}. The regularization parameter $ \tilde{\alpha}$ becomes large when the effective quality of the CSI at the BS is poor, which happens for low values of $q_0$ or $\mathbb{E}_\phi |\T{PN}|^2$ (i.e., severe phase noise at the BS). Under these conditions, the MF precoder becomes optimal. For asymptotically low values of $\nsigma{w}{2}$, the optimal $\alpha$ in \eqref{eq:OptAlpha11} becomes $\limit{\nsigma{w}{2}}{0}  \tilde{\alpha} = \frac{ 1- \mathbb{E}_\phi |\T{PN}|^2 q_0^2    }{ \mathbb{E}_\phi |\T{PN}|^2q_0^2\beta}$, which does not correspond to the ZF precoder, since $ \tilde{\alpha} \neq 0$. However, as $\beta$ becomes large, or when $1- \mathbb{E}_\phi |\T{PN}|^2q_0^2 \rightarrow 0$, the ZF precoder becomes optimal.
\end{rem}
\end{corr}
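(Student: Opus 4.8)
The plan is to treat the effective SINR in \eqref{eq:EffSinr21} as a scalar function of the single regularization parameter $\alpha > 0$ and to locate its unique stationary point. The first observation is that the phase-noise factor $\mathbb{E}_\phi |\T{PN}|^2$, the quality $q_0$, and the powers $p_k$ and $\nsum{k=1}{K} p_k$ are all independent of $\alpha$, so the entire $\alpha$-dependence of \eqref{eq:EffSinr21} enters only through $m(-\alpha)$ and $m'(-\alpha)$ via $t$, $\xi$, and $t_2$ (\eqref{eq:SysMod81a}--\eqref{eq:EffSinr31b}). The key device is the fixed-point identity satisfied by the Stieltjes transform: squaring out the radical in \eqref{eq:SysMod81b} and simplifying collapses it to
\begin{IEEEeqnarray}{rCl}
\alpha &=& \frac{1}{m(-\alpha)} - \frac{1}{\beta\parl 1 + m(-\alpha)\parr},
\end{IEEEeqnarray}
which I would verify directly. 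Differentiating this relation implicitly in $\alpha$ then expresses $m'(-\alpha)$ through $m(-\alpha)$ alone; in particular it yields the compact identity
\begin{IEEEeqnarray}{rCl}
\frac{m(-\alpha)^2}{m'(-\alpha)} &=& 1 - \frac{m(-\alpha)^2}{\beta\parl 1 + m(-\alpha)\parr^2}.
\end{IEEEeqnarray}

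Next I would substitute $t = m(-\alpha)/\parl m(-\alpha)+1\parr$, the formula for $\xi^2$, and $t_2$ into \eqref{eq:EffSinr21}, and use the identity above to eliminate $m'(-\alpha)$. This is the crucial simplification: it purges every derivative of $m$, collapsing \eqref{eq:EffSinr21} into a \emph{derivative-free} rational function of the single scalar $m \triangleq m(-\alpha)$ (equivalently of $y \triangleq 1/\parl 1+m\parr \in (0,1)$), multiplied by an $\alpha$-independent positive prefactor $p_k q_0 \mathbb{E}_\phi |\T{PN}|^2 M / \nsum{k=1}{K} p_k$ that does not affect where the optimum sits. Because $\alpha \mapsto m(-\alpha)$ is a monotone bijection on the admissible range (with $\alpha\to 0$ giving the ZF limit and $\alpha\to\infty$ the MF limit), maximizing over $\alpha$ is equivalent to maximizing the reduced function over $y\in(0,1)$.

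With the SINR in this one-variable form $g(y)=A(y)/C(y)$, where both $A$ and $C$ are quadratic in $y$, the optimization is elementary. The stationarity condition $A'C - A C' = 0$ simplifies (the cubic terms cancel) to a quadratic in $y$; I would take the admissible root in $(0,1)$ and substitute it back into the fixed-point relation $\alpha = 1/m - 1/[\beta(1+m)]$ to recover the closed form $\tilde{\alpha}$ in \eqref{eq:OptAlpha11}. A sign check of the second derivative — or simply comparing $g$ at the interior stationary point against its values at the endpoints $\alpha\to 0$ and $\alpha\to\infty$ — confirms that this point is the global maximizer on $\alpha>0$.

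The hard part will be the algebraic bookkeeping in the middle step, specifically the elimination of $m'(-\alpha)$. Differentiating \eqref{eq:EffSinr21} in $\alpha$ \emph{naively}, as the terse argument suggests, is treacherous because $\xi^2 \propto \parl 1+m\parr^2/m'$ and $t_2 \propto m'$, so the derivative drags in the second derivative $m''(-\alpha)$ and the stationarity equation becomes unwieldy. The whole point of first invoking the implicit-differentiation identity to rewrite $m'(-\alpha)$ as a rational function of $m$, and only then changing the optimization variable from $\alpha$ to $m$, is to keep the computation confined to a single clean quadratic and avoid $m''$ entirely.
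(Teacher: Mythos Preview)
Your approach is correct and is essentially the same as the paper's: the paper simply says ``differentiate \eqref{eq:EffSinr21} with respect to $\alpha$ and set the result to zero, following \cite{Debbah12},'' and what you have written is precisely the detailed execution of that recipe. The fixed-point identity $\alpha = 1/m - 1/[\beta(1+m)]$ and its differentiated form $m^2/m' = 1 - m^2/[\beta(1+m)^2]$ are exactly the standard Marchenko--Pastur simplifications used in \cite{Debbah12}, and your reduction to a quadratic in $y=1/(1+m)$ (with the cubic terms cancelling in $A'C-AC'$) followed by back-substitution through the fixed-point relation is the natural way to carry out the computation the paper only gestures at.
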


\section{Simulation Results}
\label{sec:Sim}

\begin{figure}[!t]
\begin{center}
\includegraphics[width = 3.5in, keepaspectratio=true]{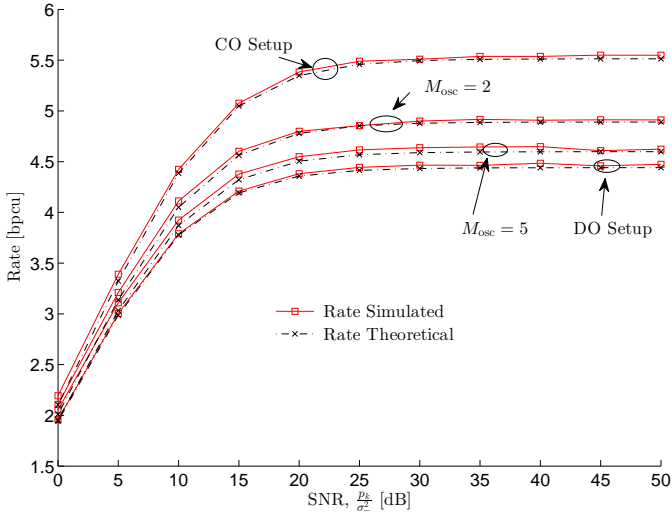}
\caption{$C(\mathsf{SINR}_k)$ for the optimized RZF precoder for $\beta = 5, M = 50,\nsigma{\phi}{} =\nsigma{\varphi}{}   = 6^{\circ}$, and $ q_0 = 0.9$.}
\label{fig:DetEqORZF1}
\end{center}
\end{figure}
In this section, the analytical results for the linear precoders presented in Section \ref{sec:MainResults} are verified by comparing them against the results obtained from MC simulations. Even though the analytical results are derived for $M,K\rightarrow \infty$, in the sequel we observe that these results concur with those from simulations for finite values of $M$ and $K$.

\subsection{Simulation Setup}

We simulate the system model specified in \eqref{eq:SysMod42} using the RZF, ZF, and MF precoders, and numerically evaluate the effective SINR in \eqref{eq:AchRate5}. Then, the achievable rate in the downlink for a given UE is computed using \eqref{eq:AchRate4b} for all values of $M_{\mr{osc}}$, unless stated otherwise. Recall that this evaluation does not account for the differential entropy rates of the phase noise processes at the BS and the UE. We will evaluate the achievable rates for the CO and the DO setups as  $C(\mathsf{SINR}_k) = \min \{\mr{Rate~in~ \eqref{eq:AchRate2}}, \mr{Rate~in~ \eqref{eq:AchRate3}}\}$  \cite{Khanzadi14} when exclusively comparing their performances in Section \ref{sec:COvsDO}. Thereby, the effects of both SINR and the differential entropy rates on the achievable rates of the CO and the DO setups are taken into account.

The system considered consists of a single cell with a BS having $M=50$ antennas. Setting $\beta = 5$, the number of UEs served by the BS is $K = 10$. The channel between a BS antenna and a UE is drawn from an i.i.d. complex Gaussian distribution, i.e., $\h{k}{m} \sim \mathcal{CN}(0,1)$. The coherence time of the channel is set to $\T{c}=100$ data symbol periods, thus resulting in an i.i.d. Rayleigh block-fading channel. The MC simulations are conducted for $10000$ independent channel realizations. The phase noise is simulated as a discrete Wiener process \eqref{eq:SysMod1}, with increment standard deviation $\nsigma{\varphi}{} = \nsigma{\phi}{}= 6^{\circ}$ \cite{Colavolpe05,Rajet13}. Unless stated otherwise, the time elapsed between the training period of the $k$th UE   and the data transmission period is set to $\tau = K=10$ symbol periods. Furthermore, all UEs use the same training power, and for downlink transmission, equal power is allocated by the BS to all  UEs, i.e., $\bP{} = \frac{1}{K}\mathbf{I}_K$. The quality of the channel estimate is set to $q_0 = 0.9$ \cite{Debbah12}. This is reasonable given that the UEs can choose to transmit at power levels ${\pu{k}}$ such that the desired $q_0$ is attained.
\begin{figure}[!t]
\begin{center}
\includegraphics[width = 3.5in, keepaspectratio=true]{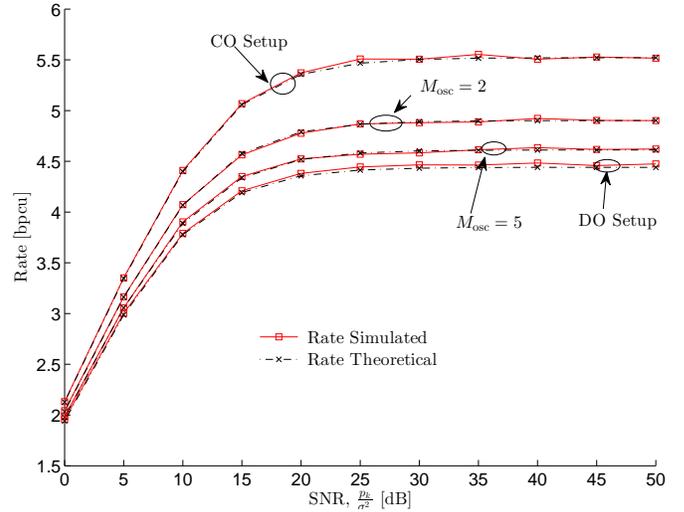}
\caption{$C(\mathsf{SINR}_k)$ for the ZF precoder for $\beta = 5, M = 50,\nsigma{\phi}{} =\nsigma{\varphi}{}  = 6^{\circ}$, and $ q_0 = 0.9$.}
\label{fig:DetEqZF1}
\end{center}
\end{figure}

\subsection{Verification of Analytical Results}

Fig. \ref{fig:DetEqORZF1} compares the rate achieved with  RZF precoded transmission from the BS to the UEs for different SNR values, where the SNR at the $k$th  UE is defined as $\frac{p_k}{\nsigma{w}{2}}$. The value of $\alpha$ used in this simulation is evaluated using \eqref{eq:OptAlpha11}. We see that the achievable rate \eqref{eq:AchRate4b} for the RZF precoder based on the effective SINR in \eqref{eq:EffSinr21} concurs with the rate achieved in simulations for all the values of $M_{\mr{osc}}$ considered. Furthermore,  the SINR of the RZF precoder decreases with increasing $M_{\mr{osc}}$. Therefore, in terms of SINR degradation due to channel aging caused by phase noise, the CO setup  is more robust than the GO setup when $M_{\mr{osc}}\geq 2$. Also, the performance of the GO setup for $M_{\mr{osc}} = 5$ is close to that of the DO setup. 

In Fig. \ref{fig:DetEqZF1}, we compare the rate achieved when a ZF precoder is used for transmission from the BS to the UEs. We observe that our analytical results for the achievable rate using the SINR in \eqref{eq:DetEqSinrZf21} matches with those obtained by simulations for all the  considered values of $M_{\mr{osc}}$. As in the case of the RZF precoder, the SINR performance of the ZF precoder degrades with increasing $M_{\mr{osc}}$.

\begin{figure}[!t]
\begin{center}
\includegraphics[width = 3.5in, keepaspectratio=true]{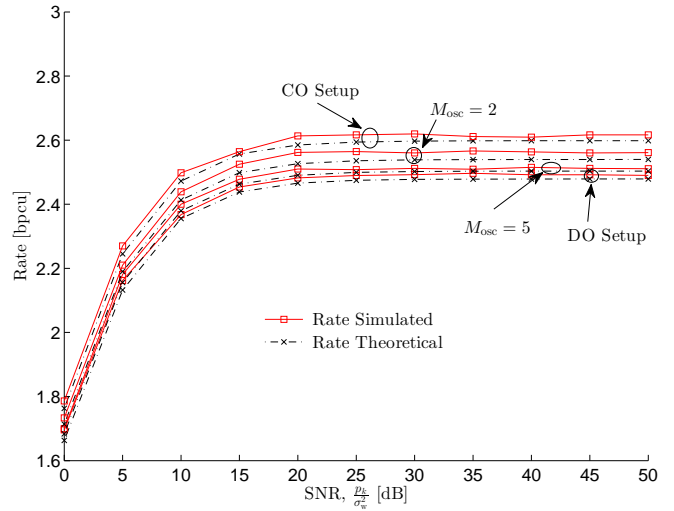}
\caption{$C(\mathsf{SINR}_k)$ for the MF precoder for $\beta = 5, M = 50,\nsigma{\phi}{} =\nsigma{\varphi}{}  = 6^{\circ}$, and $  q_0 = 0.9$.}
\label{fig:DetEqMF1}
\end{center}
\end{figure}

Fig. \ref{fig:DetEqMF1} compares the performance achieved using the MF precoder in simulations with the performance that is obtained using the SINR given  in \eqref{eq:DetEqSinrMf21}. The SINR performance of the CO setup is better than that of the GO setup when $M_{\mr{osc}}\geq 2$. Also, the gap in performance between the CO setup and the DO setup is smaller when compared to the case when the RZF and the ZF precoders are used. This is because  the interference power for the MF precoder does not depend on $\T{PN}$ or $q_0$ \eqref{eq:DetEqSinrMf21}. 

\subsection{Optimal $\alpha$ and Precoder Performance Comparison}
In  Fig. \ref{fig:OptAlph2}, $\tilde{\alpha}$, which is numerically determined, is compared with the analytical result given in \eqref{eq:OptAlpha11}. In our numerical simulations, we exhaustively search for $\tilde{\alpha}$, which maximizes the SINR in \eqref{eq:EffSinr21}, when  $\beta = 5, M = 50,\nsigma{\phi}{} =\nsigma{\varphi}{}  = 6^{\circ}$, and $q_0 = 0.9$. Clearly, the values of $\tilde{\alpha}$ obtained from \eqref{eq:OptAlpha11} agree with those obtained from simulations. Moreover, at high SNR, we observe that the optimal linear precoder is not the ZF precoder. Furthermore, since the effective CSI quality in the GO setup when $M_{\mr{osc}}\geq 2$, is lower than that in the CO setup, $\tilde{\alpha}$ is relatively larger for the former case.

\begin{figure}[!t]
\begin{center}
\includegraphics[width = 3.5in, keepaspectratio=true]{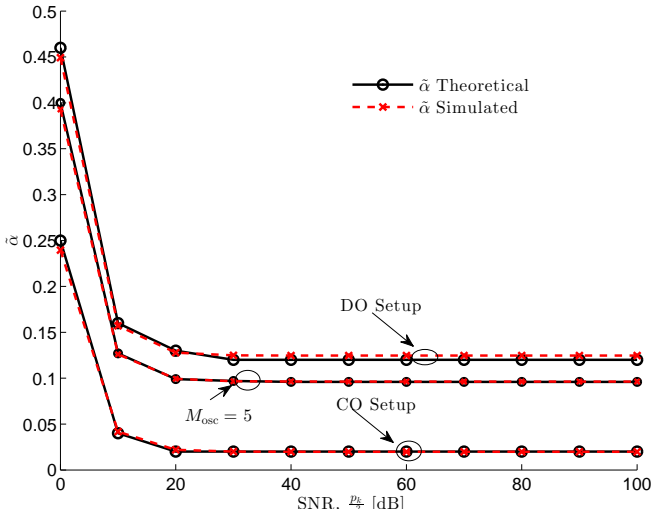}
\caption{$\tilde{\alpha}$ for $\beta = 5, M = 200,\nsigma{\phi}{2} =\nsigma{\varphi}{2}  = 6^{\circ}$, and $  q_0 = 0.9$.}
\label{fig:OptAlph2}
\end{center}
\end{figure}

\begin{figure}[!t]
\begin{center}
\includegraphics[width = 3.5in, keepaspectratio=true]{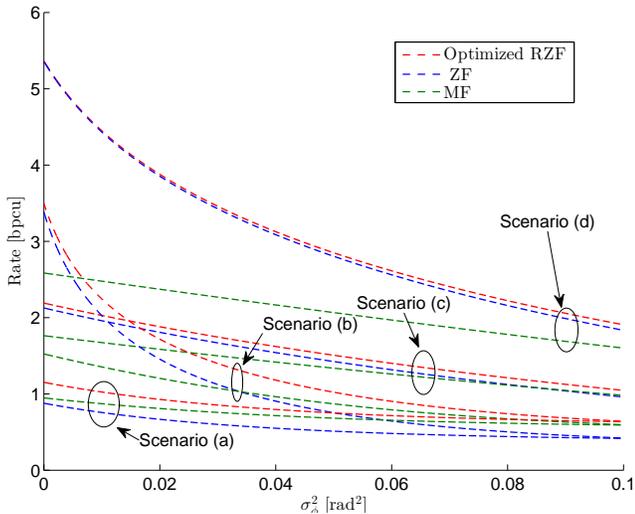}
\caption{$C(\mathsf{SINR}_k)$ of the different precoders in the GO setup for $M=50,M_{\mr{osc}} = 5, q_0 = 0.9$, and different $\nsigma{\phi}{2}$ values. The operating scenarios considered are the following: Scenario (a): $\mr{SNR}=0$ dB, $\beta = 2$. Scenario (b): $\mr{SNR}=20$ dB, $\beta = 2$.  Scenario (c): $\mr{SNR}=0$ dB, $\beta = 5$.  Scenario (d): $\mr{SNR}=20$dB, $\beta = 5$.}
\label{fig:RateDiffVarDO}
\end{center}
\end{figure}

We compare $C(\mathsf{SINR}_k)$ of the considered precoders  for different $\nsigma{\phi}{2} $ values and operating scenarios  in Fig. \ref{fig:RateDiffVarDO}. We set $M=50, M_{\mr{osc}}=5,$ and $q_0 = 0.9$ for all scenarios. In Scenario (a), we set $\beta = 2$ and $\mr{SNR}=0$ dB. In this scenario, the performance of the MF precoder is consistently better than that of the ZF precoder, and approaches the performance of the optimized RZF precoder as $\nsigma{\phi}{2} $ increases. As the SNR is increased to $20$ dB in Scenario (b),  the ZF precoder performs better than the MF precoder for low values of $\nsigma{\phi}{2}$. But for large  $\nsigma{\varphi}{2} $ values, the order is reversed.

In Scenario (c), we set the SNR to $0$ dB,  and increase $\beta$ to 5, thereby reducing the interference. In this scenario, the ZF precoder is seen to perform better than the MF precoder, except when the phase noise is severe. Finally, in Scenario (d), the SNR is increased to 20 dB, and the ZF precoder significantly outperforms the MF precoder for all $\nsigma{\varphi}{2} $ values considered. In summary, we conclude that the relative performance of the MF and ZF precoders depends on the operating scenario, which depends on the values of $M_{\mr{osc}}$, $M$, $q_0$, $\nsigma{\varphi}{2}$, SNR, and $\beta$.


\begin{figure}[!t]
\begin{center}
\includegraphics[width = 3.2in, keepaspectratio=true]{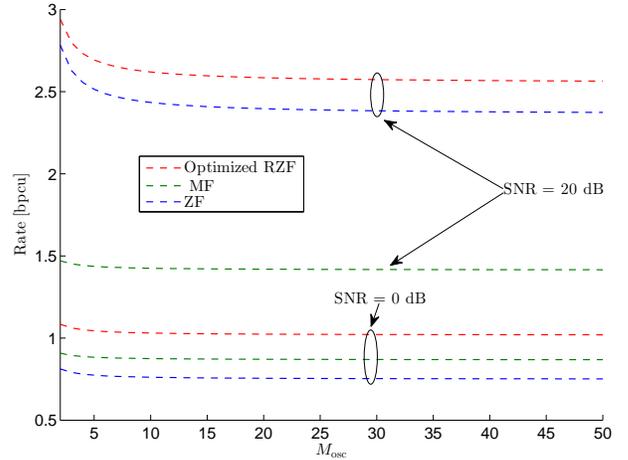}
\caption{$C(\mathsf{SINR}_k)$ of the different precoders in the GO setup, where $1 \leq M_{\mr{osc}}\leq M, M/M_{\mr{osc}} \in \mathbb{Z}^+$, for $M=50, \beta = 2, q_0 = 0.9$, $\nsigma{\phi}{}=\nsigma{\varphi}{} = 0.06^\circ$, $\tau = \T{c} = 0.25~\tr{ms}$  ($10^4$ symbol periods) \cite{Lte}, and different SNR values.  }
\label{fig:RateDiffMOsc}
\end{center}
\end{figure}

\subsection{Example Based on LTE Specifications}

We analyze an example based on long-term evolution (LTE) system specifications \cite{Marzetta10,Lte}, where we account for practical values of $\T{c}$, symbol time $\T{s}$, bandwidth $BW$, $\tau$, center frequency $f_{\mr{c}}$, doppler spread $f_{\mr{d}}$, $\nsigma{\phi}{}$, and $\nsigma{\varphi}{}$. We choose $\T{s} = 0.032\mu s$, $BW = 20$ MHz, $f_{\mr{c}} = 2$ GHz, and $f_{\mr{d}} = 1000$ Hz arising from a relative velocity of $500$ km/h between the BS and the UEs. Letting $\T{c} = 1/4f_{\mr{d}}$, the coherence time is $\T{c} = 0.25$ ms. We also consider that the time elapsed between the training and the data transmission phase is equal to the coherence time of the channel, i.e., $\tau=\T{c}=0.25$ ms \cite[pg. 99]{Lte}.

Next, we compute  $\nsigma{\phi}{2}$ and $\nsigma{\varphi}{2}$ based on a Si CMOS oscillator technology in \cite{AnalogDevices}. Specifically, we consider an oscillator, whose offset level at $90$ MHz is $-156$ dBc/Hz. This renders $\nsigma{\phi}{2}=\nsigma{\varphi}{2} = 10^{-6}$ rad$^2$, or  $\nsigma{\phi}{}=\nsigma{\varphi}{} = 0.06 {}^\circ$ using \cite[Eq. (4)]{Rajet14_1}, implying that high-quality oscillators are used at the BS and the UE. In Fig. \ref{fig:RateDiffMOsc}, we plot the performance of the precoders versus $M_{\mr{osc}}$ for $\tau=\T{c}=0.25$ ms ($10^4$ symbol periods), $q_0 = 0.9$, $\beta = 2$, and $M=50$ for different SNR values. At $\mr{SNR}=20$ dB, we observe that the performance of the RZF and ZF precoders decreases by around $0.3$ bits per channel use (bpcu), as $M_{\mr{osc}}$ increases, and for $M_{\mr{osc}}>10$ oscillators at the BS, the additional degradation in performance becomes negligible. One the other hand, the degradation in the performance of the MF precoder is negligible for all values of $M_{\mr{osc}}$ considered. Furthermore, at $\mr{SNR}=0$ dB, the performance degradation for all considered precoders is negligible as $M_{\mr{osc}}$ is increased from 1 to $M$.


\subsection{Rate Comparisons for CO and DO Setups}
\label{sec:COvsDO}

We compare the performance of the CO setup with that of the DO setup by computing the achievable rate as   $C(\mathsf{SINR}_k) = \min \{\mr{Rate~in~ \eqref{eq:AchRate2}}, \mr{Rate~in~ \eqref{eq:AchRate3}}\}$. 
We set $q_0 = 0.9$, $\nsigma{\phi}{}=\nsigma{\varphi}{} = 6^\circ$, and $\tau = K = 25$, and analyze the rate performance of the RZF and MF precoders for different values of $\beta$ in Figs. \ref{fig:RzfCOvsDO} and \ref{fig:MfCOvsDO}.

We first consider the optimized RZF precoder in Fig. \ref{fig:RzfCOvsDO}. For $\mr{SNR}=40$ dB, the performance of the CO setup is consistently better than that of the DO setup as the SINR used in  \eqref{eq:AchRate2} and \eqref{eq:AchRate3} is large, and exclusively determines the achievable rate. In particular, when $\beta$ is small, the rate in \eqref{eq:AchRate2} is a tighter upper bound, and the achievable rate is determined by the SINR term, which is relatively larger for the CO setup. As $\beta$ increases, the rate in \eqref{eq:AchRate3} becomes a tighter upper bound, and the SINR term is much larger than the differential entropy term. Hence, the CO setup still performs better. Now, consider the low SNR scenario ($\mr{SNR}=0$ dB). Here, once again, for small $\beta$, the rate in \eqref{eq:AchRate2} is a tighter upper bound than that in \eqref{eq:AchRate3}, and depends on the SINR alone. Therefore, the CO setup performs better. However, as $\beta$ increases, the rate in \eqref{eq:AchRate3} becomes a tighter upper bound, and the differential entropy term becomes significant compared to the SINR term. In particular, the DO setup has a lower differential entropy rate as it is only impaired by the phase noise at the UE. Consequently, the DO setup performs better.

A similar performance order is observed  in Fig. \ref{fig:MfCOvsDO} for the MF precoder when $\mr{SNR}=0$ dB. For $\mr{SNR}=40$ dB and low $\beta$ values, as before, the achievable rate depends on the SINR alone as the rate in \eqref{eq:AchRate2} is a tighter upper bound. However, as $\beta$ increases, the rate in \eqref{eq:AchRate3} becomes a tighter upper bound. Unlike in the case of the RZF precoder, now  the differential entropy term is significant compared to the SINR term in \eqref{eq:AchRate3}. This is because the SINR of the MF precoder is significantly lower than that of the RZF precoder. Moreover, the difference in the SINR for the CO and the DO setups is not as significant as when RZF precoding is used, since the interference power for the MF precoder is the same for both the setups \eqref{eq:DetEqSinrMf21}. Therefore, the DO setup performs better than the CO setup.
\begin{figure}[!t]
\begin{center}
\includegraphics[width = 3.25in, keepaspectratio=true]{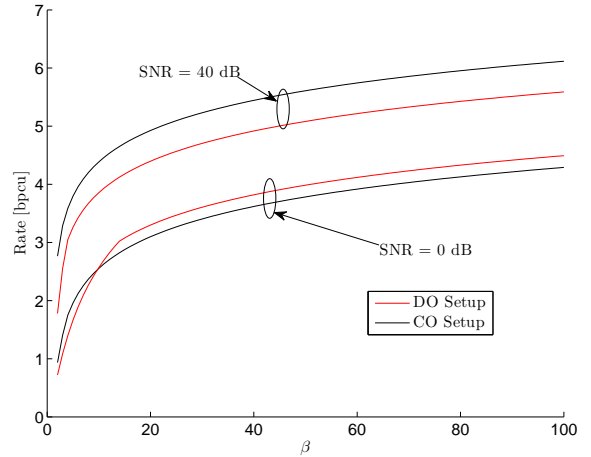}
\caption{$C(\mathsf{SINR}_k) = \min \{\mr{Rate~in~ \eqref{eq:AchRate2}}, \mr{Rate~in~ \eqref{eq:AchRate3}}\}$ of the optimized RZF precoder for the CO and DO setups, where $q_0 = 0.9$, $\nsigma{\phi}{}=\nsigma{\varphi}{} = 6^\circ$, and $\tau = K =25$.  }
\label{fig:RzfCOvsDO}
\end{center}
\end{figure}

\begin{figure}[!t]
\begin{center}
\includegraphics[width = 3.25in, keepaspectratio=true]{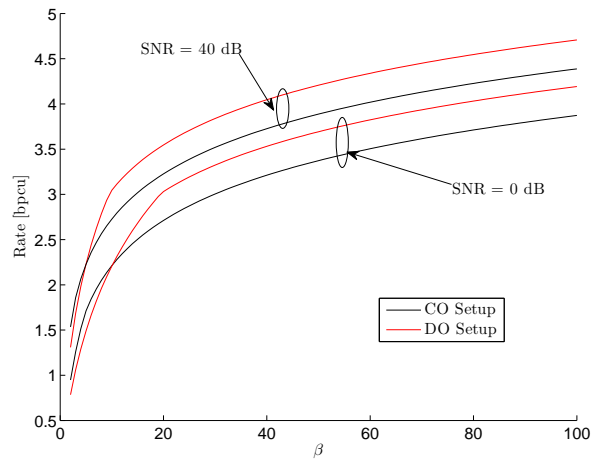}
\caption{$C(\mathsf{SINR}_k) = \min \{\mr{Rate~in~ \eqref{eq:AchRate2}}, \mr{Rate~in~ \eqref{eq:AchRate3}}\}$ of the MF precoder for the CO and DO setups, where $q_0 = 0.9$, $\nsigma{\phi}{}=\nsigma{\varphi}{} = 6^\circ$, and $\tau = K =25$.  }
\label{fig:MfCOvsDO}
\end{center}
\end{figure}

\section{Conclusions}
\label{sec:Conc}

In this work, we derived the effective SINR of the RZF, ZF, and MF precoders  for the GO setup  in the presence of phase noise. We showed that the effect of phase noise on the SINR can be expressed as an effective reduction in the CSI quality available at the BS. Importantly,  the SINR performance of all considered precoders degrades as the number of oscillators, $M_{\mr{osc}}$, increases. This is because as  $M_{\mr{osc}}$ increases, the desired signal power decreases, and the interference power increases. However, for the MF precoder, the interference power is almost independent of the phase noise. Furthermore, we showed that the variance of the  random phase variations caused by the BS oscillators reduces with increasing $M_{\mr{osc}}$. By simulations, we demonstrated that the SINR approximations obtained are tight, and agree with those obtained from simulations with remarkable accuracy for interesting, and practical values of $M$ and $K$.

We showed that the optimized RZF has superior SINR performance compared to the ZF and the MF precoders for all considered scenarios. Moreover, the ZF precoder performs better than the MF precoder when the CSI available at the BS is reliable, the phase noise at the BS is not severe, and $\beta$ is large. However, for low SNR,  severe phase noise at the BS, and small $\beta$, the MF performs better. Finally, we observed that for all considered precoders, the CO setup has a higher achievable rate than the DO setup when $\beta$ is small, while the DO setup outperforms the CO setup when the SNR at the UE is low and $\beta$ is large.


\section*{APPENDIX A}
\section*{Important Results from Literature}
\label{sec:app_a}

\begin{definition}[{Stieltjes Transform \cite[Section 2.2]{VerduRMT}}]
 \label{definition:stieltjes}
Let $X$ be a real-valued RV with distribution $F$. The the Stieltjes transform $m(z)$ of $F$, for $z\in \mathbb{C}$ such that $\Im \{z\} > 0$,  is defined as
\begin{IEEEeqnarray}{rCl} \label{eq:Stiel}
m(z) = \mathbb{E}\squarel  \frac{1}{X - z} \squarer =  \int_{-\infty}^{\infty}  \frac{1}{\lambda - z} dF(\lambda).
\end{IEEEeqnarray}
\end{definition}

\begin{lem}
 \label{lem:stieltjes2}
\emph{Stieltjes Transform of the Marchenko-Pastur Law  \cite[Eqs. (1.12, 2.43)]{VerduRMT}:} Let $\bH{} \in \mathbb{C}^{K \times M}$, with entries that are zero-mean i.i.d. RVs  with variance $1/M$.  Then the empirical distribution of the eigenvalues of $\bH{}^{\mr{H}}\bH{}$ converges to the Marchenko-Pastur law almost surely as $M,K \rightarrow \infty$, with $ M/K = \beta$. Furthermore, the Stieltjes transform $m(z)$, with complex argument $z$ such that $\Im \{z\} > 0$,  of the Marchenko-Pastur density law is defined as
\begin{IEEEeqnarray}{rCl} \label{eq:Stiel2}
m(z) = \frac{ 1 - z\beta - \beta \pm \sqrt{\beta^2z^2 - 2(\beta + 1)z\beta + (1 - \beta)^2} }{2z\beta} \IEEEeqnarraynumspace
\end{IEEEeqnarray}
\end{lem}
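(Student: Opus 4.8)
The plan is to establish the lemma by the \emph{resolvent} (Stieltjes-transform) method, since this route delivers the explicit expression \eqref{eq:Stiel2} directly as the root of a self-consistent equation. Set $\mathbf{W} = \bH{}^{\mr{H}}\bH{} \in \mathbb{C}^{M \times M}$ and, for $\Im\{z\} > 0$, let $m_M(z) = \frac{1}{M}\mr{tr}\,(\mathbf{W} - z\mathbf{I}_M)^{-1}$ be the Stieltjes transform of its empirical eigenvalue distribution. First I would write $\mathbf{W}$ as a sum of $K$ rank-one terms, $\mathbf{W} = \sum_{k=1}^{K} \mathbf{a}_k \mathbf{a}_k^{\mr{H}}$, where $\mathbf{a}_k \in \mathbb{C}^{M}$ collects the conjugated entries of the $k$th row of $\bH{}$ and hence has i.i.d. entries of variance $1/M$. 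An alternative would be the moment method, matching the moments of the empirical distribution to the free-Poisson (Narayana) moments of the Marchenko--Pastur law and invoking measure determinacy, but that route does not produce the closed form directly.

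The heart of the argument is the derivation of the fixed-point equation satisfied by the limit of $m_M(z)$. Writing $\mathbf{R} = (\mathbf{W} - z\mathbf{I}_M)^{-1}$, the identity $\mathbf{W}\mathbf{R} = \mathbf{I}_M + z\mathbf{R}$ gives $\mr{tr}(\mathbf{W}\mathbf{R}) = M\,(1 + z\,m_M(z))$, while directly $\mr{tr}(\mathbf{W}\mathbf{R}) = \sum_{k=1}^{K} \mathbf{a}_k^{\mr{H}}\mathbf{R}\mathbf{a}_k$. Introducing the leave-one-out resolvent $\mathbf{R}_{(k)} = (\mathbf{W} - \mathbf{a}_k\mathbf{a}_k^{\mr{H}} - z\mathbf{I}_M)^{-1}$ and applying the Sherman--Morrison identity yields
\begin{IEEEeqnarray}{rCl}
\mathbf{a}_k^{\mr{H}}\mathbf{R}\mathbf{a}_k &=& \frac{\mathbf{a}_k^{\mr{H}}\mathbf{R}_{(k)}\mathbf{a}_k}{1 + \mathbf{a}_k^{\mr{H}}\mathbf{R}_{(k)}\mathbf{a}_k}. \nonumber
\end{IEEEeqnarray}
I would then invoke the quadratic-form (trace) concentration lemma, $\mathbf{a}_k^{\mr{H}}\mathbf{R}_{(k)}\mathbf{a}_k - \frac{1}{M}\mr{tr}\,\mathbf{R}_{(k)} \to 0$, together with the rank-one perturbation bound $|\frac{1}{M}\mr{tr}\,\mathbf{R}_{(k)} - m_M(z)| = O(1/M)$, so that each summand concentrates on $m_M/(1 + m_M)$. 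Equating the two expressions for $\mr{tr}(\mathbf{W}\mathbf{R})$, dividing by $M$, and using $K/M = 1/\beta$ gives, in the limit $m_M(z) \to m(z)$, the relation $(1 + z m)(1 + m) = m/\beta$, equivalently the quadratic $\beta z\, m^2 + (\beta z + \beta - 1)\, m + \beta = 0$.

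Solving this quadratic for $m$ reproduces \eqref{eq:Stiel2} verbatim: its numerator is $1 - z\beta - \beta \pm \sqrt{\,\cdot\,}$ and its discriminant collapses to $\beta^2 z^2 - 2(\beta + 1)z\beta + (1-\beta)^2$. The sign of the square root is fixed by demanding that $m(z)$ be an admissible Stieltjes transform, i.e. $\Im\{m(z)\} > 0$ whenever $\Im\{z\} > 0$, $m(z) \sim -1/z$ as $|z| \to \infty$, and, for $\beta > 1$, the correct mass $1 - 1/\beta$ of the atom at the origin (reflecting the $M - K$ null eigenvalues of $\mathbf{W}$) as $z \to 0$. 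The convergence of the empirical spectral distribution to the Marchenko--Pastur law then follows from the Stieltjes continuity theorem once the almost-sure convergence $m_M(z) \to m(z)$ is in hand, and the density is recovered through the inversion formula $f(\lambda) = \frac{1}{\pi}\lim_{\epsilon \downarrow 0} \Im\{m(\lambda + \jmath\epsilon)\}$.

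The main obstacle is not the algebra, which is routine, but the probabilistic control that upgrades the heuristic fixed-point calculation to an almost-sure statement. Two ingredients carry the weight: (i) the quadratic-form concentration estimate for $\mathbf{a}_k^{\mr{H}}\mathbf{R}_{(k)}\mathbf{a}_k$, which exploits the independence of $\mathbf{a}_k$ from $\mathbf{R}_{(k)}$ and finite moment bounds on the entries; and (ii) a martingale-difference concentration argument (revealing the rows of $\bH{}$ one at a time) showing $m_M(z) - \Ebb\{m_M(z)\} \to 0$ almost surely, uniformly on compact subsets of the upper half-plane. Ensuring these error terms vanish uniformly, and matching the atom at the origin for $\beta \geq 1$, are the delicate points; they are precisely the estimates supplied by the Bai--Silverstein machinery underlying \cite{VerduRMT}.
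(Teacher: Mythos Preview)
The paper does not prove this lemma at all: it is listed in Appendix~A under ``Important Results from Literature'' and is simply cited from \cite[Eqs.~(1.12, 2.43)]{VerduRMT} without argument. Your resolvent-method derivation is the standard Bai--Silverstein route to the Marchenko--Pastur law and is correct in outline; the fixed-point equation $(1+zm)(1+m)=m/\beta$ you obtain indeed reduces to the quadratic $\beta z\,m^{2}+(\beta z+\beta-1)m+\beta=0$, whose solution reproduces \eqref{eq:Stiel2} with the stated discriminant.

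Since the paper offers no proof to compare against, the only comment is that your sketch is more than the paper provides. The ingredients you flag---the quadratic-form concentration (Lemma~\ref{lem:TrLem} in the paper's Appendix~A), the rank-one perturbation bound (Lemma~\ref{lem:R1PLem}), and the martingale/Borel--Cantelli step for almost-sure convergence---are exactly those that carry the argument in the Bai--Silverstein framework underlying the cited reference, so your plan is both sound and aligned with the source the paper defers to.
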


\begin{lem}[{Matrix Inversion Lemma  \cite[Eq. (2.2)]{Silverstein95}}]
 \label{lem:MatInvLem}
For an invertible matrix $\bU{} \in \mathbb{C}^{M\times M}$, $\bh{} \in \mathbb{C}^{M\times 1}$, and $q \in \mathbb{C}$, where $\bU{} + q\bh{}\bh{}^{\mr{H}}$ is invertible,
\begin{IEEEeqnarray}{rCl} \label{eq:MIL}
\bh{}^{\mr{H}}(\bU{} + q\bh{}\bh{}^{\mr{H}})^{\minus 1} = \frac{\bh{}^{\mr{H}} \bU{}^{\minus 1}}{1 + q\bh{}^{\mr{H}}\bU{}^{\minus 1}\bh{} },
\end{IEEEeqnarray}
since $\bh{}^{\mr{H}} \bU{}^{\minus 1}(\bU{} + q\bh{}\bh{}^{\mr{H}}) =  (1 + q\bh{}^{\mr{H}}\bU{}^{\minus 1}\bh{})\bh{}^{\mr{H}}$.
\end{lem}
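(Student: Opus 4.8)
The plan is to verify the stated identity directly, exactly along the lines suggested by the trailing ``since'' clause, because the claim is nothing more than the Sherman--Morrison rank-one update specialized to the row vector $\bh{}^{\mr{H}}$. The idea is that, since both $\bU{}$ and $\bU{} + q\bh{}\bh{}^{\mr{H}}$ are assumed invertible, proving the identity is equivalent to showing that the proposed right-hand side, when right-multiplied by $\bU{} + q\bh{}\bh{}^{\mr{H}}$, collapses back to $\bh{}^{\mr{H}}$. So I would avoid inverting any rank-one-perturbed matrix explicitly and instead reduce everything to a single scalar bookkeeping identity.

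Concretely, the first step is to expand $\bh{}^{\mr{H}} \bU{}^{\minus 1}\parl \bU{} + q\bh{}\bh{}^{\mr{H}} \parr = \bh{}^{\mr{H}} + q\parl \bh{}^{\mr{H}}\bU{}^{\minus 1}\bh{} \parr \bh{}^{\mr{H}}$, where I use $\bU{}^{\minus 1}\bU{} = \mathbf{I}_M$ together with the key observation that $\bh{}^{\mr{H}}\bU{}^{\minus 1}\bh{} \in \mathbb{C}$ is a scalar and may be pulled to the front. Factoring $\bh{}^{\mr{H}}$ out on the right then yields $\parl 1 + q\bh{}^{\mr{H}}\bU{}^{\minus 1}\bh{} \parr \bh{}^{\mr{H}}$, which is precisely the relation asserted in the ``since'' clause of the statement.

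The second step is to turn this into the claimed form: I would divide through by the scalar $1 + q\bh{}^{\mr{H}}\bU{}^{\minus 1}\bh{}$ and right-multiply both sides by $\parl \bU{} + q\bh{}\bh{}^{\mr{H}} \parr^{\minus 1}$. This isolates $\bh{}^{\mr{H}}\bU{}^{\minus 1}\big/\parl 1 + q\bh{}^{\mr{H}}\bU{}^{\minus 1}\bh{} \parr$ on one side and $\bh{}^{\mr{H}}\parl \bU{} + q\bh{}\bh{}^{\mr{H}} \parr^{\minus 1}$ on the other, which is exactly \eqref{eq:MIL}.

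The only point requiring care, and hence the nominal main obstacle, is justifying that the scalar $1 + q\bh{}^{\mr{H}}\bU{}^{\minus 1}\bh{}$ is nonzero, so that the division above is legitimate. I would obtain this from the matrix determinant lemma, $\det\parl \bU{} + q\bh{}\bh{}^{\mr{H}} \parr = \parl 1 + q\bh{}^{\mr{H}}\bU{}^{\minus 1}\bh{} \parr \det\parl \bU{} \parr$: since both $\bU{}$ and $\bU{} + q\bh{}\bh{}^{\mr{H}}$ are invertible by hypothesis, both determinants on the two sides are nonzero, which forces the scalar factor to be nonzero as well. Beyond this scalar check, the argument is a one-line algebraic manipulation and presents no genuine difficulty.
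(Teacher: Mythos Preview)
Your proposal is correct and follows precisely the approach indicated by the paper's own ``since'' clause, which is the entirety of the paper's justification for this lemma. Your additional care in invoking the matrix determinant lemma to ensure $1 + q\bh{}^{\mr{H}}\bU{}^{\minus 1}\bh{} \neq 0$ is a nice rigor point that the paper leaves implicit.
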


\begin{lem}[{Resolvent Identity \cite[Lemma 2]{Debbah12}}]
 \label{lem:ResIdn}
Given two invertible matrices $\bU{}$ and $\bV{}$ of size $M \times M$,
\begin{IEEEeqnarray}{rCl} \label{eq:ResIdn}
\bU{}^{\minus 1} - \bV{}^{\minus 1} = -\bU{}^{\minus 1}(\bU{} - \bV{})\bV{}^{\minus 1}
\end{IEEEeqnarray}
holds.
\end{lem}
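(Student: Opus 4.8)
The plan is to verify the identity by a direct algebraic manipulation, since both $\bU{}$ and $\bV{}$ are assumed invertible and the claimed equality is purely formal. First I would start from the right-hand side and expand the product using distributivity of matrix multiplication:
\begin{IEEEeqnarray}{rCl}
-\bU{}^{\minus 1}(\bU{} - \bV{})\bV{}^{\minus 1} &=& -\bU{}^{\minus 1}\bU{}\bV{}^{\minus 1} + \bU{}^{\minus 1}\bV{}\bV{}^{\minus 1}. \nonumber
\end{IEEEeqnarray}
The key step is then to apply the two-sided cancellations $\bU{}^{\minus 1}\bU{} = \mathbf{I}_M$ and $\bV{}\bV{}^{\minus 1} = \mathbf{I}_M$, which hold precisely because $\bU{}$ and $\bV{}$ are invertible. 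This collapses the right-hand side to $-\bV{}^{\minus 1} + \bU{}^{\minus 1}$, which is exactly the left-hand side, completing the verification.

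Equivalently, I would run the argument in the reverse direction: insert $\mathbf{I}_M = \bV{}\bV{}^{\minus 1}$ into the first term and $\mathbf{I}_M = \bU{}^{\minus 1}\bU{}$ into the second term of $\bU{}^{\minus 1} - \bV{}^{\minus 1}$, and then factor out $\bU{}^{\minus 1}$ on the left and $\bV{}^{\minus 1}$ on the right to recover $\bU{}^{\minus 1}(\bV{} - \bU{})\bV{}^{\minus 1}$. I do not anticipate any genuine obstacle here: the only hypothesis invoked is invertibility of both matrices, and nothing beyond the associativity and distributivity of matrix multiplication together with the existence of two-sided inverses is needed. Consequently the identity holds verbatim over any (possibly noncommutative) ring of matrices, which is exactly the generality in which it is later applied within the random-matrix-theory manipulations of Appendices B and C.
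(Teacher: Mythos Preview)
Your verification is correct; the identity follows immediately from distributivity and the defining property of inverses, exactly as you wrote. The paper itself does not supply a proof of this lemma at all---it merely quotes the statement from \cite[Lemma~2]{Debbah12} as a standard tool in Appendix~A---so there is nothing further to compare against.
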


\begin{lem}[{Trace Lemma \cite[Lemma 4]{Debbah12}}] 
 \label{lem:TrLem}
Let $\bx{}, \bw{} \sim \Ccal \Ncal (0,\frac{1}{M}\mathbf{I}_{M})$ be mutually independent vectors of length $M$, and also independent of $\bA{} \in \mathbb{C}^{M \times M}$, which has a uniformly bounded spectral norm for all $M$. Then
\begin{IEEEeqnarray}{rCl} \label{eq:TrLem}
\bx{}^{\mr{H}} \bA{} \bx{} - \frac{1}{M} \mr{tr} \bA{} \overset{M \rightarrow \infty}{\longrightarrow} 0, \bx{}^{\mr{H}} \bA{} \bw{}  \overset{M \rightarrow \infty}{\longrightarrow} 0.\IEEEeqnarraynumspace
\end{IEEEeqnarray}
\end{lem}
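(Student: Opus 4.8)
\emph{Proof proposal.} The plan is to establish each of the two almost-sure limits by a first-moment computation, a high-order concentration bound, and a Borel--Cantelli argument. Since $\bA{}$ is independent of $\bx{}$ and $\bw{}$, I would condition on $\bA{}$ and treat it as a fixed matrix with $\lVert \bA{} \rVert \le C$ uniformly in $M$; because the bound below is uniform over all such $\bA{}$, the almost-sure statement passes back to random $\bA{}$ by Fubini. Writing $\bx{} = \rtonhalf{M}\, \mathbf{z}$ with $\mathbf{z} \sim \Ccal\Ncal(0,\mathbf{I}_M)$ normalizes the problem to a unit-covariance Gaussian.

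For the quadratic form, I would first record that $\Ebb\curll \bx{}^{\mr{H}} \bA{} \bx{} \curlr = \frac{1}{M}\mr{tr}\,\bA{}$, so the centered quantity has zero mean. The key step is the moment bound
\begin{IEEEeqnarray}{rCl}
\Ebb \absl \bx{}^{\mr{H}} \bA{} \bx{} - \tfrac{1}{M}\mr{tr}\,\bA{} \absr^{p} &\le& \frac{C_p \lVert \bA{}\rVert^{p}}{M^{p/2}}, \nonumber
\end{IEEEeqnarray}
for even $p \ge 2$, which is the Gaussian case of the Bai--Silverstein inequality and follows by expanding the $p$th power and retaining only the paired index tuples that survive after centering. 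Markov's inequality then gives $\mathbb{P}(\absl \cdot \absr > \epsilon) \le C_p \lVert \bA{}\rVert^{p} \epsilon^{-p} M^{-p/2}$; choosing $p \ge 4$ makes the right-hand side summable over $M$, and Borel--Cantelli yields the almost-sure limit.

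For the bilinear form, independence and zero mean give $\Ebb\curll \bx{}^{\mr{H}} \bA{} \bw{} \curlr = 0$. Taking the expectation over $\bw{}$ first, with $\Ebb\curll \bw{} \bw{}^{\mr{H}} \curlr = \frac{1}{M}\mathbf{I}_M$, yields $\Ebb \absl \bx{}^{\mr{H}} \bA{} \bw{} \absr^{2} = \frac{1}{M^{2}}\mr{tr}\parl \bA{}\bA{}^{\mr{H}} \parr \le \lVert \bA{}\rVert^{2}/M$, and the analogous higher-moment estimate $\Ebb \absl \bx{}^{\mr{H}} \bA{} \bw{} \absr^{p} \le C_p \lVert \bA{}\rVert^{p} M^{-p/2}$ follows from the same pairing argument applied to the two independent Gaussian vectors. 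The identical Markov-plus-Borel--Cantelli step then delivers $\bx{}^{\mr{H}} \bA{} \bw{} \to 0$ almost surely.

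I expect the combinatorial moment estimate to be the main obstacle: controlling how many index tuples survive after centering and verifying that the leading contribution indeed scales as $M^{-p/2}$ rather than more slowly. A cleaner route in this Gaussian setting is to invoke the Hanson--Wright inequality directly, which after absorbing $\lVert \bA{}\rVert$ and using $\lVert \bA{}\rVert_{\mr{F}}^{2} \le M\lVert \bA{}\rVert^{2}$ furnishes a sub-exponential tail $\mathbb{P}(\absl \cdot \absr > \epsilon) \le 2\exp(-cM\min(\epsilon^{2},\epsilon))$; this is trivially summable, covers the bilinear case by stacking $\bx{}$ and $\bw{}$ into a single Gaussian vector with an off-diagonal block, and removes the explicit moment bookkeeping entirely.
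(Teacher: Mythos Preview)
Your argument is correct and follows the standard route to this result: the high-moment bound for centered quadratic forms (the Bai--Silverstein inequality) together with Markov and Borel--Cantelli, with Hanson--Wright as a cleaner Gaussian-specific alternative. There is no gap here.

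However, there is nothing to compare against: the paper does not prove this lemma. It is stated in Appendix~A under ``Important Results from Literature'' and is simply quoted from \cite[Lemma~4]{Debbah12} (which in turn traces back to Bai--Silverstein). The paper uses it as a black box throughout Appendices~B and~C. So your proposal supplies a proof where the paper deliberately omits one; either of your two routes---the explicit $p$th-moment pairing estimate or the direct Hanson--Wright tail bound---would serve, and the Hanson--Wright version is indeed the tidier one in this complex-Gaussian setting.
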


\begin{lem}
 \label{lem:stiel}
Let $\bH{} \in \mathbb{C}^{K \times M}$, $M,K \rightarrow \infty$ with $ M/K = \beta$, whose entries are zero-mean i.i.d. Gaussian RVs with variance $1/M$, and define $\bA{} \triangleq \frac{1}{M}\bH{}^{\mr{H}}\bH{} + \alpha \mathbf{I}_M$. Then
\begin{IEEEeqnarray}{rCl} \label{eq:R1PTrLem}
\frac{1}{M} {\mr{tr} \bA{}^{\minus 1} } -  m(-\alpha)  \longrightarrow  0, \frac{1}{M} {\mr{tr} \bA{}^{-2} } -  m'(-\alpha) \longrightarrow  0 \IEEEeqnarraynumspace
\end{IEEEeqnarray}
\end{lem}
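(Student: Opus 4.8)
The plan is to recognize both traces as integrals of fixed test functions against the empirical spectral distribution (ESD) of the Gram matrix, and then to pass to the limit using the almost-sure weak convergence of that ESD to the Marchenko--Pastur law supplied by Lemma~\ref{lem:stieltjes2}. Writing $\lambda_1,\ldots,\lambda_M \ge 0$ for the eigenvalues of $\frac{1}{M}\bH{}^{\mr{H}}\bH{}$ and $F_M$ for their empirical distribution, the spectral decomposition of $\bA{} = \frac{1}{M}\bH{}^{\mr{H}}\bH{} + \alpha\mathbf{I}_M$ immediately gives
\begin{IEEEeqnarray}{rCl}
\frac{1}{M}\mr{tr}\,\bA{}^{\minus 1} = \int \frac{dF_M(\lambda)}{\lambda + \alpha}, \qquad \frac{1}{M}\mr{tr}\,\bA{}^{-2} = \int \frac{dF_M(\lambda)}{(\lambda + \alpha)^2}. \nonumber
\end{IEEEeqnarray}
The first integral is exactly the Stieltjes transform of $F_M$ evaluated at the real point $z=-\alpha$, and the second is the integral obtained from the kernel $1/(\lambda-z)^2$, i.e. the formal derivative of the first, again at $z=-\alpha$.

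Next I would invoke Lemma~\ref{lem:stieltjes2}, by which $F_M$ converges almost surely (weakly) to the Marchenko--Pastur law $F$ whose Stieltjes transform $m(\cdot)$ is given in \eqref{eq:Stiel2}. Because $\alpha>0$ and every eigenvalue of the positive-semidefinite matrix $\frac{1}{M}\bH{}^{\mr{H}}\bH{}$ lies in $[0,\infty)$, the point $-\alpha$ is bounded away from the support of $F_M$ and of $F$. Hence the two test functions $g_1(\lambda)=1/(\lambda+\alpha)$ and $g_2(\lambda)=1/(\lambda+\alpha)^2$ are bounded and continuous on $[0,\infty)$ (and vanish at infinity), so the portmanteau theorem applies on the probability-one event on which $F_M$ converges, yielding
\begin{IEEEeqnarray}{rCl}
\int g_1\,dF_M \longrightarrow \int g_1\,dF, \qquad \int g_2\,dF_M \longrightarrow \int g_2\,dF \nonumber
\end{IEEEeqnarray}
almost surely. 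Finally I would identify the limits: by Definition~\ref{definition:stieltjes}, $\int g_1\,dF = m(-\alpha)$; and differentiating \eqref{eq:Stiel} under the integral sign, which is legitimate since $1/(\lambda-z)^2$ is bounded and continuous uniformly for $z$ in a neighbourhood of $-\alpha$, gives $m'(-\alpha) = \int g_2\,dF$. Substituting these into the two displayed limits establishes both assertions of the lemma.

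The main obstacle is the purely technical point that weak convergence only transfers to integrals of \emph{bounded} continuous functions, so one must rule out any escape of spectral mass toward the singularity of $g_1$ or $g_2$. This is settled by positive-semidefiniteness, which confines all eigenvalues to $[0,\infty)$ and keeps $-\alpha<0$ strictly separated from the support, so that $g_1,g_2$ are genuinely bounded and continuous there; were $\alpha$ allowed to approach the support this step would break down. For the second limit an alternative route is to observe that $\frac{1}{M}\mr{tr}\,\bA{}^{-2}=m_M'(-\alpha)$ for the Stieltjes transform $m_M$ of $F_M$ and to apply Vitali's theorem---pointwise convergence of the uniformly bounded analytic functions $m_M$ off the support forces convergence of their derivatives---but this requires verifying local uniform boundedness, which the direct integral argument above avoids entirely.
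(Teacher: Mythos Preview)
Your argument is correct and is exactly the standard route to this statement: write the normalized traces as integrals of the bounded continuous test functions $\lambda\mapsto(\lambda+\alpha)^{-1}$ and $\lambda\mapsto(\lambda+\alpha)^{-2}$ against the empirical spectral distribution of $\frac{1}{M}\bH{}^{\mr{H}}\bH{}$, invoke the almost-sure weak convergence to the Marchenko--Pastur law (Lemma~\ref{lem:stieltjes2}), and identify the limits as $m(-\alpha)$ and $m'(-\alpha)$ via Definition~\ref{definition:stieltjes} and differentiation under the integral sign. Your observation that $\alpha>0$ keeps the evaluation point $-\alpha$ strictly separated from the nonnegative spectrum is precisely what makes both test functions bounded and continuous on the relevant domain, so the portmanteau step is justified.

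There is, however, nothing to compare your argument against: the paper does not prove Lemma~\ref{lem:stiel}. It is placed in Appendix~A under ``Important Results from Literature'' and is simply \emph{stated}, alongside Definition~\ref{definition:stieltjes} and Lemmas~\ref{lem:stieltjes2}--\ref{lem:FrPr}, as background imported from standard random-matrix references (notably \cite{VerduRMT}). So your proposal is not an alternative to the paper's proof---it is a self-contained justification of a fact the authors treat as known. One small cosmetic point: Lemma~\ref{lem:stieltjes2} as stated in the paper gives $m(z)$ only for $\Im\{z\}>0$, so you might add a clause that $m$ extends analytically to $\mathbb{C}\setminus\mathrm{supp}(F)$, in particular to the negative real axis, which is what licenses evaluating $m$ and $m'$ at the real point $-\alpha$; your boundedness discussion already implicitly covers this.
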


\begin{lem}[{Rank-1 Perturbation Lemma \cite[Lemma 2.6]{Silverstein95}}]
 \label{lem:R1PLem}
Let $\zeta>0$, $\bU{}, ~\tr{and}~ \bA{} \in \mathbb{C}^{M \times M}$ with $\bU{}$ being nonnegative Hermitian, $\bh{} \in \mathbb{C}^{M\times 1}$, and $q \in \mathbb{R}$. Then,
\begin{IEEEeqnarray}{rCl} \label{eq:R1PLem}
|\mr{tr} \bA{} \squarel  (\bU{} + \zeta \mathbf{I}_M + q\bh{}\bh{}^{\mr{H}} )^{\minus 1} - (\bU{} + \zeta \mathbf{I}_M)^{\minus 1}  \squarer |  \leq \frac{|| \bA{} ||}{\zeta}. \IEEEeqnarraynumspace
\end{IEEEeqnarray}
\end{lem}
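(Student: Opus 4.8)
The plan is to collapse the matrix trace into a scalar quadratic form and then control it by the operator norm of $\bA{}$ together with the positive definiteness of the unperturbed matrix. Write $\mathbf{B} \triangleq \bU{} + \zeta\mathbf{I}_M$ and $\mathbf{B}_q \triangleq \mathbf{B} + q\bh{}\bh{}^{\mr{H}}$. Since $\bU{}$ is nonnegative Hermitian and $\zeta > 0$, the matrix $\mathbf{B}$ is Hermitian with $\mathbf{B} \succeq \zeta\mathbf{I}_M$, so $\mathbf{B}^{\minus 1}$ exists and $\lVert \mathbf{B}^{\minus 1} \rVert \leq 1/\zeta$, while $\mathbf{B}_q$ is invertible by hypothesis. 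The quantity to be bounded is $\mr{tr}\, \bA{}\parl \mathbf{B}_q^{\minus 1} - \mathbf{B}^{\minus 1} \parr$.

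First I would invoke the Resolvent Identity (Lemma \ref{lem:ResIdn}) with the roles $\bU{}\to\mathbf{B}_q$, $\bV{}\to\mathbf{B}$, and use $\mathbf{B}_q - \mathbf{B} = q\bh{}\bh{}^{\mr{H}}$ to obtain
\[
\mathbf{B}_q^{\minus 1} - \mathbf{B}^{\minus 1} = -q\,\mathbf{B}_q^{\minus 1}\bh{}\bh{}^{\mr{H}}\mathbf{B}^{\minus 1}.
\]
Left-multiplying by $\bA{}$, taking the trace, and exploiting its cyclic invariance (the inner factor $\bh{}\bh{}^{\mr{H}}$ is rank one) reduces the trace to the scalar $-q\,\bh{}^{\mr{H}}\mathbf{B}^{\minus 1}\bA{}\mathbf{B}_q^{\minus 1}\bh{}$.

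Next I would remove the perturbed resolvent from this expression using the Matrix Inversion Lemma (Lemma \ref{lem:MatInvLem}): taking the conjugate transpose of its statement with $\bU{}\to\mathbf{B}$ yields $\mathbf{B}_q^{\minus 1}\bh{} = \mathbf{B}^{\minus 1}\bh{}\,/\,\parl 1 + q\,\bh{}^{\mr{H}}\mathbf{B}^{\minus 1}\bh{} \parr$. Substituting and writing $a \triangleq \bh{}^{\mr{H}}\mathbf{B}^{\minus 1}\bh{} \geq 0$ (real and nonnegative because $\mathbf{B}^{\minus 1}\succ 0$), the target becomes $-q\,\bh{}^{\mr{H}}\mathbf{B}^{\minus 1}\bA{}\mathbf{B}^{\minus 1}\bh{}\,/\,(1 + qa)$. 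I would bound the numerator by $|\bh{}^{\mr{H}}\mathbf{B}^{\minus 1}\bA{}\mathbf{B}^{\minus 1}\bh{}| \leq \lVert \bA{}\rVert\,\lVert \mathbf{B}^{\minus 1}\bh{}\rVert^2$ (Cauchy--Schwarz and the definition of the operator norm), and then use the key inequality
\[
\lVert \mathbf{B}^{\minus 1}\bh{}\rVert^2 = \bh{}^{\mr{H}}\mathbf{B}^{-2}\bh{} \leq \frac{1}{\zeta}\,\bh{}^{\mr{H}}\mathbf{B}^{\minus 1}\bh{} = \frac{a}{\zeta},
\]
which follows from $\mathbf{B}\succeq\zeta\mathbf{I}_M \Rightarrow \mathbf{B}^{-2}\preceq \frac{1}{\zeta}\mathbf{B}^{\minus 1}$ (congruence of this operator inequality by $\mathbf{B}^{-1/2}$). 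Assembling the pieces leaves the bound $\tfrac{\lVert\bA{}\rVert}{\zeta}\cdot\tfrac{|q|\,a}{|1 + qa|}$.

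The main obstacle is the final factor $\tfrac{|q|a}{|1+qa|}$, where the sign of the rank-one perturbation becomes decisive. For the nonnegative perturbation of the cited Silverstein--Bai result (i.e.\ $q\geq 0$, equivalently absorbing $\sqrt{q}$ into $\bh{}$) one has $1 + qa \geq 1$ and hence $\tfrac{qa}{1+qa}\leq 1$ immediately, which closes the argument and delivers the asserted bound $\tfrac{\lVert\bA{}\rVert}{\zeta}$. The delicate step I expect to require the most care is precisely guaranteeing that the denominator $|1+qa|$ stays bounded away from zero relative to $|q|a$; this is exactly what nonnegativity of $q\bh{}\bh{}^{\mr{H}}$ secures, and it is the reason the hypothesis bounds $\mathbf{B}$---rather than $\mathbf{B}_q$---below by $\zeta\mathbf{I}_M$.
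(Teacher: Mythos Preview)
The paper does not prove this lemma; it is quoted verbatim from \cite[Lemma 2.6]{Silverstein95} in Appendix~A as a known tool, so there is no in-paper argument to compare against.

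Your argument is the standard one and is correct under the hypothesis $q\geq 0$, which is the case actually used throughout the paper (every application is with a term of the form $\bhhat{0,k}^{*}\bhhat{0,k}^{\mr{T}}$, a nonnegative rank-one matrix). You correctly isolated the only delicate point: the residual factor $|q|a/|1+qa|$ is bounded by $1$ precisely when $q\geq 0$. Your caution there is warranted, because as literally stated with $q\in\mathbb{R}$ the inequality is false. A one-dimensional counterexample suffices: take $M=1$, $\bU{}=0$, $\zeta=1$, $\bh{}=1$, $\bA{}=1$, $q=-0.9$; then the left-hand side equals $|1/0.1 - 1| = 9$ while $\lVert\bA{}\rVert/\zeta = 1$. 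The original Silverstein--Bai formulation avoids this by working with a complex spectral parameter $z$ with $\Im\{z\}\neq 0$, so that both resolvents are automatically bounded by $1/|\Im\{z\}|$ regardless of the sign of $q$; the real-$\zeta$ specialization quoted here needs the rank-one update to be positive semidefinite. Hence your proof is complete once the hypothesis is read as $q\geq 0$, and that is exactly how the lemma is invoked in Appendices~B and~C.
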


\begin{lem}[{ \cite[Lemma 6]{Debbah12}}]
 \label{lem:R1PTrLem}
Let $\bU{}, \bA{} \in \mathbb{C}^{M \times M}$ with $\bU{}$ being nonnegative Hermitian, $\bh{} \in \mathbb{C}^{M}$ and $q \in \mathbb{R}$, then,
\begin{IEEEeqnarray}{rCl} \label{eq:R1PTrLem}
\frac{1}{M} {\mr{tr} \bA{} \bU{}^{\minus 1} } - \frac{1}{M} {\mr{tr} \bA{} (\bU{} +  q\bh{}\bh{}^{\mr{H}})^{\minus 1} }   \overset{M \rightarrow \infty}\longrightarrow  0 \IEEEeqnarraynumspace
\end{IEEEeqnarray}
\end{lem}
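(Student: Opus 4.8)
The plan is to obtain this as an immediate consequence of the Rank-1 Perturbation Lemma (Lemma~\ref{lem:R1PLem}), which already controls the trace difference of the two resolvents \emph{before} any normalization by $M$. The one point that needs attention is that $\bU{}^{\minus 1}$ must exist, so I would read the hypothesis as carrying an implicit regularization and write $\bU{} = \mathbf{U}_0 + \zeta \mathbf{I}_M$ with $\mathbf{U}_0$ nonnegative Hermitian and $\zeta > 0$ fixed. Then $\bU{}$ is positive definite, its smallest eigenvalue is bounded below by $\zeta$ uniformly in $M$, and $\bU{}^{\minus 1}$ is well defined.

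First I would apply Lemma~\ref{lem:R1PLem} with $\mathbf{U}_0$ as the nonnegative Hermitian matrix, regularization level $\zeta$, and the given $q \in \mathbb{R}$ and $\bh{} \in \mathbb{C}^{M \times 1}$. This yields the deterministic, dimension-free bound
\begin{IEEEeqnarray}{rCl}
\absl \mr{tr} \, \bA{} \squarel (\bU{} + q\bh{}\bh{}^{\mr{H}})^{\minus 1} - \bU{}^{\minus 1} \squarer \absr &\leq& \frac{\lVert \bA{} \rVert}{\zeta}. \nonumber
\end{IEEEeqnarray}
Dividing both sides by $M$ gives
\begin{IEEEeqnarray}{rCl}
\absl \frac{1}{M} \mr{tr}\, \bA{} \bU{}^{\minus 1} - \frac{1}{M} \mr{tr}\, \bA{} (\bU{} + q\bh{}\bh{}^{\mr{H}})^{\minus 1} \absr &\leq& \frac{\lVert \bA{} \rVert}{M\zeta}. \nonumber
\end{IEEEeqnarray}
Finally, using the standing assumption that $\bA{}$ has uniformly bounded spectral norm, so that $\lVert \bA{} \rVert \leq C$ with $C$ independent of $M$, together with the fact that $\zeta$ is a fixed positive constant, the right-hand side is $O(1/M)$ and therefore vanishes as $M \to \infty$. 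This is precisely the claimed convergence.

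The step I expect to demand the most care is not the algebra but the bookkeeping on the hypotheses that force the right-hand side to collapse: one must check that $\zeta$ is bounded away from zero \emph{uniformly} in $M$ (otherwise $\lVert \bA{} \rVert/(M\zeta)$ need not vanish) and that $\lVert \bA{} \rVert$ does not grow with $M$. Both hold in every invocation of this lemma in the paper, where $\bU{}$ always carries an $M\alpha \mathbf{I}_M$ or $\alpha \mathbf{I}_M$ term with $\alpha > 0$ fixed and $\bA{}$ is a resolvent-type matrix of bounded norm; it is precisely this regularization, implicit in the nonnegative-Hermitian-plus-shift structure, that guarantees both the existence of $\bU{}^{\minus 1}$ and the vanishing of the bound.
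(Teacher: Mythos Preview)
Your argument is correct: the result is an immediate corollary of Lemma~\ref{lem:R1PLem} once one divides by $M$ and uses that $\lVert \bA{} \rVert$ is uniformly bounded and that $\bU{}$ carries a fixed $\zeta \mathbf{I}_M$ shift. The paper itself does not supply a proof of this lemma at all; it is quoted verbatim from \cite[Lemma~6]{Debbah12} as a known auxiliary result, so there is no ``paper's own proof'' to compare against, and your derivation is exactly the standard one.
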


\begin{lem}[{ \cite[Lemma 7]{Debbah12}}]
\label{lem:Lemm7Debb}
Consider $\bU{}, \bA{} \in \mathbb{C}^{M \times M}$ with uniformly bounded spectral norms for all $M$ with $\bA{}$ being invertible. Furthermore, let $\bx{}, \bw{} \sim \Ccal \Ncal (0,\frac{1}{M}\mathbf{I}_{M})$ be mutually independent vectors of length $M$, and also independent of $\bU, \bA{}$. Define $q_0, q_1, q_2 \in \mathbb{R}^+$ such that $q_0q_1  = q_2^2$, $q_0 + q_1 = 1$ and $t_1 \triangleq \frac{1}{M} \mr{tr} \bA{}^{\minus 1}, t_2 \triangleq \frac{1}{M} \mr{tr} \bU{} \bA{}^{\minus 1}$. Then,
\begin{IEEEeqnarray}{rCl} \label{eq:Lemm7Debb}
&& \bx{}^{\mr{H}} \bU{} \parl \bA{} + q_0 \bx{}\bx{}^{\mr{H}} + q_1  \bw{}\bw{}^{\mr{H}} + q_2\bx{}\bw{}^{\mr{H}}  + q_2\bw{}\bx{}^{\mr{H}} \parr^{\minus 1} \bx{} \nonumber\\
&& - \frac{t_2(1 + q_1 t_1)}{1 + t_1} \overset{M \rightarrow \infty}{\longrightarrow} 0 \\
&& \bw{}^{\mr{H}} \bU{} \parl \bA{} + q_0 \bx{}\bx{}^{\mr{H}} + q_1  \bw{}\bw{}^{\mr{H}} + q_2\bx{}\bw{}^{\mr{H}}  + q_2\bw{}\bx{}^{\mr{H}} \parr^{\minus 1} \bw{} \nonumber\\
&& - \frac{t_2(1 + q_0 t_1)}{1 + t_1} \overset{M \rightarrow \infty}{\longrightarrow} 0 \\
&& \bx{}^{\mr{H}} \bU{} \parl \bA{} + q_0 \bx{}\bx{}^{\mr{H}} + q_1  \bw{}\bw{}^{\mr{H}} + q_2\bx{}\bw{}^{\mr{H}}  + q_2\bw{}\bx{}^{\mr{H}} \parr^{\minus 1} \bw{} \nonumber\\
&& - \frac{-q_2 t_1 t_2}{1 + t_1} \overset{M \rightarrow \infty}{\longrightarrow} 0 \\
&& \bw{}^{\mr{H}} \bU{} \parl \bA{} + q_0 \bx{}\bx{}^{\mr{H}} + q_1  \bw{}\bw{}^{\mr{H}} + q_2\bx{}\bw{}^{\mr{H}}  + q_2\bw{}\bx{}^{\mr{H}} \parr^{\minus 1} \bx{} \nonumber\\
&& - \frac{-q_2 t_1 t_2}{1 + t_1} \overset{M \rightarrow \infty}{\longrightarrow} 0.
\end{IEEEeqnarray}
\end{lem}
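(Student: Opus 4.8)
The essential observation is that the four rank-one terms perturbing $\bA{}$ combine into a single rank-one update. Defining $\mathbf{z} \triangleq \sqrt{q_0}\,\bx{} + \sqrt{q_1}\,\bw{}$ and using the hypothesis $q_0 q_1 = q_2^2$, one has $\mathbf{z}\mathbf{z}^{\mr{H}} = q_0\bx{}\bx{}^{\mr{H}} + q_1\bw{}\bw{}^{\mr{H}} + q_2\bx{}\bw{}^{\mr{H}} + q_2\bw{}\bx{}^{\mr{H}}$, so the matrix to be inverted is simply $\bA{} + \mathbf{z}\mathbf{z}^{\mr{H}}$. Moreover, since $\bx{}$ and $\bw{}$ are independent $\Ccal\Ncal(0,\tfrac{1}{M}\mathbf{I}_M)$ and $q_0 + q_1 = 1$, the vector $\mathbf{z}$ is itself $\Ccal\Ncal(0,\tfrac{1}{M}\mathbf{I}_M)$ and independent of $\bU{}$ and $\bA{}$.

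The plan is to expand this rank-one update by the matrix inversion lemma (Lemma \ref{lem:MatInvLem}), which gives
\begin{IEEEeqnarray}{rCl}
\parl \bA{} + \mathbf{z}\mathbf{z}^{\mr{H}} \parr^{\minus 1} &=& \bA{}^{\minus 1} - \frac{\bA{}^{\minus 1}\mathbf{z}\mathbf{z}^{\mr{H}}\bA{}^{\minus 1}}{1 + \mathbf{z}^{\mr{H}}\bA{}^{\minus 1}\mathbf{z}}, \nonumber
\end{IEEEeqnarray}
and then to substitute it into each of the four bilinear forms. For the first form this yields
\begin{IEEEeqnarray}{rCl}
\bx{}^{\mr{H}}\bU{}\parl \bA{}+\mathbf{z}\mathbf{z}^{\mr{H}}\parr^{\minus 1}\bx{} &=& \bx{}^{\mr{H}}\bU{}\bA{}^{\minus 1}\bx{} - \frac{\parl\bx{}^{\mr{H}}\bU{}\bA{}^{\minus 1}\mathbf{z}\parr\parl\mathbf{z}^{\mr{H}}\bA{}^{\minus 1}\bx{}\parr}{1 + \mathbf{z}^{\mr{H}}\bA{}^{\minus 1}\mathbf{z}}, \nonumber
\end{IEEEeqnarray}
with the obvious analogues for the other three.

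Next I would evaluate every factor with the Trace Lemma (Lemma \ref{lem:TrLem}). Because $\bU{}\bA{}^{\minus 1}$ has uniformly bounded spectral norm and is independent of $\bx{}$ and $\bw{}$, one obtains $\bx{}^{\mr{H}}\bU{}\bA{}^{\minus 1}\bx{} \to t_2$, $\bx{}^{\mr{H}}\bU{}\bA{}^{\minus 1}\bw{} \to 0$, and $\mathbf{z}^{\mr{H}}\bA{}^{\minus 1}\mathbf{z} \to t_1$, all almost surely. The point requiring care is that $\mathbf{z}$ is correlated with both $\bx{}$ and $\bw{}$, so the Trace Lemma cannot be applied to a form such as $\bx{}^{\mr{H}}\bU{}\bA{}^{\minus 1}\mathbf{z}$ as it stands; instead I would re-expand $\mathbf{z} = \sqrt{q_0}\bx{}+\sqrt{q_1}\bw{}$ and discard the vanishing cross terms, giving $\bx{}^{\mr{H}}\bU{}\bA{}^{\minus 1}\mathbf{z} \to \sqrt{q_0}\,t_2$ and $\mathbf{z}^{\mr{H}}\bA{}^{\minus 1}\bx{} \to \sqrt{q_0}\,t_1$. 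Substituting and using $1-q_0 = q_1$ collapses the first form to $t_2 - \tfrac{q_0 t_1 t_2}{1+t_1} = \tfrac{t_2(1+q_1 t_1)}{1+t_1}$, which is the first claimed limit. The second form follows from the symmetry $(\bx{},q_0)\leftrightarrow(\bw{},q_1)$, while the two mixed forms reduce to $-\tfrac{q_2 t_1 t_2}{1+t_1}$ once $\sqrt{q_0}\sqrt{q_1}=q_2$ is used.

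The main obstacle is justifying that these ratios actually converge, i.e. that the random denominator $1+\mathbf{z}^{\mr{H}}\bA{}^{\minus 1}\mathbf{z}$ is almost surely bounded away from zero and converges to $1+t_1\neq 0$; this is exactly where the invertibility of $\bA{}$ together with the uniform bounds on the spectral norms (which also guarantee that $\bU{}\bA{}^{\minus 1}$ and $\bA{}^{\minus 1}$ have bounded norm, as needed for the Trace Lemma) enter, so that a Slutsky-type argument lets the almost-sure limits of numerator and denominator be combined through the quotient. Once this is in place, the four stated convergences follow.
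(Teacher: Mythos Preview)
Your argument is correct. Note, however, that the paper does not actually prove this lemma: it is listed in Appendix~A under ``Important Results from Literature'' and simply cited as \cite[Lemma~7]{Debbah12}, with no proof supplied. So there is no in-paper proof to compare against directly.

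It is still worth contrasting your route with the technique the paper does display, namely in the proof of the extension Lemma~\ref{lem:NewLemDebb}. There the resolvent identity (Lemma~\ref{lem:ResIdn}) is applied to $\bV{}\triangleq(\bA{}+q_0\bx{}\bx{}^{\mr{H}}+q_1\bw{}\bw{}^{\mr{H}}+q_2\bx{}\bw{}^{\mr{H}}+q_2\bw{}\bx{}^{\mr{H}})^{\minus 1}$ without first collapsing the four rank-one terms; expanding produces expressions that still contain $\bV{}$, and one effectively solves a small linear system in the unknown bilinear forms using the Trace Lemma and the relation $q_0q_1=q_2^2$ only at the very end. Your observation that the perturbation equals $\mathbf{z}\mathbf{z}^{\mr{H}}$ with $\mathbf{z}=\sqrt{q_0}\,\bx{}+\sqrt{q_1}\,\bw{}$ exploits $q_0q_1=q_2^2$ up front and reduces everything to a single Sherman--Morrison update (Lemma~\ref{lem:MatInvLem}); this avoids the self-referential system entirely and is the cleaner path for this particular statement. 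The resolvent-identity route, by keeping the four perturbation terms separate, is more mechanical and adapts with less thought to variants where an additional factor such as $\bN{}$ is interposed, which is presumably why the paper (and \cite{Debbah12}) favour it.
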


\begin{lem}
 \label{lem:FrPr}
Let $\bU{}, \bV{} \in \mathbb{C}^{M \times M}$ be freely independent random matrices  \cite[Page~207]{Tao} with uniformly bounded spectral norm for all $M$. Further, let all the moments of the entries of $\bU{}, \bV{}$ be finite. Then,
\begin{IEEEeqnarray}{rCl} \label{eq:TrLem}
\frac{1}{M}\mr{tr} \bU{} \bV{} -   \frac{1}{M}\mr{tr} \bU{} \frac{1}{M}\mr{tr} \bV{} \overset{M \rightarrow \infty}{\longrightarrow} 0.
\end{IEEEeqnarray}
\end{lem}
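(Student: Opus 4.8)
The plan is to recognize the claimed identity as the first-order manifestation of asymptotic freeness and to derive it directly from the defining relation of freeness, applied to centered elements. Concretely, I would work in the non-commutative probability space associated with $M \times M$ random matrices in the large-$M$ regime, equipped with the normalized-trace functional $\psi(\cdot) \triangleq \limit{M}{\infty} \frac{1}{M}\mr{tr}(\cdot)$, and show that the factorization $\psi(\mathbf{U}\mathbf{V}) = \psi(\mathbf{U})\psi(\mathbf{V})$ forced by freeness is exactly the asserted convergence once it is transported back to finite $M$.

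First I would verify that $\psi$ is a well-defined tracial state on the algebra generated by $\mathbf{U}$ and $\mathbf{V}$: the uniform bound on the spectral norms together with the finiteness of the entrywise moments guarantees that the normalized traces of words in $\mathbf{U}$ and $\mathbf{V}$ converge along the sequence and satisfy $\psi(\mathbf{I}_M)=1$ and $\psi(\mathbf{A}\mathbf{B})=\psi(\mathbf{B}\mathbf{A})$. With $\psi$ in hand, I would invoke freeness of $\mathbf{U}$ and $\mathbf{V}$ on the centered elements $(\mathbf{U} - \psi(\mathbf{U})\mathbf{I}_M)$ and $(\mathbf{V} - \psi(\mathbf{V})\mathbf{I}_M)$: since an alternating product of centered elements drawn from the two free subalgebras has vanishing trace, $\psi\big[(\mathbf{U} - \psi(\mathbf{U})\mathbf{I}_M)(\mathbf{V} - \psi(\mathbf{V})\mathbf{I}_M)\big] = 0$. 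Expanding this by linearity and using $\psi(\mathbf{I}_M)=1$ collapses the four terms to $\psi(\mathbf{U}\mathbf{V}) - \psi(\mathbf{U})\psi(\mathbf{V}) = 0$. Translating back to finite $M$, the three quantities $\frac{1}{M}\mr{tr}\,\mathbf{U}\mathbf{V}$, $\frac{1}{M}\mr{tr}\,\mathbf{U}$, and $\frac{1}{M}\mr{tr}\,\mathbf{V}$ converge to $\psi(\mathbf{U}\mathbf{V})$, $\psi(\mathbf{U})$, and $\psi(\mathbf{V})$ respectively, so the combination $\frac{1}{M}\mr{tr}\,\mathbf{U}\mathbf{V} - \frac{1}{M}\mr{tr}\,\mathbf{U}\cdot\frac{1}{M}\mr{tr}\,\mathbf{V}$ tends to $\psi(\mathbf{U}\mathbf{V}) - \psi(\mathbf{U})\psi(\mathbf{V}) = 0$, which is the claim.

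The hard part will be making the passage between the finite-$M$ random matrices and the limiting abstract space rigorous, since the definition of freeness is a statement about the limiting functional $\psi$ rather than about a fixed $M$. I would therefore need to establish two analytic facts before the algebra above applies: (i) the normalized traces concentrate around their expectations, so that replacing the random $\frac{1}{M}\mr{tr}(\cdot)$ by the deterministic $\psi$ is legitimate in the stated mode of convergence---this is precisely where uniform boundedness of the spectral norm and finiteness of the entrywise moments enter, via standard variance estimates giving $\mathrm{Var}\big(\frac{1}{M}\mr{tr}(\cdot)\big)\to 0$; and (ii) the expected normalized traces actually converge, so that $\psi$ exists as a limit. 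Granting these, the identity for the cross term is an immediate and essentially algebraic consequence of freeness, so that the entire analytic content of the lemma sits in the concentration and convergence step, not in the first-order moment computation itself.
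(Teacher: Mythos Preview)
The paper does not actually prove this lemma: it appears in Appendix~A, ``Important Results from Literature,'' as a stated fact (with only the citation \cite[Page~207]{Tao} for the notion of free independence) and no argument is given. So there is no paper proof to compare against.

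Your approach is the correct and standard one from free probability. The key identity $\psi\big[(\mathbf{U}-\psi(\mathbf{U})\mathbf{I}_M)(\mathbf{V}-\psi(\mathbf{V})\mathbf{I}_M)\big]=0$ is exactly the length-two case of the freeness axiom, and expanding it gives $\psi(\mathbf{U}\mathbf{V})=\psi(\mathbf{U})\psi(\mathbf{V})$, which is the limiting statement. You are also right to flag that the analytic content lies entirely in (i) concentration of $\frac{1}{M}\mr{tr}(\cdot)$ around its mean and (ii) existence of the limit $\psi$, since ``freely independent random matrices'' in the sense of \cite{Tao} is an asymptotic notion about the limiting state, not a finite-$M$ one. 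The hypotheses of uniformly bounded spectral norm and finite entry moments are there precisely to supply these two ingredients, so your outline is complete at the level of a sketch; a fully rigorous write-up would just need to cite or spell out a concentration inequality for the normalized trace under those hypotheses.
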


\section*{APPENDIX B}
\section*{Extensions to Existing Lemmas in \cite{Debbah12}}
\label{sec:app_b}

First, we provide Lemma \ref{lem:NewLemDebb}, which is an extension to Lemma \ref{lem:Lemm7Debb} \cite[Lemma 7]{Debbah12}. Lemma \ref{lem:NewLemDebb} is then used to the derive the effective SINR in \eqref{eq:AchRate5}.

\begin{lem} [{Extensions to \cite[Lemma 7]{Debbah12}}]
\label{lem:NewLemDebb}
Consider $\bU{}$, $\bA{}$, $\bN{} \in \mathbb{C}^{M \times M}$ with uniformly bounded spectral norms for all $M$, such that $\bA{}$ is invertible, and $\bN{}$ is unitary, $\bN{}\bN{}^{\mr{H}} = \mathbf{I}_M$. Furthermore, assume that $\bN{}$ is freely independent of $\bU{}, \bA{}$ \cite[Page~207]{Tao}.  Let $\bx{}, \bw{} \sim \Ccal \Ncal (0,\frac{1}{M}\mathbf{I}_{M})$ be mutually independent vectors of length $M$, and also independent of $\bU, \bA{}$. Define $q_0, q_1, q_2 \in \mathbb{R}^+$ such that $q_0q_1  = q_2^2$, $q_0 + q_1 = 1$ and $t_1 \triangleq \frac{1}{M} \mr{tr} \bA{}^{\minus 1}, t_2 \triangleq \frac{1}{M} \mr{tr} \bU{} \bA{}^{\minus 1}$. Then,
\begin{IEEEeqnarray}{rCl}
\label{eq:NewLemmDebb01}
&& \bx{}^{\mr{H}} \bN{}\bU{} \parl \bA{} + q_0 \bx{}\bx{}^{\mr{H}} + q_1  \bw{}\bw{}^{\mr{H}} + q_2\bx{}\bw{}^{\mr{H}}  + q_2\bw{}\bx{}^{\mr{H}} \parr^{\minus 1} \bN{}^{\mr{H}} \bx{} \nonumber\\
&& - \parl t_2 -  \frac{q_0 t_1 t_2 }{1 + t_1}\absl\frac{ \mr{tr} \curll  \bN{} \curlr }{M} \absr^2  \parr \overset{M \rightarrow \infty}{\longrightarrow} 0 \\
\label{eq:NewLemmDebb02}
&& \bx{}^{\mr{H}} \bU{} \parl \bA{} + q_0 \bx{}\bx{}^{\mr{H}} + q_1  \bw{}\bw{}^{\mr{H}} + q_2\bx{}\bw{}^{\mr{H}}  + q_2\bw{}\bx{}^{\mr{H}} \parr^{\minus 1} \bN{}^{\mr{H}} \bx{} \nonumber\\
&& - \frac{t_2 (1 + q_1 t_1) }{1 + t_1} \frac{ \mr{tr} \curll  \bN{}^{\mr{H}} \curlr }{M} \overset{M \rightarrow \infty}{\longrightarrow} 0 \\
\label{eq:NewLemmDebb03}
&& \bw{}^{\mr{H}} \bU{}\parl \bA{} + q_0 \bx{}\bx{}^{\mr{H}} + q_1  \bw{}\bw{}^{\mr{H}} + q_2\bx{}\bw{}^{\mr{H}}  + q_2\bw{}\bx{}^{\mr{H}} \parr^{\minus 1}\bN{}^{\mr{H}}\bx{}  \nonumber\\
&& - \frac{-q_2 t_1 t_2 }{1 + t_1}  \frac{ \mr{tr} \curll  \bN{}^{\mr{H}} \curlr }{M} \overset{M \rightarrow \infty}{\longrightarrow} 0
\end{IEEEeqnarray}

\begin{proof}
Since $\bN{}$ is freely independent of $\bU{}, \bA{}$,  from Lemma \ref{lem:FrPr} we have
\begin{IEEEeqnarray}{rCl}
\label{eq:NewLemmDebb11}
&& \frac{1}{M} \mr{tr} \bN{} \bA{}^{\minus 1} - \frac{1}{M} \mr{tr} \bN{}  \underbrace{\frac{1}{M} \mr{tr} \bA{}^{\minus 1}}_{\triangleq t_1} \overset{M \rightarrow \infty}{\longrightarrow} 0  \\
 \label{eq:NewLemmDebb12}
&& \frac{1}{M} \mr{tr} \bN{} \bU{} \bA{}^{\minus 1} -  \frac{1}{M} \mr{tr} \bN{} \underbrace{\frac{1}{M} \mr{tr} \bU{} \bA{}^{\minus 1}}_{\triangleq t_2} \overset{M \rightarrow \infty}{\longrightarrow} 0.
\end{IEEEeqnarray}

Furthermore,
\begin{IEEEeqnarray}{rCl}
 \label{eq:NewLemmDebb21}
&& \bx{}^{\mr{H}}\bN{}\bU{}\bV{}\bx{} - \frac{ \mr{tr}   \bN{}  }{M} \frac{t_2(1 + q_1 t_1)}{1 + t_1} \overset{M \rightarrow \infty}{\longrightarrow} 0 \\
 \label{eq:NewLemmDebb22}
&& \bx{}^{\mr{H}}\bA{}^{\minus 1}\bN{}^{\mr{H}}\bx{} - \frac{ \mr{tr}   \bN{}^{\mr{H}}  }{M} t_1 \overset{M \rightarrow \infty}{\longrightarrow} 0 \\
 \label{eq:NewLemmDebb222}
&& \bx{}^{\mr{H}} \bU{} \bA{}^{\minus 1}\bN{}^{\mr{H}}\bx{} - \frac{ \mr{tr}   \bN{}^{\mr{H}}  }{M} t_2 \overset{M \rightarrow \infty}{\longrightarrow} 0 \\
 \label{eq:NewLemmDebb23}
&& \bx{}^{\mr{H}}\bN{}\bU{}\bV{}\bw{} -  \frac{-q_2 t_1 t_2 }{1 + t_1}\frac{ \mr{tr}   \bN{}  }{M} \overset{M \rightarrow \infty}{\longrightarrow} 0 \\
 \label{eq:NewLemmDebb24}
&& \bx{}^{\mr{H}}\bN{}\bU{}\bA{}^{\minus 1}\bN{}^{\mr{H}}\bx{} - t_2 \overset{M \rightarrow \infty}{\longrightarrow}  0 \\
 \label{eq:NewLemmDebb25}
&& \bw{}^{\mr{H}}\bA{}^{\minus 1}\bN{}^{\mr{H}}\bx{} \overset{M \rightarrow \infty}{\longrightarrow} 0.
\end{IEEEeqnarray}
In order to obtain \eqref{eq:NewLemmDebb21} and \eqref{eq:NewLemmDebb23}, the result in \eqref{eq:NewLemmDebb12} along with Lemma \ref{lem:Lemm7Debb} is applied. The results in \eqref{eq:NewLemmDebb22},  \eqref{eq:NewLemmDebb222}, \eqref{eq:NewLemmDebb24}, and \eqref{eq:NewLemmDebb25} are obtained by using \eqref{eq:NewLemmDebb11}, \eqref{eq:NewLemmDebb12}, and Lemma \ref{lem:TrLem}. Now, define $\bV{} \triangleq (\bA{}  + q_0\bx{}\bx{}^{\mr{H}}  + q_1\bw{}\bw{}^{\mr{H}} + q_2\bw{}\bx{}^{\mr{H}} + q_2\bx{}\bw{}^{\mr{H}}  )^{\minus 1},$ then
\begin{IEEEeqnarray}{rCl}
\label{eq:NewLemmDebb31}
&& \bx{}^{\mr{H}}\bN{}\bU{}\bV{}\bN{}^{\mr{H}}\bx{}  \nonumber\\
&& =  \bx{}^{\mr{H}}\bN{}\bU{}\bA{}^{\minus 1}\bN{}^{\mr{H}}\bx{} - \bx{}^{\mr{H}}\bN{}\bU{}(\bV{}^{\minus 1} - \bA{})\bA{}^{\minus 1}\bN{}^{\mr{H}}\bx{} \\
\label{eq:NewLemmDebb32}
&& =  \bx{}^{\mr{H}}\bN{}\bU{}\bA{}^{\minus 1}\bN{}^{\mr{H}}\bx{} - \bx{}^{\mr{H}}\bN{}\bU{}\bV{} \left(q_0\bx{}\bx{}^{\mr{H}}   + q_1\bw{}\bw{}^{\mr{H}} + q_2\bw{}\bx{}^{\mr{H}} \right. \nonumber\\ && \left. ~+ q_2\bx{}\bw{}^{\mr{H}}\right)\bA{}^{\minus 1}\bN{}^{\mr{H}}\bx{}\\
\label{eq:NewLemmDebb33}
&& = -q_0 \bx{}^{\mr{H}}\bN{}\bU{}\bV{}\bx{}\bx{}^{\mr{H}}\bA{}^{\minus 1}\bN{}^{\mr{H}}\bx{}  - q_1\bx{}^{\mr{H}}\bN{}\bU{}\bV{}\bw{}\bw{}^{\mr{H}}\bA{}^{\minus 1}\bN{}^{\mr{H}}\bx{} \nonumber\\
\label{eq:NewLemmDebb34}
&& ~- q_2\bx{}^{\mr{H}}\bN{}\bU{}\bV{}\bw{}\bx{}^{\mr{H}}\bA{}^{\minus 1}\bN{}^{\mr{H}}\bx{} - q_2\bx{}^{\mr{H}}\bN{}\bU{}\bV{}\bx{}\bw{}^{\mr{H}}\bA{}^{\minus 1}\bN{}^{\mr{H}}\bx{} \IEEEeqnarraynumspace\\
\label{eq:NewLemmDebb35}
&& = t_2 -  \frac{q_0 t_1 t_2(1 + q_1 t_1)}{1 + t_1}\absl \frac{ \mr{tr} \curll  \bN{} \curlr }{M} \absr^2  +   \frac{q_2^2 t_1^2 t_2 }{1 + t_1} \absl \frac{ \mr{tr} \curll  \bN{} \curlr }{M}\absr^2 \\\label{eq:NewLemmDebb36}
&& = t_2 -  \frac{q_0 t_1 t_2 }{1 + t_1}\absl \frac{ \mr{tr} \curll  \bN{} \curlr }{M}\absr^2.
\end{IEEEeqnarray}
In \eqref{eq:NewLemmDebb31}, the resolvent identity in Lemma \ref{lem:ResIdn} is used. Upon simplifying \eqref{eq:NewLemmDebb34} by using \eqref{eq:NewLemmDebb21}-\eqref{eq:NewLemmDebb25}, the expression in \eqref{eq:NewLemmDebb35} is obtained, which is further simplified to the desired result in \eqref{eq:NewLemmDebb36} using the fact that $q_0q_1 = q_2^2$.

Consider the term $\bx{}^{\mr{H}} \bU{}\bV{}\bN{}^{\mr{H}}\bx{} $, which is written as
\begin{IEEEeqnarray}{rCl}
\label{eq:NewLemmDebb41}
&& \bx{}^{\mr{H}} \bU{}\bV{}\bN{}^{\mr{H}}\bx{} \nonumber\\
&& = \bx{}^{\mr{H}} \bU{}\bA{}^{\minus 1}\bN{}^{\mr{H}}\bx{} -q_0 \bx{}^{\mr{H}} \bU{}\bV{}\bx{}\bx{}^{\mr{H}} \bA{}^{\minus 1} \bN{}^{\mr{H}} \bx{} \nonumber\\
&& ~ - q_1\bx{}^{\mr{H}}  \bU{}\bV{}\bw{}\bw{}^{\mr{H}}\bA{}^{\minus 1} \bN{}^{\mr{H}} \bx{}  - q_2\bx{}^{\mr{H}} \bU{}\bV{}\bw{}\bx{}^{\mr{H}}\bA{}^{\minus 1} \bN{}^{\mr{H}} \bx{} \nonumber\\
&& ~ - q_2\bx{}^{\mr{H}} \bU{}\bV{}\bx{}\bw{}^{\mr{H}}\bA{}^{\minus 1} \bN{}^{\mr{H}} \bx{} \\
\label{eq:NewLemmDebb42}
&& = \bx{}^{\mr{H}} \bU{}\bA{}^{\minus 1}\bN{}^{\mr{H}}\bx{} -q_0 \bx{}^{\mr{H}} \bU{}\bV{}\bx{}\bx{}^{\mr{H}}\bA{}^{\minus 1} \bN{}^{\mr{H}} \bx{} \nonumber\\
&& ~ - q_2\bx{}^{\mr{H}} \bU{}\bV{}\bw{}\bx{}^{\mr{H}}\bA{}^{\minus 1} \bN{}^{\mr{H}} \bx{}  \\
\label{eq:NewLemmDebb43}
&& = \parl t_2  -  \frac{q_0  t_1 t_2(1 + q_1 t_1) }{1 + t_1}  + \frac{q_2^2 t_1^2 t_2}{1 + t_1} \parr \frac{ \mr{tr} \curll  \bN{}^{\mr{H}} \curlr }{M}  \\
\label{eq:NewLemmDebb44}
&& = \frac{t_2 (1 + q_1 t_1) }{1 + t_1} \frac{ \mr{tr} \curll  \bN{}^{\mr{H}} \curlr }{M}
\end{IEEEeqnarray}
In \eqref{eq:NewLemmDebb41}, Lemma \ref{lem:ResIdn} is applied, and \eqref{eq:NewLemmDebb42} is reduced to \eqref{eq:NewLemmDebb43} using the result in \eqref{eq:NewLemmDebb25}. Applying Lemma \ref{lem:Lemm7Debb}, along with the results in \eqref{eq:NewLemmDebb22} and \eqref{eq:NewLemmDebb222}, the expression in \eqref{eq:NewLemmDebb43} is obtained. Then, using the fact that $q_0q_1 = q_2^2$, \eqref{eq:NewLemmDebb43} reduces to the desired result in \eqref{eq:NewLemmDebb44}.

Finally, consider the term $\bw{}^{\mr{H}} \bU{}\bV{}\bN{}^{\mr{H}}\bx{}$, which reads as
\begin{IEEEeqnarray}{rCl}
\label{eq:NewLemmDebb51}
&& \bw{}^{\mr{H}} \bU{}\bV{}\bN{}^{\mr{H}}\bx{} \nonumber\\
&& = \bw{}^{\mr{H}} \bU{}\bA{}^{\minus 1}\bN{}^{\mr{H}}\bx{} - q_0 \bw{}^{\mr{H}} \bU{}\bV{}\bx{}\bx{}^{\mr{H}} \bA{}^{\minus 1} \bN{}^{\mr{H}} \bx{} \nonumber\\
&& ~- q_1\bw{}^{\mr{H}}  \bU{}\bV{}\bw{}\bw{}^{\mr{H}}\bA{}^{\minus 1} \bN{}^{\mr{H}} \bx{}  - q_2\bw{}^{\mr{H}} \bU{}\bV{}\bw{}\bx{}^{\mr{H}}\bA{}^{\minus 1} \bN{}^{\mr{H}} \bx{} \nonumber\\
&& ~- q_2\bw{}^{\mr{H}} \bU{}\bV{}\bx{}\bw{}^{\mr{H}}\bA{}^{\minus 1} \bN{}^{\mr{H}} \bx{} \\
\label{eq:NewLemmDebb52}
&& =  -q_0 \bw{}^{\mr{H}} \bU{}\bV{}\bx{}\bx{}^{\mr{H}}\bA{}^{\minus 1} \bN{}^{\mr{H}} \bx{} - q_2\bw{}^{\mr{H}} \bU{}\bV{}\bw{}\bx{}^{\mr{H}}\bA{}^{\minus 1} \bN{}^{\mr{H}} \bx{}  \\
\label{eq:NewLemmDebb53}
&& = \parl \frac{q_0 q_2 t_1^2 t_2  }{1 + t_1}  - \frac{q_2 t_1 t_2 (1 + q_0 t_1)}{1 + t_1} \parr \frac{ \mr{tr} \curll  \bN{}^{\mr{H}} \curlr }{M}  \\
\label{eq:NewLemmDebb54}
&& = - \frac{q_2 t_1 t_2 }{1 + t_1}  \frac{ \mr{tr} \curll  \bN{}^{\mr{H}} \curlr }{M}.
\end{IEEEeqnarray}
The resolvent identity in Lemma \ref{lem:ResIdn} is applied to obtain \eqref{eq:NewLemmDebb51}, which is reduced to \eqref{eq:NewLemmDebb52} using \eqref{eq:NewLemmDebb25}. Furthermore, Lemmas \ref{lem:TrLem}, \ref{lem:Lemm7Debb}, and the result in \eqref{eq:NewLemmDebb22} are used to simplify \eqref{eq:NewLemmDebb53} to  \eqref{eq:NewLemmDebb54}.
\end{proof}
\end{lem}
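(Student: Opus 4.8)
The plan is to reduce every quantity to the phase-noise-free quadratic forms of Lemma \ref{lem:Lemm7Debb} by exploiting two structural facts: that $\bN{}$ is unitary and freely independent of $\bU{},\bA{}$, and that the perturbed resolvent $\bV{}\triangleq(\bA{}+q_0\bx{}\bx{}^{\mr{H}}+q_1\bw{}\bw{}^{\mr{H}}+q_2\bx{}\bw{}^{\mr{H}}+q_2\bw{}\bx{}^{\mr{H}})^{\minus 1}$ can be expanded about $\bA{}^{\minus 1}$ through the resolvent identity. The only genuinely new ingredient beyond \cite{Debbah12} is the free independence of $\bN{}$, which is what injects the normalized-trace factors $\frac{\mr{tr}\bN{}}{M}$ into the limits.

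First I would record the trace factorizations that free independence buys. Applying Lemma \ref{lem:FrPr} to the freely independent pairs $(\bN{},\bA{}^{\minus 1})$ and $(\bN{},\bU{}\bA{}^{\minus 1})$ yields $\frac{1}{M}\mr{tr}\bN{}\bA{}^{\minus 1}\to\frac{\mr{tr}\bN{}}{M}t_1$ and $\frac{1}{M}\mr{tr}\bN{}\bU{}\bA{}^{\minus 1}\to\frac{\mr{tr}\bN{}}{M}t_2$; that is, the unitary factor simply pulls out its own normalized trace. These are the facts that will be propagated through all subsequent quadratic forms.

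Next I would assemble the building-block quadratic forms. For forms in which $\bV{}$ does not appear---such as $\bx{}^{\mr{H}}\bA{}^{\minus 1}\bN{}^{\mr{H}}\bx{}$ and $\bx{}^{\mr{H}}\bU{}\bA{}^{\minus 1}\bN{}^{\mr{H}}\bx{}$---the sandwiched matrix is independent of $\bx{}$, so the trace lemma (Lemma \ref{lem:TrLem}) applies directly and, combined with the factorizations above, delivers $\frac{\mr{tr}\bN{}^{\mr{H}}}{M}t_1$ and $\frac{\mr{tr}\bN{}^{\mr{H}}}{M}t_2$, while the corresponding $\bw{}$--$\bx{}$ cross form vanishes by the second part of Lemma \ref{lem:TrLem}. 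For forms that do contain $\bV{}$, such as $\bx{}^{\mr{H}}\bN{}\bU{}\bV{}\bx{}$, I would treat $\bN{}\bU{}$ as the deterministic factor and invoke Lemma \ref{lem:Lemm7Debb} verbatim, substituting the limit of $\frac{1}{M}\mr{tr}\bN{}\bU{}\bA{}^{\minus 1}$ in place of $t_2$.

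The main work is the final step, where the $\bx{}$-dependent sandwich $\bN{}\bU{}\bV{}\bN{}^{\mr{H}}$ cannot be hit with the trace lemma directly. I would apply the resolvent identity (Lemma \ref{lem:ResIdn}) to write $\bV{}=\bA{}^{\minus 1}-\bV{}(\bV{}^{\minus 1}-\bA{})\bA{}^{\minus 1}$, producing four rank-one correction terms $q_0\bx{}\bx{}^{\mr{H}}$, $q_1\bw{}\bw{}^{\mr{H}}$, $q_2\bx{}\bw{}^{\mr{H}}$, $q_2\bw{}\bx{}^{\mr{H}}$ sandwiched against $\bA{}^{\minus 1}\bN{}^{\mr{H}}\bx{}$. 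Substituting the building-block evaluations, the leading term is $t_2$ and each correction carries either $\frac{\mr{tr}\bN{}}{M}$ (single-$\bN{}$ contributions) or $\absl\frac{\mr{tr}\bN{}}{M}\absr^2$ (when $\bN{}^{\mr{H}}\bx{}$ supplies a second trace). I expect the obstacle to be purely combinatorial bookkeeping: tracking which corrections carry $\frac{\mr{tr}\bN{}}{M}$ versus its squared modulus, and then collapsing the $q_0$- and $q_2^2$-weighted pieces via the identity $q_0q_1=q_2^2$, which is precisely what cancels the spurious cross-terms and leaves the clean coefficient $\frac{q_0 t_1 t_2}{1+t_1}$. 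The same resolvent expansion applied to $\bx{}^{\mr{H}}\bU{}\bV{}\bN{}^{\mr{H}}\bx{}$ and $\bw{}^{\mr{H}}\bU{}\bV{}\bN{}^{\mr{H}}\bx{}$ produces \eqref{eq:NewLemmDebb02} and \eqref{eq:NewLemmDebb03}, now with a single $\frac{\mr{tr}\bN{}^{\mr{H}}}{M}$ factor, and once more $q_0q_1=q_2^2$ is what yields the final coefficients.
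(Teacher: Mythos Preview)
Your proposal is correct and follows essentially the same route as the paper's proof: free-independence trace factorization (Lemma~\ref{lem:FrPr}) to extract the $\frac{\mr{tr}\bN{}}{M}$ factors, the trace lemma and Lemma~\ref{lem:Lemm7Debb} for the building-block quadratic forms, then the resolvent identity (Lemma~\ref{lem:ResIdn}) to expand $\bV{}$ about $\bA{}^{\minus 1}$ and collapse the correction terms via $q_0q_1=q_2^2$. The paper carries out exactly this sequence, with the same bookkeeping you anticipate.
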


\section*{Appendix C}  
\section*{Proof of Theorem \ref{thm:DetEqSINR}}
\label{sec:app_c}

The evaluation of the  SINR in \eqref{eq:AchRate5} involves computing the following terms: (i) the normalization constant $\xi$, (ii) the signal term $\temp{sig}$, and (iii) the interference power ${\lVert \boldsymbol{\temp{}}_{\mr{int}} \rVert}^2$. In the sequel, we derive the three terms of interest. Also, when deriving the signal term $\temp{sig}$, we prove Proposition \ref{prop:Irzfsig}.

\subsection{Derivation of $\xi$}

First, we define $\boldsymbol{\chi}_{0} \triangleq \frac{\Hhat{0}^{\mr{H}}\Hhat{0}}{M} +  \alpha \mathbf{I}_M$, and denote by $\boldsymbol{\chi}_{0,\minus k}$ the matrix obtained upon removing the $k$th column of $\boldsymbol{\chi}_{0}$. The RZF precoder $\bG{0}$  in \eqref{eq:SysMod51} satisfies the power constraint $\mr{tr} \parl \bG{0}^{\mr{H}} \bG{0}   \parr  = 1$ implying that
\begin{IEEEeqnarray}{rCl}
\label{eq:XiCnstrnt11}
&& \xi^2 \mr{tr}  \curll \bP{}  \Hhat{0} \parl \Hhat{0}^{\mr{H}}\Hhat{0} + M \alpha \mathbf{I}_M \parr^{\minus 2}   \Hhat{0}^{\mr{H}} \curlr = 1 \\
\label{eq:XiCnstrnt12}
&& \xi^2 \nsum{k=1}{K}  p_k \bhhat{0,k}^{\mr{T}} \parl \Hhat{0,\minus k}^{\mr{H}}\Hhat{0,\minus k} + M \alpha \mathbf{I}_M  +  \bhhat{0,k}^{*} \bhhat{0,k}^{\mr{T}} \parr^{\minus 2}    \bhhat{0,k}^{*} = 1\nonumber\\
\end{IEEEeqnarray}
The expression in \eqref{eq:XiCnstrnt12} is rewritten by applying Lemma \ref{lem:MatInvLem} twice as
\begin{IEEEeqnarray}{rCl}
\label{eq:XiCnstrnt13}
&& \frac{1}{M}\xi^2 \nsum{k=1}{K}   \frac{ \frac{1}{M} p_k  \bhhat{0,k}^{\mr{T}} \boldsymbol{\chi}_{0,\minus k}^{ \minus 2}  \bhhat{0,k}^{*} }{\parl 1 +  \frac{1}{M} \bhhat{0,k}^{\mr{T}}\boldsymbol{\chi}_{0,\minus k}^{\minus 1} \bhhat{0,k}^{*} \parr^2 }     = 1 \\
\label{eq:XiCnstrnt15}
&&  \frac{1}{M} \xi^2  \nsum{k=1}{K}   \frac{p_k  \nsigma{\hhat}{2} \frac{1}{M} \mr{tr} \curll \boldsymbol{\chi}_{0}^{ \minus 2} \curlr }{\parl 1 +   \nsigma{\hhat}{2} \frac{1}{M} \mr{tr} \curll\boldsymbol{\chi}_{0}^{ \minus 1} \curlr  \parr^2 }     = 1 \\
\label{eq:XiCnstrnt16}
&& \xi = \sqrt{ \frac{M}{\nsum{k=1}{K}   \frac{p_k  m'(-\alpha) }{\parl 1 +     m(-\alpha) \parr^2 } } }.
\end{IEEEeqnarray}
Since $ \bhhat{0,k}$ is independent of $ \boldsymbol{\chi}_{0,\minus k}$, and $ \boldsymbol{\chi}_{0,\minus k}$ is non-negative Hermitian, Lemmas \ref{lem:TrLem}, \ref{lem:R1PLem}, and \ref{lem:R1PTrLem} are invoked in order to yield \eqref{eq:XiCnstrnt15} from  \eqref{eq:XiCnstrnt13}. Since the entries of $\Hhat{0}$ are i.i.d. complex Gaussian RVs, Lemma \ref{lem:stiel} is applied to \eqref{eq:XiCnstrnt15}, finally rendering \eqref{eq:XiCnstrnt16} following straightforward algebraic manipulations.

\subsection{Derivation of $\temp{sig}$ }

Define $\bx{0,k} \triangleq  \boldsymbol{\Phi}_{0}^{*}{\bh{k}}^{*}$, where $ \boldsymbol{\Phi}_{0} = \mr{diag} \curll \jexp{\phit{0}{1}}\mathbb{\mathbf{1}}^{\mr{T}}_{1 \times {M}/{M_{\mr{osc}}}}, \ldots  ,\jexp{\phit{0}{M_{\mr{osc}}}} \mathbb{\mathbf{1}}^{\mr{T}}_{1 \times {M}/{M_{\mr{osc}}}} \curlr$. Specifically, $ \boldsymbol{\Phi}_{0} = \mr{diag} \curll \jexp{\phit{0}{1}}, \ldots  ,\jexp{\phit{0}{M}} \curlr$ for the DO setup, and $\boldsymbol{\Phi}_{0} = e^{\jmath \phi_{0}}\mathbf{I}_M$ for the CO setup. Then, the signal component is written as
\begin{IEEEeqnarray}{rCl}
\label{eq:IRzfSig11}
\temp{sig} &&=  \frac{ \sqrt{p_{k} }  \xi }{M}  \bh{k}^{\mr{T}} \bTheta{\tau,k}   \boldsymbol{\chi}_{0}^{\minus 1} \bhhat{0,k}^{*} \\
\label{eq:IRzfSig12}
&&=   \frac{ \sqrt{p_{k}q_0} \xi }{M} \bx{0,k}^{\mr{H}} \Delta \boldsymbol{\Phi}_{\tau} \boldsymbol{\chi}_{0}^{\minus 1} \bx{0,k}    e^{\jmath (\varphit{\tau}{k} - \varphit{0}{k})  } \nonumber\\ && + \frac{ \sqrt{p_{k}q_1}  \xi }{M} \bx{0,k}^{\mr{H}}  \Delta \boldsymbol{\Phi}_{\tau} \boldsymbol{\chi}_{0}^{\minus 1} {\bw{\mr{e},k}}^{*}  e^{\jmath \varphit{\tau}{k}}\\
\label{eq:IRzfSig13}
&&=   \frac{ \sqrt{p_{k}q_0} \xi }{M} \bx{0,k}^{\mr{H}} \Delta \boldsymbol{\Phi}_{\tau} \boldsymbol{\chi}_{0}^{\minus 1} \bx{0,k}     e^{\jmath (\varphit{\tau}{k} - \varphit{0}{k}) }  \nonumber\\ && + \frac{ \sqrt{p_{k}q_1}  \xi }{M} \bx{0,k}^{\mr{H}}  \Delta \boldsymbol{\Phi}_{\tau} \boldsymbol{\chi}_{0}^{\minus 1} {\bw{\mr{e},k}}^{*}  e^{\jmath (\varphit{\tau}{k} - \varphit{0}{k})  }.
\end{IEEEeqnarray}
In \eqref{eq:IRzfSig12}, $\Delta  \boldsymbol{\Phi}_{\tau} = \mr{diag} \curll \right.$ $\jexp{\phit{\tau}{1} - \phit{0}{1}}  \mathbb{\mathbf{1}}^{\mr{T}}_{1 \times {M}/{M_{\mr{osc}}}}$, $\ldots,$ $\left.\jexp{\phit{\tau}{M_{\mr{osc}}} - \phit{0}{M_{\mr{osc}}}}    \mathbb{\mathbf{1}}^{\mr{T}}_{1 \times {M}/{M_{\mr{osc}}}} \curlr$. For the CO setup, $\Delta  \boldsymbol{\Phi}_{\tau} = e^{\jmath(\phi_{\tau} - \phi_{0})} \mathbf{I}_M$, and $ \Delta  \boldsymbol{\Phi}_{\tau} = \mr{diag} \curll \jexp{\phit{\tau}{1} - \phit{0}{1}}  , \ldots  , \jexp{\phit{\tau}{M} - \phit{0}{M}}   \curlr$ for the DO setup.

The expression in \eqref{eq:IRzfSig12} is obtained by using the channel estimate \eqref{eq:SysMod3} in \eqref{eq:IRzfSig11}, and \eqref{eq:IRzfSig12} is further rewritten as \eqref{eq:IRzfSig13} by letting ${\bw{\mr{e},k}} = {\bw{\mr{e},k}} e^{-\jmath \varphit{0}{k}}$, given that ${\bw{\mr{e},k}}$ is circularly symmetric. Define $ t_1 \triangleq \frac{1}{M}   \mr{tr} \curll \boldsymbol{\chi}_{0,\minus k}^{\minus 1} \curlr $. Upon invoking Lemma \ref{lem:FrPr}, for the DO setup we have $\frac{1}{M}   \mr{tr} \curll \Delta \boldsymbol{\Phi}_{\tau} \boldsymbol{\chi}_{0,\minus k}^{\minus 1} \curlr = \frac{1}{M}   \mr{tr} \curll \Delta \boldsymbol{\Phi}_{\tau} \curlr t_1$, where $\Delta \boldsymbol{\Phi}_{\tau} $ is freely independent of $\boldsymbol{\chi}_{0}^{\minus 1}$. This is because $\boldsymbol{\Phi}_{\tau} $ and  $\boldsymbol{\chi}_{0}^{\minus 1}$ contain entries that are statistically independent of each other, and $\boldsymbol{\chi}_{0}^{\minus 1}$  is unitarily invariant \cite{Tao}, \cite{VerduRMT},\cite[Theorem 22.2.3]{Speicher}. Finally, by employing Lemma \ref{lem:Lemm7Debb}, i.e., \eqref{eq:Lemm7Debb}, the signal and noise part in \eqref{eq:IRzfSig13} becomes
\begin{IEEEeqnarray}{rCl} \label{eq:IRzfSig21}
&& \frac{1}{M}\bx{0,k}^{\mr{H}} \Delta \boldsymbol{\Phi}_{\tau} \boldsymbol{\chi}_{0}^{\minus 1} \bx{0,k}  -  \frac{\frac{1}{M}   \mr{tr} \curll \Delta \boldsymbol{\Phi}_{\tau} \curlr t_1(1 + q_1 t_1)}{t_1 + 1}  \overset{M \rightarrow \infty}{\longrightarrow} 0 \IEEEeqnarraynumspace\\
&& \frac{1}{M}\bx{0,k}^{\mr{H}} \Delta \boldsymbol{\Phi}_{\tau} \boldsymbol{\chi}_{0}^{\minus 1} {\bw{\mr{e},k}} -\frac{-q_2 \frac{1}{M}   \mr{tr} \curll \Delta \boldsymbol{\Phi}_{\tau}\curlr  t_1^2}{t_1 + 1}  \overset{M \rightarrow \infty}{\longrightarrow} 0,\IEEEeqnarraynumspace
\end{IEEEeqnarray}

Thus, the desired signal can be written as
\begin{IEEEeqnarray}{rCl}
\label{eq:IRzfSig31}
&&\temp{sig} = \squarel \frac{ { \sqrt{p_{k}q_0}  \xi } \frac{1}{M}   \mr{tr} \curll \Delta \boldsymbol{\Phi}_{\tau}\curlr t_1(1 + q_1 t_1)}{(q_0 + q_1)t_1 + 1}  \right.  \nonumber\\
&& \left. +  { \sqrt{p_{k}q_1}  \xi }   \frac{-q_2 \frac{1}{M}   \mr{tr} \curll \Delta \boldsymbol{\Phi}_{\tau}\curlr t_1^2}{t_1 + 1} \squarer e^{\jmath (\varphit{\tau}{k} - \varphit{0}{k})  }  \IEEEeqnarraynumspace\\
\label{eq:IRzfSig32}
&&=  \frac{\sqrt{q_0{p_{k}}   }\xi   m(-\alpha) \frac{1}{M}   \mr{tr} \curll \Delta \boldsymbol{\Phi}_{\tau}\curlr  }{ m(-\alpha) + 1}     e^{\jmath (\varphit{\tau}{k} - \varphit{0}{k}) },
\end{IEEEeqnarray}
where  Lemma \ref{lem:stiel} is used to reduce \eqref{eq:IRzfSig31} to \eqref{eq:IRzfSig32}, and this corresponds to the result in \eqref{eq:SysMod71} (i.e., Proposition \ref{prop:Irzfsig}).

For the DO setup,  $ \frac{1}{M}   \mr{tr} \curll \Delta \boldsymbol{\Phi}_{\tau}\curlr  = e^{-\frac{\tau \nsigma{\phi}{2}}{2}}$ \cite{Rajet14}, and for the CO setup,  $\frac{1}{M}   \mr{tr} \curll \Delta \boldsymbol{\Phi}_{\tau} \curlr=    e^{ \jmath(\phi_{\tau} - \phi_{0}{})  }$. When $2\leq M_{\mr{osc}} <\infty$,
\begin{IEEEeqnarray}{rCl}
\label{eq:IRzfSig41}
 \frac{1}{M}   \mr{tr} \curll \Delta \boldsymbol{\Phi}_{\tau}\curlr  = \frac{1}{M_{\mr{osc}}} \nnsum{l=1}{M_{\mr{osc}}} \jexp{\phit{\tau}{l} - \phit{0}{l}},
\end{IEEEeqnarray}
and with straightforward calculations, it can be shown that
\begin{IEEEeqnarray}{rCl}
\label{eq:IRzfSig51}
\mathbb{E}_\phi \absl \frac{1}{M}   \mr{tr} \curll \Delta \boldsymbol{\Phi}_{\tau}\curlr \absr ^2 = \frac{  1 - e^{-\tau \nsigma{\phi}{2} } }{M_{\mr{osc}}}  + e^{-\tau \nsigma{\phi}{2} }.
\end{IEEEeqnarray}

\subsection{Derivation of ${\lVert \boldsymbol{\temp{}}_{\mr{int}} \rVert}^2$ }
Define ${\boldsymbol{\bar{\chi}}}_{0} \triangleq {\boldsymbol{\chi}}_{0,\minus k}^{\minus 1}  \Hhat{0,\minus k}^{\mr{H}}   \bP{\minus k}  \Hhat{0,\minus k}  {\boldsymbol{\chi}}_{0}^{\minus 1}$, $ t_2 \triangleq \frac{1}{M}   \mr{tr} \curll {\boldsymbol{\chi}}_{0,\minus k}^{\minus 1}  \Hhat{0,\minus k}^{\mr{H}}   \bP{\minus k}  \Hhat{0,\minus k}  {\boldsymbol{\chi}}_{0}^{\minus 1} \curlr$,  and $\bx{0,k} \triangleq  \boldsymbol{\Phi}_{0}^{*}{\bh{k}}^{*}$. Then, the power of the interference signal $\boldsymbol{\temp{}}_{\mr{int}}$ in \eqref{eq:AchRate5} is evaluated as
\begin{IEEEeqnarray}{rCl}
\label{eq:IRzfInt11}
&& \boldsymbol{\temp{}}_{\mr{int}}^{\mr{H}} \boldsymbol{\temp{}}_{\mr{int}}  \nonumber\\
&&=   \frac{\xi^2}{M^2}   \bh{k}^{\mr{T}} \bTheta{\tau,k}{\boldsymbol{\chi}}_{0}^{\minus 1} \Hhat{0,\minus k}^{\mr{H}} \bP{\minus k}  \Hhat{0,\minus k} {\boldsymbol{\chi}}_{0}^{\minus 1} \bTheta{\tau,k}^{*}{\bh{k}}^{*} \\
\label{eq:IRzfInt12}
&&=  \frac{\xi^2}{M^2}   \bh{k}^{\mr{T}} \boldsymbol{\Phi}_{\tau} {\boldsymbol{\chi}}_{0,\minus k}^{\minus 1} \Hhat{0,\minus k}^{\mr{H}} \bP{\minus k}  \Hhat{0,\minus k}  {\boldsymbol{\chi}}_{0}^{\minus 1} \boldsymbol{\Phi}_{\tau}^{*}{\bh{k}}^{*}\nonumber\\
&& ~+  \frac{\xi^2}{M^2}   \bh{k}^{\mr{T}} \boldsymbol{\Phi}_{\tau} \parl  {\boldsymbol{\chi}}_{0}^{\minus1}   -    {\boldsymbol{\chi}}_{0,\minus k}^{\minus1} \parr \Hhat{0,\minus k}^{\mr{H}} \bP{\minus k}  \Hhat{0,\minus k}   \nonumber\\
&&~~ \cdot  {\boldsymbol{\chi}}_{0}^{\minus 1} \boldsymbol{\Phi}_{\tau}^{*}{\bh{k}}^{*} \\
\label{eq:IRzfInt13}
&&=  \frac{\xi^2}{M^2}   \bh{k}^{\mr{T}}  \boldsymbol{\Phi}_{\tau} {\boldsymbol{\chi}}_{0,\minus k}^{\minus 1} \Hhat{0,\minus k}^{\mr{H}} \bP{\minus k}  \Hhat{0,\minus k}  {\boldsymbol{\chi}}_{0}^{\minus 1} \boldsymbol{\Phi}_{\tau}^{*}{\bh{k}}^{*}\nonumber\\
&&~ -  \frac{\xi^2}{M^2}   \bh{k}^{\mr{T}} \boldsymbol{\Phi}_{\tau} {\boldsymbol{\chi}}_{0}^{\minus 1} \parl  {\boldsymbol{\chi}}_{0}   -    {\boldsymbol{\chi}}_{0,\minus k}\parr  {\boldsymbol{\chi}}_{0,\minus k}^{\minus 1} \Hhat{0,\minus k}^{\mr{H}} \bP{\minus k}  \Hhat{0,\minus k}  \nonumber\\
&&~ \cdot  {\boldsymbol{\chi}}_{0}^{\minus 1} \boldsymbol{\Phi}_{\tau}^{*}{\bh{k}}^{*} \\
\label{eq:IRzfInt15}
&& =  \frac{\xi^2}{M^2}  \bx{\tau,k}^{\mr{H}} \Delta \boldsymbol{\Phi}_{\tau}   {\boldsymbol{\bar{\chi}}}_{0} \Delta \boldsymbol{\Phi}_{\tau} ^{*} \bx{\tau,k}    -  \frac{\xi^2}{M^3}   \bx{\tau,k}^{\mr{H}} \Delta \boldsymbol{\Phi}_{\tau} {\boldsymbol{\chi}}_{0}^{\minus 1}  \nonumber\\
&&~~ \cdot \parl  q_0 \bx{\tau,k}\bx{\tau,k}^{\mr{H}}  +  q_1 \bwu{\tau,k}^{*}\bwu{\tau,k}^{\mr{T}}  +  q_2\bx{\tau,k} \bwu{\tau,k}^{\mr{T}} \right. \nonumber\\
&&~~~  \left.   +  q_2 \bwu{\tau,k}^{*} \bx{\tau,k}^{\mr{H}} \parr {\boldsymbol{\bar{\chi}}}_{0} \Delta \boldsymbol{\Phi}_{\tau}^{*}  \bx{\tau,k}.
\end{IEEEeqnarray}
The resolvent identity lemma (i.e., Lemma \ref{lem:ResIdn}) is applied in \eqref{eq:IRzfInt12} resulting in the expression in \eqref{eq:IRzfInt13}. The definitions  of $ {\boldsymbol{\chi}}_{0}$, and $ {\boldsymbol{\chi}}_{0,\minus k}$, and the channel estimate in \eqref{eq:SysMod3} are employed in \eqref{eq:IRzfInt13} to obtain \eqref{eq:IRzfInt15}. Finally, the quadratic forms in \eqref{eq:IRzfInt15} are simplified using Lemmas \ref{lem:Lemm7Debb} and \ref{lem:NewLemDebb} as
\begin{IEEEeqnarray}{rCl}
\label{eq:IRzfInt21}
&& \frac{\bx{\tau,k}^{\mr{H}}  \Delta \boldsymbol{\Phi}_{\tau}  {\boldsymbol{\bar{\chi}}}_{0} \Delta \boldsymbol{\Phi}_{\tau} ^{*} \bx{\tau,k} }{M^2}  -  t_2 + \frac{q_0 t_1 t_2 \absl \frac{ \mr{tr} \curll  \Delta\boldsymbol{\Phi}_{\tau} \curlr }{M} \absr^2}{1 + t_1}   \underset{\eqref{eq:NewLemmDebb01}}{ \overset{M \rightarrow \infty}{\longrightarrow} } 0 \IEEEeqnarraynumspace\\
\label{eq:IRzfInt22}
&&\frac{\bx{\tau,k}^{\mr{H}} \Delta \boldsymbol{\Phi}_{\tau}  {\boldsymbol{\chi}}_{0}^{\minus 1} \bx{\tau,k} }{M^2}  -\frac{ \frac{ \mr{tr} \curll  \Delta\boldsymbol{\Phi}_{\tau} \curlr }{M} t_1(1 + q_1 t_1)}{1 + t_1}   \underset{\eqref{eq:Lemm7Debb}}{ \overset{M \rightarrow \infty}{\longrightarrow}} 0 \IEEEeqnarraynumspace\\
\label{eq:IRzfInt23}
&& \frac{ \bx{\tau,k}^{\mr{H}} \Delta \boldsymbol{\Phi}_{\tau}  {\boldsymbol{\chi}}_{0}^{\minus 1} \bwu{\tau,k}^{*} }{M^2} -  \frac{ -q_2 t_1^2 }{1 + t_1} \underset{\eqref{eq:Lemm7Debb}}{ \overset{M \rightarrow \infty}{\longrightarrow}}  0 \IEEEeqnarraynumspace\\
\label{eq:IRzfInt24}
&& \frac{ \bx{\tau,k}^{\mr{H}} {\boldsymbol{\bar{\chi}}}_{0} \Delta \boldsymbol{\Phi}_{\tau} ^{*}\bx{\tau,k} }{M^2} - \frac{ \frac{ \mr{tr} \curll  \Delta\boldsymbol{\Phi}_{\tau}^{*} \curlr }{M}t_2 (1 + q_1 t_1) }{1 + t_1}   \underset{\eqref{eq:NewLemmDebb02}}{ \overset{M \rightarrow \infty}{\longrightarrow} } 0 \\
\label{eq:IRzfInt25}
&& \frac{ \bwu{\tau,k}^{\mr{T}}{\boldsymbol{\bar{\chi}}}_{0} \Delta \boldsymbol{\Phi}_{\tau} ^{*} \bx{\tau,k}}{M^2} - \frac{-q_2 t_1 t_2 \frac{ \mr{tr} \curll  \Delta\boldsymbol{\Phi}_{\tau}^{*} \curlr }{M}}{1 + t_1}  \underset{\eqref{eq:NewLemmDebb03}}{ \overset{M \rightarrow \infty}{\longrightarrow} } 0.
\end{IEEEeqnarray}

In order to evaluate $t_2$, we invoke Lemmas \ref{lem:R1PLem} and \ref{lem:R1PTrLem} to obtain
\begin{IEEEeqnarray}{rCl}
\label{eq:IRzfInt31}
\frac{\mr{tr} \curll  \bP{\minus k}  \Hhat{0,\minus k} \parl  {\boldsymbol{\chi}}_{0,\minus k}^{-2} -  {\boldsymbol{\chi}}_{0}^{-2} \parr \Hhat{0,\minus k}^{\mr{H}}       \curlr  }{M }   \overset{M \rightarrow \infty}{\longrightarrow} 0.
\end{IEEEeqnarray}
Thus, for $M \rightarrow \infty$, we simplify $ t_2 \approx $$\frac{1}{M}   \mr{tr}  \curll   \bP{\minus k}    \Hhat{0,\minus k}  {\boldsymbol{\chi}}_{0}^{\minus 2}  \Hhat{0,\minus k}^{\mr{H}}  \curlr$ as
\begin{IEEEeqnarray}{rCl}
\label{eq:IRzfInt40}
{t_2} &=& \nnsum{\stackrel{k_1=1,}{k_1 \neq k}}{K} \frac{1}{M}  p_{k_1}  \bhhat{0,k_1}^{\mr{T}}  {\boldsymbol{\chi}}_{0 }^{-2}  \bhhat{0,k_1}^{*}     \\
\label{eq:IRzfInt41}
&=& \nnsum{\stackrel{k_1=1,}{k_1 \neq k}}{K} \frac{1}{M}  p_{k_1}  \bhhat{0,k_1}^{\mr{T}} \parl  {\boldsymbol{\chi}}_{0,\minus k_1} + \bhhat{0,k_1}^{*}\bhhat{0,k_1}^{\mr{T}}      \parr^{-2}  \bhhat{0,k_1}^{*}     \\
\label{eq:IRzfInt42}
&=&  \nnsum{\stackrel{k_1=1,}{k_1 \neq k}}{K}   p_{k_1}   \frac{\frac{1}{M}\bhhat{0,k_1}^{\mr{T}}   {\boldsymbol{\chi}}_{0,\minus k_1}^{-2}  \bhhat{0,k_1}^{*} }{( 1 + \frac{1}{M} \bhhat{0,k_1}^{\mr{T}}  {\boldsymbol{\chi}}_{0,\minus k_1}^{\minus 1} \bhhat{0,k_1}^{*} )^2}    \\
\label{eq:IRzfInt44}
&=&  \nnsum{\stackrel{k_1=1,}{k_1 \neq k}}{K}   p_{k_1}   \frac{ \nsigma{\hhat}{2} \frac{1}{M} \mr{tr}  \curll {\boldsymbol{\chi}}_{0 }^{-2} \curlr  }{( 1 + \nsigma{\hhat}{2} \frac{1}{M}  \mr{tr}  \curll {\boldsymbol{\chi}}_{0 }^{\minus 1} \curlr   )^2}    \\
\label{eq:IRzfInt45}
&=& \nnsum{\stackrel{k_1=1,}{k_1 \neq k}}{K}   p_{k_1}   \frac{  m'(-\alpha)  }{( 1 +   m(-\alpha)    )^2}
\end{IEEEeqnarray}
In order to obtain \eqref{eq:IRzfInt42}, Lemma \ref{lem:MatInvLem} is applied twice to \eqref{eq:IRzfInt41}. Further, Lemmas \ref{lem:TrLem}, \ref{lem:R1PLem}, and \ref{lem:R1PTrLem}  are employed in \eqref{eq:IRzfInt42} to obtain \eqref{eq:IRzfInt44}, and Lemma \ref{lem:stiel} is applied to  \eqref{eq:IRzfInt44} yielding  \eqref{eq:IRzfInt45}. Hence, the interference power can be reduced as
\begin{IEEEeqnarray}{rCl} \label{eq:IRzfInt51}
&& \boldsymbol{\temp{}}_{\mr{int}}^{\mr{H}} \boldsymbol{\temp{}}_{\mr{int}} \nonumber\\ &&=  \frac{\xi^2}{M} \parl t_2 -  \frac{q_0 t_1 t_2 }{1 + t_1} \absl \frac{ \mr{tr} \curll   \Delta\boldsymbol{\Phi}_{\tau}  \curlr }{M}\absr^2  \parr  -  \frac{\xi^2}{M}  \nonumber\\
&& \cdot \parl  \absl \frac{ \mr{tr} \curll  \Delta\boldsymbol{\Phi}_{\tau} \curlr }{M} \absr^2  \frac{t_1 t_2 (1 + q_1 t_1)^2 }{(1 + t_1)^2}  +  \absl \frac{ \mr{tr} \curll  \Delta\boldsymbol{\Phi}_{\tau} \curlr }{M}\absr^2     \frac{q_2^2 t_1^3 t_2 }{(1 + t_1)^2}   \right. \nonumber\\
&& \left. - 2 \absl \frac{ \mr{tr} \curll  \Delta\boldsymbol{\Phi}_{\tau} \curlr }{M} \absr^2  \frac{q_2 t_1^2 t_2(1 + q_1 t_1)}{(1 + t_1)^2}     \parr \\
\label{eq:IRzfInt52}
&&=  \frac{\xi^2}{M} \parl t_2 -  \frac{q_0 t_1 t_2  \absl \frac{ \mr{tr} \curll  \Delta\boldsymbol{\Phi}_{\tau} \curlr }{M}\absr^2}{1 + t_1}   -  \frac{ q_0 t_1 t_2 \absl \frac{ \mr{tr} \curll  \Delta\boldsymbol{\Phi}_{\tau} \curlr }{M} \absr^2}{(1 + t_1)^2}   \parr,
\end{IEEEeqnarray}
where $t_2$ is given in \eqref{eq:IRzfInt45}, and $t_1 = m(-\alpha)$ for $M\rightarrow \infty$. 

\bibliographystyle{IEEEbib}

\begin{thebibliography}{99}
%
\bibitem{Rusek13} F. Rusek, D. Persson, K. L. Buon, E. G. Larsson, T. L. Marzetta, O. Edfors, and F. Tufvesson, ``Scaling up MIMO: opportunities and challenges with very large arrays," \textit{IEEE Signal Process. Mag.}, vol.  30, no.  1, pp.  40{-}60, Jan. 2013.

\bibitem{Marzetta10} T. L. Marzetta, ``Noncooperative cellular wireless with unlimited numbers of base station antennas," \emph{IEEE Trans. Wireless Commun.}, vol.  9, no.  11, pp.  3590{-}3600, Nov. 2010.

\bibitem{Larsson13} H. Q. Ngo, E. G. Larsson, and T. L.  Marzetta, ``Energy and spectral efficiency of very large multiuser MIMO systems," \textit{IEEE Trans. Commun.}, vol.  61, no.  4, pp.  1436{-}1449, Apr. 2013.

\bibitem{Caire10} G. Caire, N. Jindal, M. Kobayashi, and N. Ravindran, ``Multiuser MIMO achievable rates with downlink training and channel state feedback,"
\emph{IEEE Trans. Inf. Theory}, vol.  56, no.  6, pp.  2845{-}2866, Jun. 2010.


\bibitem{Edfors14} X. Gao, O. Edfors, F. Rusek, and F. Tufvesson, ``Massive MIMO in real propagation environments," \textit{IEEE Trans. Wireless Commun.}, Mar. 2014, submitted. [Online]. Available: http://arxiv.org/abs/1403.3376.

\bibitem{Caire03} G. Caire and S. Shamai, ``On the Achievable throughput of a multiantenna Gaussian broadcast channel," \emph{IEEE Trans. Inf. Theory}, vol.  49, no.  7, pp.  1691{-}1706, Jul. 2003.


\bibitem{Goldsmith06} T. Yoo and A. Goldsmith, ``On the optimality of multiantenna broadcast scheduling using zero-forcing beamforming," \emph{IEEE J. Sel. Areas
Commun.}, vol. 24, no. 3, pp. 528{-}541, Mar. 2006.


\bibitem{Hochwald02} B. Hochwald and S. Vishwanath, ``Space-time multiple access: linear growth in the sum rate," in \emph{Proc. Allerton Conf. Commun., Control,
Computing,} Monticello, Illinois, Oct. 2002, pp.  387{-}396.

\bibitem{Peel05} C. B. Peel, B. M. Hochwald, and A. L. Swindlehurst, ``A vector-perturbation technique for near-capacity multiantenna multiuser communication –- part I: channel inversion and regularization," \emph{IEEE Trans. Commun.}, vol.  53, no.  1, pp.  195{-}202, Jan. 2005.

\bibitem{Wiesel06} A. Wiesel, Y. C. Eldar, and S. Shamai, ``Linear precoding via conic optimization for fixed MIMO receivers," \emph{IEEE Trans. Signal Process.}, vol.  54, no.  1, pp.  161{-}176, Jan. 2006.

\bibitem{Hanly12} R. Zakhour and S. V. Hanly, ``Base station cooperation on the downlink: large system analysis," \emph{IEEE Trans. Inf. Theory}, vol. 58, No. 4, pp.  2079{-}2106, Apr. 2012.

\bibitem{Hoydis13} J. Hoydis, S. ten Brink, and M. Debbah, ``Massive MIMO in the UL/DL of cellular networks: how many antennas do we need?," \emph{IEEE J. Sel. Areas
Commun.,} vol.  31, no.  2, pp.  160{-}171, Feb. 2013.

\bibitem{Marzetta13_1}H. Yang and T. L. Marzetta, ``Performance of conjugate and zero-forcing beamforming in large-scale antenna systems," \emph{IEEE J. Sel. Areas Commun.}, vol.  31, no.  2, pp.  172{-}179, Feb. 2013.

\bibitem{Debbah12} S. Wagner, R. Couillet, M. Debbah, and D. T. M. Slock, ``Large system analysis of linear precoding in correlated MISO broadcast channels under limited feedback," \emph{IEEE Trans. Inf. Theory}, vol.  58, no.  7, pp.  4509{-}4537, Jul. 2012.

\bibitem{Emil13_1} A. M\"uller, A. Kammoun, E. Bjornson, and M. Debbah, ``Linear precoding based on truncated polynomial expansion {—} part I: large-scale single-cell systems," \emph{IEEE J. Sel. Topics Signal Process.}, Sep. 2013, submitted. [Online]. Available: arXiv:1310.1806.

\bibitem{Emil13} E. Bjornson, J. Hoydis, M. Kountouris, and M. Debbah, ``Massive MIMO systems with non-ideal hardware: energy efficiency, estimation, and capacity limits," \textit{IEEE Trans. Inf. Theory}, July 2013, submitted. [Online]. Available: http://arxiv.org/abs/1307.2584.

\bibitem{Rajet14} R. Krishnan, M. Reza Khanzadi, N. Krishnan, A. Graell i Amat, T. Eriksson, N. Mazzali, and G. Colavolpe, ``On the impact of oscillator phase noise on the uplink performance in a massive MIMO-OFDM system," \emph{IEEE Signal Process. Lett.}, 2014, submitted. [Online]. Available: http://arxiv.org/abs/1405.0669.

\bibitem{Colavolpe05} G. Colavolpe, A. Barbieri, and G. Caire, ``Algorithms for iterative decoding in the presence of strong phase noise," \emph{IEEE J. Sel. Areas Commun.}, vol.  23, no.  9, pp.  1748{-}1757, Sept. 2005.

\bibitem{Rajet13} R. Krishnan, M. R. Khanzadi, T. Eriksson, and T. Svensson, ``Soft metrics and their performance analysis for optimal data detection in the presence of strong oscillator phase noise," \emph{IEEE Trans. Commun.}, vol.  61, no.  6, pp.  2385{-}2395, Jun. 2013.



\bibitem{Larsson12} A. Pitarokoilis, S. K. Mohammed, and E. G. Larsson, ``Uplink performance of time-reversal MRC in massive MIMO system subject to phase noise," \emph{IEEE Trans. Wireless Commun.}, accepted for publication. [Online]. Available: http://arxiv.org/abs/1306.4495


\bibitem{Heath13} K. T. Truong and R. W. Heath Jr., ``Effects of channel aging in massive MIMO systems," \textit{IEEE J. Commun. Netw. (Special Issue on Massive MIMO)}, vol.  15, no.  4, pp.  338{-}351, Aug. 2013.

\bibitem{Beamforming10} T. Höhne and V. Ranki, ``Phase noise in beamforming," \emph{IEEE Trans. Wireless Commun.}, vol.  9, no.  12, pp.  3682{-}3689, Dec. 2010.

\bibitem{Demir00} A. Demir, A. Mehrotra, and J. Roychowdhury, ``Phase noise in oscillators: a unifying theory and numerical methods for characterization,"  \emph{IEEE Trans. Circ. Sys. I: Fundamental Theory and Applications}, vol.  47, no.  5, pp.  655{-}674, May 2000.




\bibitem{Heath11} B. Nosrat-Makouei, J. G. Andrews, and R. W. Heath, Jr., ``MIMO interference alignment over correlated channels with imperfect CSIT," \emph{IEEE Trans. Signal Process.}, vol.  59, no.  6, pp.  2783{-}2794, Jun. 2011.

\bibitem{Khanzadi14} M. Reza Khanzadi, G. Durisi, and T. Eriksson, ``Capacity of multiple-antenna phase-noise channels with common/separate oscillators," \textit{IEEE Trans. Commun.}, Sep. 2014, submitted. [Online]. Available: http://arxiv.org/abs/1409.0561.

\bibitem{Michailis14} Q. Zhang, S. Jin, K.-K. Wong, H. Zhu, and M. Matthaiou, ``Power scaling of uplink massive MIMO systems with arbitrary-rank channel means," \emph{IEEE J. Sel. Topics Signal Process.}, vol.  8, no.  5, pp.  966{-}981, Oct. 2014.

\bibitem{Belzer01} P. Hou, B. Belzer, and T. R. Fischer,  ``Shaping gain of the partially coherent additive white Gaussian noise channel," \emph{IEEE Commun. Lett.}, vol. 6, pp. 175{-}177, May 2002.

\bibitem{Noels03} N. Noels, V. Lottici, A. Dejonghe, H. Steendam, M. Moeneclaey, M. Luise, and L. Vandendorpe, ``A theoretical framework for soft information based synchronization in iterative (turbo) receivers," \emph{EURASIP J. Wireless Commun. Netw.}, vol.  2005, pp.  117{-}125, Apr. 2005.


\bibitem{Behrooz13} B. Makki and T. Eriksson, ``Feedback subsampling in temporally-correlated slowly-fading channels using quantized CSI," \emph{IEEE Trans. Commun.}, vol.  61, no.  6, pp.  2282{-}2294, Jun. 2013.

\bibitem{Lapidoth02}  A. Lapidoth, ``On phase noise channels at high SNR," \emph{in Proc. IEEE Inf. Theory Workshop (ITW)}, Bangalore, India, Oct. 2002, pp. 1{–}4.

\bibitem{Rajet12} R. Krishnan, M. R. Khanzadi, L. Svensson, T. Eriksson, and T. Svensson, ``Variational Bayesian framework for receiver design in the presence of phase noise in MIMO systems," \emph{IEEE Wireless Commun. Netw. Conf. (WCNC)}, pp.  347{-}352, Apr. 2012.

\bibitem{VerduRMT} A. M. Tulino and S. Verdu, \emph{Random matrix theory and wireless communications}. Delft, Netherlands: Now Publishers Inc., 2004.

\bibitem{Silverstein95} J. W. Silverstein and Z. D. Bai, ``On the empirical distribution of eigenvalues of a class of large dimensional random matrices," \emph{J.  Multivariate Analysis}, vol.  54, no.  2, pp.  175{-}192, 1995.


\bibitem{Tao} T. Tao, \emph{Topics in random matrix theory}, Graduate Studies in Mathematics, vol.  132, AMS, 2012.

\bibitem{Speicher} G. Akemann, J. Baik, and P. Di Francesco, Eds., \emph{The Oxford handbook of random matrix theory.} Oxford University Press, 2011.

\bibitem{Lte} S. Sesia, I. Toufik, and M. Baker, Eds., \emph{ LTE: The UMTS long term evolution.} John Wiley and Sons, 2009.

\bibitem{AnalogDevices} Analog Devices, ``RF agile transceiver," AD9364 datasheet, Feb. 2014 [Revised Jul. 2014].

\bibitem{Rajet14_1} M. Reza Khanzadi, R. Krishnan, D. Kuylenstierna, and T. Eriksson, ``Oscillator phase noise and small-scale channel fading in higher frequency bands," \textit{Accepted for Publication in Proc. IEEE Global Telecommun. Conf. (GLOBECOM),}  Austin, TX, USA, Dec. 2014. [Online]. Available: http://arxiv.org/abs/1407.4515.

%
%


\end{thebibliography}

\end{document}